\begin{document}

\newcommand{\drie}{\vartriangleleft}
\newcommand{\uS}{\underline{\Sigma}}
\newcommand{\uO}{\underline{\Omega}}
\newcommand{\uA}{\underline{A}}
\newcommand{\uB}{\underline{B}}
\newcommand{\topos}{[\mathcal{C}^{\text{op}},\mathbf{Set}]}
\newcommand{\idl}{\mathcal{I}(\mathcal{C})}
\newcommand{\fil}{\mathcal{F}(\mathcal{C})}
\newcommand{\Cd}{\mathcal{C}_{\downarrow}}
\newcommand{\Cu}{\mathcal{C}_{\uparrow}}
\newcommand{\SdA}{\Sigma_{A}^{\downarrow}}
\newcommand{\SdB}{\Sigma_{B}^{\downarrow}}
\newcommand{\SuA}{\Sigma_{A}^{\uparrow}}
\newcommand{\SuB}{\Sigma_{B}^{\uparrow}}

\newtheorem{dork}{Definition}[section]
\newtheorem{tut}[dork]{Theorem}
\newtheorem{poe}[dork]{Proposition}
\newtheorem{lem}[dork]{Lemma}
\newtheorem{cor}[dork]{Corollary}
\newtheorem{rem}[dork]{Remark}

\title{Topos Models for Physics and Topos Theory}
\maketitle
\begin{center}
 \author{Sander Wolters\footnote{Radboud Universiteit Nijmegen, Institute for Mathematics, Astrophysics, and Particle Physics, the Netherlands. s.wolters@math.ru.nl. Supported by
N.W.O. through project 613.000.811.}}
\end{center}
\begin{abstract}
What is the role of topos theory in the topos models for quantum theory as used by Isham, Butterfield, D\"oring, Heunen, Landsman, Spitters and others? In other words, what is the interplay between physical motivation for the models and the mathematical framework used in these models? Concretely, we show that the presheaf topos model of Butterfield, Isham and D\"oring resembles classical physics when viewed from the internal language of the presheaf topos, similar to the copresheaf topos model of Heunen, Landsman and Spitters. Both the presheaf and copresheaf models provide a `quantum logic' in the form of a complete Heyting algebra. Although these algebras are natural from a topos theoretic stance, we seek a physical interpretation for the logical operations. Finally, we investigate dynamics. In particular we describe how an automorphism on the operator algebra induces a homeomorphism (or isomorphism of locales) on the associated state spaces of the topos models, and how elementary propositions and truth values transform under the action of this homeomorphism. Also with dynamics the focus is on the internal perspective of the topos.
\end{abstract}

\section{Introduction and Motivation}

In a series of four papers~\cite{butish1, butish2, buhais,butish3}, Jeremy Butterfield and Chris Isham demonstrated that in studying foundations of quantum physics, in particular the Kochen-Specker Theorem, structures from topos theory show up in a natural way. A second series of papers~\cite{di1,di2,di3,di4} of Chris Isham, now working together with Andreas D\"oring, show greater ambition in applying topos theory to physics. A central idea in these papers is that any theory of physics, in its mathematical formulation, should share certain structures~\cite{di1}. These structures are assumed as they assist in giving some, hopefully not-naive, realist account of the theory. Aside from putting restrictions of the shape of the mathematical framework of physical theories, freedom is added in that we may use other topoi than the category of sets. We will refer to this idea as neorealism. The motivating example is the presheaf model of Butterfield and Isham, further developed by Isham and D\"oring.

This notion of neorealism raises several question. For example, what makes a good topos model? A key point is that the topos formulation of a physical theory resembles classical physics more than e.g. orthodox quantum physics.  But in what way is a topos model, in all its abstraction, closer to classical physics? One possibility to make the claim that the model `resembles classical physics' mathematically precise is to use the internal language of the topos. Indeed, it is exactly in this way that the topos model proposed by Heunen, Landsman and Spitters in~\cite{hls} resembles classical physics. This topos model uses a topos of copresheaves and is closely related to the presheaf model by Butterfield, D\"oring and Isham~\cite{wollie}. We will refer to the HLS model of copresheaves as the \textbf{covariant model} and to the BDI model of presheaves as the \textbf{contravariant model}.

Having a strong dialectic between the mathematical framework and physical motivation seems highly desirable, whether we are considering neorealism, we want to gain insight in the foundations of physics,  we are seeking new topos models or seeking connections between the current topos models. 

This concludes the motivation for the research sketched below. The rest of the paper is divided into four parts. Section~\ref{sec: back} covers background information, needed for the later sections. Some constructions from the topos models are covered. In addition there is a brief discussion of topoi as universes of mathematical discourse and locale theory. In Section~\ref{sec: contra} we look at the contravariant model from the internal picture of the presheaf topos at hand. Of particular interest is Subsection~\ref{subsec: physquant}, where we discuss relations between elementary propositions and daseinised self-adjoint operators. In Section~\ref{sec: differences} we look at differences between the contravariant and covariant models on the one hand, and the classical topos of sets. Most of this section is concerned with the physical interpretation of the Heyting algebra structure used in the logic of the topos models. The discussion thus far only considered the kinematics of quantum theory. In Section~\ref{sec: homs}, following ideas of~\cite{nuiten,doering2}, we discuss dynamics. Again, the emphasis is on the internal perspective. As for prerequisites: we assume that the reader is familiar with category theory and von Neumann algebras. Familiarity with topos theory, in particular the internal language of a topos, is highly desirable. 

\section{Background Material} \label{sec: back}

This section is divided into two parts. In the first part we define some key constructions of the contravariant and covariant topos models. In the second subsection we discuss some relevant aspects of topos theory. In particular the internal language, presheaf semantics and locale theory are discussed. The discussion is concise and concentrates on fixing notation and providing references.

\subsection{Background on Topos Models for Quantum Physics} \label{subsec: review}

The discussion below is concise. For more information on the covariant model see~\cite{hls,chls}. For the contravariant model and daseinisation, background information can be found in~\cite{di,di1,di2,di3,di4}. 

The starting point for both the covariant and contravariant model an operator algebra $A$ associated to a physical system. We assume that $A$ is a von Neumann algebra. Typically, the covariant model is applied to a larger class of operator algebras, the unital C*-algebras. However, in this paper we restrict ourselves to von Neumann algebras. An important consequence of this choice is that we can use the daseinisation techniques of the contravariant model also in the covariant case. The covariant model~\cite{hls} also has a C*-algebraic version of daseinisation, but it is unclear to the author how the truth values obtained from this daseinisation map can be interpreted physically.

From the von Neumann algebra $A$, we construct a partially ordered set $\mathcal{C}(A)\equiv\mathcal{C}$. The elements of $\mathcal{C}$ are abelian von Neumann subalgebras of $A$, which we denote by $D,C\in\mathcal{C}$. The partial order is given by inclusion. The elements of $\mathcal{C}$ are called \textbf{contexts} and we think of them as measurement contexts. The idea that $C\in\mathcal{C}$ represents a measurement context is usually avoided in the contravariant model. Most likely, the reason is that people working on this model seek realist interpretations of quantum theory, and want to avoid operationalist notions. 
 
Consider the poset $\mathcal{C}$ as a category. The contravariant approach uses the topos $[\mathcal{C}^{op},\mathbf{Set}]$, the category of presheaves on $\mathcal{C}$. Objects are the contravariant functors $\mathcal{C}^{op}\to\mathbf{Set}$, and arrows are natural transformations between them. Of particular interest is the \textbf{spectral presheaf}, the contravariant functor
\begin{equation} \label{equ: specpresh}
\uS:\mathcal{C}^{\text{op}}\to\mathbf{Set},\ \ \uS(C)=\Sigma_{C},
\end{equation}
where $\Sigma_{C}$ is the Gelfand spectrum of the abelian von Neumann algebra $C\in\mathcal{C}$. Recall that the elements $\lambda\in\Sigma_{C}$ can be identified with multiplicative linear functionals $\lambda:C\to\mathbb{C}$. The operator algebra $C$ is isomorphic to the C*-algebra of continuous complex-valued functions on the compact Hausdorff space $\Sigma_{C}$. If $D\subseteq C$ in $\mathcal{C}$, then the corresponding arrow in the category $\mathcal{C}$ is mapped by $\uS$ to the continuous map $\rho_{CD}:\Sigma_{C}\to\Sigma_{D}$, corresponding by Gelfand duality to the embedding $D\hookrightarrow C$. Note that if we see $\lambda\in\Sigma_{C}$ as a map $C\to\mathbb{C}$, then $\rho_{CD}(\lambda)=\lambda|_{D}$, the restriction of the functional $\lambda$ to $D$.

The covariant approach uses the topos $[\mathcal{C},\mathbf{Set}]$ of covariant functors $\mathcal{C}\to\mathbf{Set}$ and their natural transformations. The key object of this model is the covariant functor
\begin{equation} \label{equ: bohr}
\underline{A}:\mathcal{C}\to\mathbf{Set},\ \ \underline{A}(C)=C.
\end{equation}
If $D\subseteq C$, then the corresponding arrow in $\mathcal{C}$ is mapped by $\underline{A}$ to the inclusion $D\hookrightarrow C$. The object $\underline{A}$ is interesting because, from the internal perspective of the topos $[\mathcal{C},\mathbf{Set}]$ it is a \emph{commutative} unital C*-algebra. There is a version of Gelfand duality which is valid in any (Grothendieck) topos~\cite{banmul3,coq}. Therefore there exists a Gelfand spectrum $\uS_{\underline{A}}$ in $[\mathcal{C},\mathbf{Set}]$ such that $\underline{A}$ is, up to isomorphism of C*-algebras, the algebra of continuous complex-valued functions on $\underline{A}$. However, $\uS_{\underline{A}}$ is not a compact Hausdorff space, but a compact completely regular locale.\\

Next, we give the relevant definitions of daseinisation. For the contravariant model, \textbf{outer daseinisation of projections} is a central construction. Let $p\in\text{Proj}(A)$ be any projection operator of $A$, and $C\in\mathcal{C}$ a context. Then the outer daseinisation of $p$ in $C$ is
\begin{equation}
\delta^{o}(p)_{C}:=\bigwedge\{q\in\text{Proj}(C)\mid q\geq p\}.
\end{equation}
Here we used the fact that $Proj(C)$ is a complete lattice, as this guarantees that the greatest lower bound $\bigwedge$ exists in $C$. The operator $\delta^{o}(p)_{C}$, which is a projection operator in $C$, is the best approximation of $p$ using larger projections. Note that $p\leq\delta^{o}(p)_{C}$ for each $C\in\mathcal{C}$, and $p=\delta^{o}(p)_{C}$ iff $p\in C$. Outer daseinisation is important for the contravariant model as it is used in defining the elementary propositions about the quantum system. Let $a\in A_{sa}$ and $\Delta\subseteq\mathbb{R}$. In orthodox quantum logic the proposition $[a\in\Delta]$ is represented by the spectral projection $\chi_{\Delta}(a)$ of $a$. If $C\in\mathcal{C}$, then $\delta^{o}(\chi_{\Delta}(a))_{C}$ is a projection operator in $C$. By Gelfand duality, a projection operator in $C$ corresponds to a clopen subset $S_{C}$ of $\Sigma_{C}$. In this way, the proposition $[a\in\Delta]$ gives, for each $C\in\mathcal{C}$, a closed open subset of $\Sigma_{C}$. If $D\subseteq C$, then by definition $\delta^{o}(p)_{D}\geq\delta^{o}(p)_{C}$, or, equivalently, $\rho_{CD}(S_{C})\subseteq S_{D}$. This implies that the sets $S_{C}$ combine to a single subobject of the spectral presheaf $\uS$. We will denote this subobject as $\underline{[a\in\Delta]}$.

If $\underline{S}$ is a subobject of the spectral presheaf $\uS$, such that for each $C\in\mathcal{C}$, the set $\underline{S}(C)$ is clopen in $\Sigma_{C}$, then $\underline{S}$ will be called a \textbf{clopen subobject} of $\uS$. In the contravariant model, the Heyting algebra $\mathcal{O}_{cl}\uS$, of clopen subobjects plays an important role. For a discussion of the Heyting algebra structure, see e.g.~\cite[Section 16]{di}. The Heyting algebra structure of $\mathcal{O}_{cl}\uS$ is closely related to the natural Heyting algebra structure of the power object $\mathcal{P}\uS$. From the point of topos theory this seems like a natural choice, but do the operations of the Heyting algebra also make sense physically?

The same question can be asked for the covariant model. In this model elementary propositions are represented by certain opens of the locale $\uS_{\uA}$. By the external description of $\uS_{\uA}$ given in Subsection~\ref{subsub: physquantco}, an internal open of $\uS_{\uA}$ corresponds to giving for each $C\in\mathcal{C}$ an open subset $S_{C}$ of $\Sigma_{C}$, such that ff $D\subseteq C$, then $\rho_{CD}^{-1}(S_{D})\subseteq S_{C}$. Using \textbf{inner daseinisation of projections} we can find such opens. Let, as before, $p\in\text{Proj}(A)$ be a projection operator, and $C\in\mathcal{C}$. Define
\begin{equation}
\delta^{i}(p)_{C}:=\bigvee\{q\in\text{Proj}(C)\mid q\leq p\}.
\end{equation}
If $D\subseteq C$, then $\delta^{i}(p)_{D}\leq\delta^{i}(p)_{C}$, or, equivalently, $\rho_{CD}^{-1}(S_{D})\subseteq S_{C}$. Taking the inner daseinisation of the spectral projection $\chi_{\Delta}(a)$ relative to each context, we can define covariant elementary propositions $\underline{[a\in\Delta]}$ as opens of $\uS_{\underline{A}}$.  However, this is not the way that elementary propositions are usually defined in the covariant model. For the actual elementary propositions, we need daseinisation of self-adjoint operators. 

As a first step, we consider the spectral order on $A_{sa}$, the set of self-adjoint elements of $A$. The spectral order was first considered in~\cite{ols}. Recall that in $A_{sa}$, $a\leq b$, iff there exists an $c\in A$, such that $b-a=c^{\ast}c$.  Also recall that for a von Neumann algebra $A$, to $a\in A_{sa}$ we can associate a family of projections $(e^{a}_{x})_{x\in\mathbb{R}}$, called the spectral resolution of $a$ (see e.g.~\cite{kari}). Define the \textbf{spectral order} $\leq_{s}$, on $A_{sa}$, by
\begin{equation}
a\leq_{s}b\ \ \ \text{iff}\ \ \ \forall x\in\mathbb{R}\ \ e^{a}_{x}\geq e^{b}_{x}.
\end{equation}
If $a\leq_{s}b$, then $a\leq b$, but the converse need not hold, unless $a$ and $b$ are projections, or $a$ and $b$ commute. Using this spectral order, it becomes easy to define the outer and inner daseinisation of self-adjoint operators.
\begin{equation}
\delta^{o}(a)_{C}:=\bigwedge\{b\in C_{sa}\mid b\geq_{s}a\},
\end{equation}
\begin{equation}
\delta^{i}(a)_{C}:=\bigvee\{b\in C_{sa}\mid b\leq_{s}a\}.
\end{equation}
Here we used the fact that $A_{sa}$ is a boundedly complete lattice with respect to the spectral order. Note that we did not need to include the subscript $s$ in the meet or join, as the spectral order coincides with the usual order in the commutative algebra $C$.  

We can view inner and outer daseinisation as (categorical) adjunctions. If we see both $C_{sa}$ and $A_{sa}$ as posets with respect to the spectral order, then the inner and outer daseinisation form right and left adjoints to the inclusion map $i_{C}:C_{sa}\hookrightarrow A_{sa}$. Let $a\in A_{sa}$ and $b\in C_{sa}$. Assume that $b\leq_{s}\delta^{i}(a)_{C}$. As $\delta^{i}(a)_{C}\leq_{s} a$, we conclude that $i_{C}(b)\leq_{s} a$. Conversely, assume that $b\leq_{s} a$. As $\delta^{i}(a)_{C}$ is by definition the join of $b\in C_{sa}$ satisfying $b\leq_{s} a$ we conclude $b\leq_{s}\delta^{i}(a)_{C}$. By an analogous reasoning $a\leq_{s} i(b)$ iff $\delta^{o}(a)_{C}\leq_{s} b$. We conclude that
\begin{equation*}
\delta^{o}(-)_{C}\dashv i_{C}\dashv \delta^{i}(-)_{C}: C_{sa}\to A_{sa}.
\end{equation*}

If $a$ happens to be a projection operator, then the daseinisation as a self-adjoint operator coincides with the daseinisation of $a$ as a projection operator, both in the outer and inner case. If $D\subseteq C$, then
\begin{equation*}
\delta^{i}(a)_{D}\leq_{s}\delta^{i}(a)_{C}\leq_{s}a\leq_{s}\delta^{o}(a)_{C}\leq_{s}\delta^{o}(a)_{D}.
\end{equation*}

For the moment, we reviewed enough constructions from the two topos models. We will explain more as we need them.

\subsection{Background on Topos Theory}

Below, we only treat those aspects of topos theory that are needed for later sections. Readers who know the basics of topos theory should skip this material. Readers having no background in topos theory may find the material below to be hard and sketchy. If this is the case, the reader may want to jump to the next section, or consult one of various excellent texts on topos theory for more information. The standard reference is the book \textit{Sheaves in Geometry and Logic} by Mac Lane and Moerdijk~\cite{mm}.  An easier, but less in-depth introduction is~\cite{gol}. The books~\cite{bor, bell} provide a useful resource about the internal language of a topos. Of course, everything and more about topos theory can be found in the massive~\cite{jh1}. The texts~\cite{jh4,pipu} provide more information about locales.

\subsubsection{Topoi as Generalised Spaces} \label{subsub: spaces}

There are various ways of looking at topoi, and it is the interplay between these different points of view that makes topos theory interesting. For our goals, thinking of a topos as a generalised universe of sets, or a universe of mathematical discourse, is the most important perspective. In this subsection however, we think of topoi as generalisations of topological spaces. This helps in introducing geometric morphisms and locales, two important notions in later sections.

If $X$ is a topological space, then we can associate to it the topos $Sh(X)$, of sheaves on that topological space~\cite[Chapter II]{mm}. From the topos $Sh(X)$ we can recover the topology $\mathcal{O}X$ of $X$ by considering the subobject classifier $\Omega$. If the space $X$ is Hausdorff (and hence sober),  we can subsequently retrieve $X$ as the set of points of $\mathcal{O}X$. 

Given $f:X\to Y$, a continuous map of topological spaces, we can associate to $f$ a geometric morphism $F:Sh(X)\to Sh(Y)$. For topoi $\mathcal{E}$, and $\mathcal{F}$, a \textbf{geometric morphism} $F:\mathcal{E}\to\mathcal{F}$ is an adjoint pair of functors where the right adjoint $F_{\ast}:\mathcal{E}\to\mathcal{F}$ is called the \textbf{direct image functor}, and the left adjoint $F^{\ast}:\mathcal{F}\to\mathcal{F}$ is called the \textbf{inverse image functor} and where $F^{\ast}$ is assumed to be left-exact~\cite[Chapter VII]{mm}. Conversely, if $F: Sh(X)\to Sh(Y)$ is a geometric morphism and the spaces are Hausdorff, then $F$ comes from a unique continuous map $f:X\to Y$.  

The topos $Sh(X)$ depends on the topology $\mathcal{O}X$, rather than on the underlying set of points of the space. For example, if $X$ has the trivial topology $\{\emptyset,X\}$, then $Sh(X)$ can be identified with $\mathbf{Set}$, regardless of the set $X$. In this sense we should not see a topos as a generalisation of topological spaces, but of locales. In order to define locales, we first need to consider frames. A \textbf{frame} $F$ is a complete lattice, satisfying the following distributivity law
\begin{equation*}
\forall U\in F,\ \ \forall S\subseteq F,\ \  U\wedge\left(\bigvee S\right)=\bigvee\{U\wedge V\mid V\in S\}.
\end{equation*}
The motivating example of a frame is a topology $\mathcal{O}X$, where $\wedge$ is simply the intersection $\cap$, and $\bigvee$ corresponds to the union $\bigcup$. However, not every frame comes from a topology, as we will see below. If $F$ and $G$ are frames, then a morphism of frames, or a \textbf{frame homomorphism} $f:F\to G$, is a function that preserves finite meets and all joins. If $f:X\to Y$ is a continuous map of topological spaces, then the inverse image map $f^{-1}:\mathcal{O}Y\to\mathcal{O}X$ is a frame homomorphism. 

The category of locales $\mathbf{Loc}$ is defined to be the dual category of the category of frames $\mathbf{Frm}$. So a \textbf{locale} $L$ corresponds to a unique frame, which we denote as $\mathcal{O}L$, and a \textbf{locale map}, $f:K\to L$, also called a continuous map, is the same as a frame map $f^{\ast}:\mathcal{O}L\to\mathcal{O}K$. In particular, a topological space $X$ defines a locale $L(X)$ through the topology $\mathcal{O}X$, and a continuous map of spaces $f:X\to Y$ induces, through the inverse image map, a locale map $L(f):L(X)\to L(Y)$.

Points of a space $X$ correspond to continuous maps $x:1\to X$, where $1$ is the one-point topological space. The inverse image of $x:1\to X$ is a frame morphism $x^{-1}:\mathcal{O}X\to\underline{2}$, where $\underline{2}$ is the frame of two elements. If we are given the topology $\mathcal{O}X$, we consider the set $pt(X)$ consisting of frame homomorphisms $\mathcal{O}X\to\underline{2}$. The frame $\mathcal{O}X$ defines a topology on $pt(X)$: if $U\in\mathcal{O}X$, let $pt(U)$ consist of the $p\in pt(X)$ such that $p(U)$ is the top element of $\underline{2}$. The space $X$ is called \textbf{sober} iff it is homeomorphic to $pt(X)$. In other words, we can reconstruct the points from the topology. In particular, any Hausdorff space is sober. For any sober space $X$, we can retrieve $X$ from the locale $L(X)$, or from the topos $Sh(X)$.

As mentioned before, a frame, and therefore a locale, need not come from a topological space. For example, let $F$ consist of the subsets $U$ of the real line $\mathbb{R}$, satisfying the condition that taking the interior of the closure of $U$ is equal to $U$. This set, counting all open intervals $(r,s)$ amongst its elements, and partially ordered by inclusion, defines a frame. However, contrary to topologies, there are no points in the sense of frame homomorphisms $F\to 2$. The only topology that has no points is the unique topology on $\emptyset$, and obviously $F$ is not isomorphic to this topology.

The definition of frames and their morphisms can be interpreted in the internal language of any topos $\mathcal{E}$. As a consequence we can work with frames and locales internal to a topos. We proceed to discussing the internal language of topoi.

\subsubsection{Topoi as Generalised Universes of Sets} \label{subsub: sets}

Whenever we talk about taking an \textbf{internal} perspective or internal picture of the topos $[\mathcal{C}^{op},\mathbf{Set}]$, this means that we are working with the objects and arrows of that topos without referring to the topos $\mathbf{Set}$. This could mean that we are considering these objects and arrows in terms of abstract category theory. However, in this paper it means that we are using the \textbf{internal language} of the topos. Any topos has an associated internal or Mitchell-B\'enabou language~\cite{bor,bell2,mm}. With respect to this language, working with objects and arrows of the topos resembles set theory, but the category $\mathbf{Set}$ itself is not used. Whenever we use $\mathbf{Set}$ in our descriptions, we take on an \textbf{external} perspective. For example, if we think of the spectral presheaf $\uS$ as a functor mapping into the category $\mathbf{Set}$, we are using an external perspective. If we consider $\uS$ as a `set' (or topological space as we shall see in the next section) with respect to the internal language, we are dealing with an internal perspective. 

The internal language of a topos $\mathcal{E}$ uses the objects of that topos as types, and arrows as terms. The language generated by these terms is rich enough to express complicated mathematics. One type (i.e., object) is of particular importance, and is denoted by $\Omega$. For a topos $\mathcal{E}$, the \textbf{subobject classifier} is an object $\Omega$, together with an arrow $\mathbf{true}:1\to\Omega$, such that for any subobject $U\hookrightarrow X$, there is a unique arrow $X\to\Omega$ making the following square into a pullback.
\[ \xymatrix{ U \ar[r]^{!} \ar@{^{(}->}[d] & 1 \ar[d]^{\mathbf{true}}\\
X \ar[r] & \Omega}\]
For the topos $\mathbf{Set}$, $\Omega$ is the two-element set $\underline{2}$, and the function corresponding to an inclusion $U\hookrightarrow X$ is just the characteristic function $\chi_{U}:X\to\underline{2}$. Returning to an arbitrary topos, the arrows $1\to\Omega$ are called the \textbf{truth values}. Like the topos $\mathbf{Set}$ there are always truth values $\mathbf{true}$ and $\mathbf{false}$, but unlike $\mathbf{Set}$, there may be many other truth values. For the topos $[\mathcal{C}^{op},\mathbf{Set}]$, the truth values correspond bijectively to the sieves on $\mathcal{C}$, and for $[\mathcal{C},\mathbf{Set}]$ the truth values can be identified with the cosieves on $\mathcal{C}$.

We return to the internal language of a topos. Let $\phi(x,y)$ be a formula in this language, with some free variable $x$ of type $X$ (free means that it is not bound by a quantifier), and a free variable $y$ of type $Y$, then this formula has an interpretation as a subobject
\begin{equation*}
\{(x,y)\in X\times Y\mid \phi(x,y)\}\subseteq X\times Y.
\end{equation*}
By definition of the subobject classifier, this corresponds to a unique arrow $X\times Y\to\Omega$. If $\psi$ is a proposition, a formula without free variables, then it corresponds to a subobject of the terminal object $1$, or, equivalently, a truth value. We say that $\psi$ \emph{is true} and write $\Vdash\psi$, if $\psi$ corresponds to the truth value $\mathbf{true}$.

As noted before, with respect to the internal language the topos becomes a universe of mathematical discourse, resembling set theory. However, there are some important differences. In the topos $\mathbf{Set}$, an element of a set $X$ corresponds to a function $x:1\to X$, where $1$ is a singleton set, the terminal object of $\mathbf{Set}$. For an arbitrary topos $\mathcal{E}$ we need a more generalised notion of element. An element of a `set' (i.e., object) $X$ is any arrow $Y\to X$.

In the topos $\mathbf{Set}$ these generalised elements of a set $X$ are \textbf{all} functions $f:Y\to X$, that have $X$ as a codomain. We know from set theory that we need not bother with $Y\neq1$. This follows from the observation that the object $1$ generates the category $\mathbf{Set}$. This means that for any pair of functions $f,g:X\to Y$, the functions are equal $f=g$, iff $f\circ x=g\circ x$ for each $x:1\to X$.

In fact, for any Grothendieck topos there is a set of objects which generate the topos, and can therefore be used to reduce the number of generalised elements that we need to consider. If $y:\mathcal{C}\to[\mathcal{C}^{op},\mathbf{Set}]$ denotes the Yoneda embedding, then the objects $y(C)$ generate the presheaf category $[\mathcal{C}^{op},\mathbf{Set}]$, meaning that for any pair of natural transformations $f,g:X\to Y$, we have $f=g$ iff for each element of the form $x:y(C)\to X$, $f\circ x=g\circ x$. This means that in doing internal mathematics, we can restrict to generalised elements of the form $y(C)\to X$. Exploiting this observation leads to \textbf{presheaf semantics} which we use several times throughout the text when proving internal claims.

By the Yoneda Lemma, a generalised element $\alpha:y(C)\to X$ corresponds to an element $\alpha\in X(C)$. The following conditions are equivalent:
\begin{enumerate}
\item The generalised element $\alpha$ factors through the inclusion $\{x\mid\phi(x)\}\subseteq X$;
\item $\alpha\in\{x\mid\phi(x)\}(C)$. 
\end{enumerate}
We denote these two equivalent conditions by the \textbf{forcing relation} $C\Vdash\phi(\alpha)$. Note that we can think of $C\Vdash\phi(\alpha)$, either internally (no reference to $\mathbf{Set}$) using condition (1) expressing that the generalised element $\alpha$ of the `set' $X$ is an element of the `set'  $\{x\in X\mid\phi(x)\}$, or, externally using condition (2). The internal and external view are connected by the Yoneda Lemma.\\

For the copresheaf topos $[\mathcal{C},\mathbf{Set}]$, we can also use a version of presheaf semantics. This amounts to replacing the contravariant Hom-functor $y$ of the Yoneda embedding by the covariant Hom-functor $k$, and considering the generating set of objects $k(C)$. Presheaf semantics, its rules, and the more general sheaf semantics are explained in~\cite[Section VI.7]{mm}. 

\section{The Contravariant Model, seen Internally} \label{sec: contra}

Key concepts of the contravariant model, such as the spectral presheaf, possible value objects, elementary propositions, daseinised self-adjoint operators, and states, can be viewed from the internal perspective of the topos at hand. From this internal perspective these objects and arrows resemble constructions from classical physics more closely. In this section, we also investigate connections between elementary propositions $[a\in\Delta]$ and daseinised self-adjoint operators $\underline{\delta(a)}$. Such a connection need not be simple, as the elementary propositions are defined using subsets $\Delta$ of the (external) real numbers $\mathbb{R}$, whereas $\underline{\delta(a)}$ takes values in an (internal) value object $\mathcal{R}$ of the topos. Establishing such a connection is therefore about relating the real numbers of the topos $\mathbf{Set}$ to the value object $\mathcal{R}$ of the topos $[\mathcal{C}^{op},\mathbf{Set}]$. 

\subsection{Spectral Presheaf as a Space} \label{subsec: space}

We start with the central object of the contravariant model, the spectral presheaf $\uS$, which we defined in Subsection~\ref{subsec: review}. In the contravariant model, the object $\uS$ of the topos $\topos$ is thought of as a state space, in analogy with classical physics. But is there any mathematical justification for calling $\uS$ a space? We can think of $\uS$ either as an object in an abstract category, or as a functor taking values in the category of topological spaces and continuous maps. In the internal language of $[\mathcal{C}^{op},\mathbf{Set}]$, $\uS$ is just a set, and we can always consider a set as a discrete space. However, we can do better than that. Below we describe $\uS$ as a topological space internal to the topos $\topos$, in such a way that states on $A$ (in the sense of normalised positive functionals) and (daseinised) self-adjoint operators have a clear internal perspective. This internal perspective strengthens the analogy with classical physics, as well as with the copresheaf model.

Given a set $X$  (in the internal sense) in a topos $\mathcal{E}$, a topology $\mathcal{O}X$ on $X$ is defined in a straightforward way. It is a subset $\mathcal{O}X\subseteq\mathcal{P}X$ of the powerset of $X$, satisfying the condition
\begin{align*}
\Vdash (X\in\mathcal{O}X) &\wedge(\emptyset\in\mathcal{O}X)\wedge(U,V\in\mathcal{O}X\rightarrow U\cap V\in\mathcal{O}X)\\
& \wedge(Y\subseteq\mathcal{O}X\rightarrow\bigcup Y\in\mathcal{O}X).
\end{align*}

For a topos of the kind $\mathcal{E}=Sh(T)$ with $T$ a topological space, there is a useful external description of internal topologies on a sheaf $X$, as explained in~\cite{moerdijk}. This is relevant because the topos $\topos$ is equivalent to the topos $Sh(\mathcal{C}_{\downarrow})$, where the space $\mathcal{C}_{\downarrow}$ is the set $\mathcal{C}$ equipped with the downwards Alexandroff topology. With respect to this topology $U\subseteq\mathcal{C}$ is open iff it is downwards closed in the sense that if $C\subseteq C'\in U$, then $C\in U$. We write $\mathcal{C}^{\downarrow}_{A}$ for $\mathcal{C}_{\downarrow}$ if we want to emphasise that $\mathcal{C}$ comes from $A$. This will become important in Section~\ref{sec: homs}, where we consider $\ast$-homomorphisms.

For the external description of internal topologies (i.e., a description in $\mathbf{Set}$), first recall that the category $Sh(T)$ is equivalent to the category $\textbf{\text{\'Etale}}(T)$, of \'etale bundles over $T$ (as explained in detail in~\cite[Chapter II]{mm}). An \'etale bundle over $T$ is a continuous map $p:X\to T$ such that each $x\in X$ has an open neighbourhood $U_{x}$, satisfying the condition that $p|_{U_{x}}:U_{x}\to T$ is a homeomorphism onto its image.

Under the identifications $\topos\cong Sh(\mathcal{C}_{\downarrow})\cong \textbf{\text{\'Etale}}(\mathcal{C}_{\downarrow})$, the spectral presheaf $\uS$ corresponds to the \'etale bundle $\pi:\Sigma_{e}\to\mathcal{C}_{\downarrow}$, where the set $\Sigma_{e}$ is the disjoint union of Gelfand spectra $\coprod_{C\in\mathcal{C}}\Sigma_{C}$. This set is equipped with the following (\'etale) topology. For any non-empty $U\subseteq\Sigma_{e}$, we have $U\in\mathcal{O}\Sigma_{e}$ iff the following condition holds. If $(C,\lambda)\in U$ (with $\lambda\in\Sigma_{C}$), and $D\subseteq C$, then $(D,\lambda|_{D})\in U$. Here we used the notation $\lambda|_{D}$ for the restriction $\rho_{CD}(\lambda)$. The function $\pi$ is simply given by the projection $(C,\lambda)\mapsto C$. 

\begin{poe}{(\cite{moerdijk})}
There is a bijection between topologies on $\uS$, internal to $\topos$, and $\pi$-topologies on $\Sigma_{e}$. A topology on $\Sigma_{e}$ is called a $\pi$-topology if it is coarser than the \'etale topology and with respect to which $\pi$ is continuous.
\end{poe}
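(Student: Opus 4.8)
The strategy is to transport everything through the equivalences $\topos\simeq Sh(\Cd)\simeq\textbf{\text{\'Etale}}(\Cd)$ recalled above, under which $\uS$ becomes the \'etale bundle $\pi\colon\Sigma_{e}\to\Cd$, and then to translate ``internal topology on $\uS$'' into external data by presheaf semantics. The first step is an external description of the relevant power object: by the Yoneda Lemma $\mathcal{P}\uS(C)\cong\mathrm{Sub}_{\topos}(y(C)\times\uS)$, and since $y(C)\times\uS$ corresponds to the restriction of $\pi$ to the basic open $\,{\downarrow}C=\{D\in\mathcal{C}\mid D\subseteq C\}$ --- i.e.\ to the \'etale bundle $\pi^{-1}({\downarrow}C)\to\Cd$ --- its subobjects are precisely the \'etale-open subsets of $\pi^{-1}({\downarrow}C)$, with restriction $\mathcal{P}\uS(C)\to\mathcal{P}\uS(D)$ (for $D\subseteq C$) given by $W\mapsto W\cap\pi^{-1}({\downarrow}D)$. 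So an internal topology on $\uS$ is a subpresheaf $\mathcal{T}\hookrightarrow\mathcal{P}\uS$ --- a family of sets $\mathcal{T}(C)$ of \'etale-opens of $\pi^{-1}({\downarrow}C)$, stable under these restrictions --- that moreover satisfies the frame axioms of Subsection~\ref{subsec: space} once these are unwound stage by stage via the forcing relation $C\Vdash(-)$.

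I would then write down the two maps. To an internal topology $\mathcal{T}$ assign $\tau(\mathcal{T})$, the set of global sections of $\mathcal{T}$, which under $\Gamma(\mathcal{P}\uS)=\mathcal{O}\Sigma_{e}$ is identified with
\[
\tau(\mathcal{T})=\{W\in\mathcal{O}\Sigma_{e}\mid W\cap\pi^{-1}({\downarrow}C)\in\mathcal{T}(C)\ \text{ for all }C\in\mathcal{C}\}.
\]
To a $\pi$-topology $\sigma$ assign the subpresheaf $\mathcal{T}^{\sigma}$ with $\mathcal{T}^{\sigma}(C):=\{W\subseteq\pi^{-1}({\downarrow}C)\mid W\in\sigma\}$: here ${\downarrow}C$ is open and $\pi$ is $\sigma$-continuous, so $\pi^{-1}({\downarrow}C)\in\sigma$ and $\mathcal{T}^{\sigma}(C)$ is just the subspace $\sigma$-topology on $\pi^{-1}({\downarrow}C)$, which lies in $\mathcal{P}\uS(C)$ because $\sigma$ is coarser than the \'etale topology. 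That $\mathcal{T}^{\sigma}$ is a subpresheaf is immediate, and that it is internally a topology is checked axiom by axiom through presheaf semantics; e.g.\ $\Vdash\uS\in\mathcal{T}^{\sigma}$ says precisely that $\pi^{-1}({\downarrow}C)\in\sigma$ for every $C$, and closure under the internal $\bigcup$ translates --- once the Kripke--Joyal clauses for $\topos$ are written out and one uses that such unions are computed stagewise on subpresheaves --- into closure of $\sigma$ under ordinary unions of subsets of $\Sigma_{e}$. The same bookkeeping shows that both clauses in the definition of a $\pi$-topology come for free from $\tau(\mathcal{T})$: it is contained in $\mathcal{O}\Sigma_{e}$ since $\mathcal{T}\subseteq\mathcal{P}\uS$, and $\pi$ is $\tau(\mathcal{T})$-continuous since, for any open $V\subseteq\Cd$, the subobject $\pi^{-1}(V)=\{\lambda\in\uS\mid p_{V}\}$ (with $p_{V}$ the proposition classified by $V$) is the internal union $\bigcup\{U\in\mathcal{T}\mid U=\uS\wedge p_{V}\}$ of copies of $\uS\in\mathcal{T}$ and hence lies in $\mathcal{T}$.

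It remains to verify that these maps are mutually inverse. For $\sigma\mapsto\mathcal{T}^{\sigma}\mapsto\tau(\mathcal{T}^{\sigma})$ one gets $\{W\in\mathcal{O}\Sigma_{e}\mid W\cap\pi^{-1}({\downarrow}C)\text{ is }\sigma\text{-open for all }C\}$, which equals $\sigma$ because $\{\pi^{-1}({\downarrow}C)\}_{C\in\mathcal{C}}$ is a $\sigma$-open cover of $\Sigma_{e}$ and $\sigma$-openness is local. For $\mathcal{T}\mapsto\tau(\mathcal{T})\mapsto\mathcal{T}^{\tau(\mathcal{T})}$ one uses $\mathcal{T}(C)=\mathrm{Hom}_{\topos}(y(C),\mathcal{T})=\Gamma(\iota_{C}^{\ast}\mathcal{T})$, where $\iota_{C}\colon{\downarrow}C\hookrightarrow\Cd$ is the open inclusion and $\iota_{C}^{\ast}\mathcal{T}$ is the internal topology on $\uS|_{{\downarrow}C}$ obtained by restriction; if the external description is natural in $\iota_{C}$, then the $\pi$-topology attached to $\iota_{C}^{\ast}\mathcal{T}$ is the subspace $\tau(\mathcal{T})$-topology on $\pi^{-1}({\downarrow}C)$, whence $\mathcal{T}(C)=\{W\subseteq\pi^{-1}({\downarrow}C)\mid W\in\tau(\mathcal{T})\}=\mathcal{T}^{\tau(\mathcal{T})}(C)$, compatibly with restriction. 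I expect this last point to be the main obstacle: it amounts to showing that the assignment ``internal topology on $\uS$ $\rightsquigarrow$ $\pi$-topology on $\Sigma_{e}$'' commutes with restriction along open inclusions --- equivalently, that $\mathcal{T}$ is recovered from its global sections because the opens ${\downarrow}C$ form a base for $\Cd$ --- which leans on the good behaviour of $\iota_{C}^{\ast}$ on $\mathcal{P}\uS$ (valid here because $y(C)\times\uS$ is supported on ${\downarrow}C$) together with a careful Kripke--Joyal analysis of the union axiom. The remaining verifications are routine unwinding.
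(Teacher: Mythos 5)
There is nothing in the paper to compare your argument against: this proposition is imported verbatim from Moerdijk's \emph{Spaced Spaces} and the paper gives no proof. Your reconstruction follows what is essentially Moerdijk's route, and the skeleton is sound: the identification $\mathcal{P}\uS(C)\cong\mathrm{Sub}(y(C)\times\uS)\cong\{\text{\'etale opens of }\pi^{-1}({\downarrow}C)\}$ with restriction given by intersection, the two assignments $\tau(\mathcal{T})=\Gamma(\mathcal{T})$ and $\mathcal{T}^{\sigma}(C)=\{W\subseteq\pi^{-1}({\downarrow}C)\mid W\in\sigma\}$, and the verification that $\tau(\mathcal{T}^{\sigma})=\sigma$ (locality of $\sigma$-openness on the cover by the $\pi^{-1}({\downarrow}C)$) are all correct as stated.

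The step you defer --- recovering $\mathcal{T}(C)$ from the global sections $\tau(\mathcal{T})$ --- is, however, where all the content of the bijection sits, so it should be proved rather than expected. What has to be shown is an upward closure property that is \emph{not} part of being a subpresheaf: if $E\subseteq D$ and $W\in\mathcal{T}(E)$, then $W$, regarded as an \'etale open of $\pi^{-1}({\downarrow}D)$, already lies in $\mathcal{T}(D)$. Without this, $\mathcal{T}$ could carry extra opens at small stages that are invisible to every global section, and $\tau$ would fail to be injective. The internal union axiom supplies it: the family $Y\subseteq y(D)\times\mathcal{T}$ given by $Y_{F}=\{W\cap\pi^{-1}({\downarrow}F)\}$ for $F\subseteq E$ and $Y_{F}=\emptyset$ for the remaining $F\subseteq D$ is a legitimate subobject (restrictions of elements are elements; nothing forces elements at stage $F$ to extend to stage $D$), and its internal union at stage $D$, computed stagewise, is exactly $W$; hence $D\Vdash\bigcup Y\in\mathcal{O}\uS$ forces $W\in\mathcal{T}(D)$. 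With this in hand, for $W\in\mathcal{T}(C)$ and arbitrary $D$ one writes $W\cap\pi^{-1}({\downarrow}D)$ as the union over $E\in{\downarrow}C\cap{\downarrow}D$ of the restrictions $W\cap\pi^{-1}({\downarrow}E)\in\mathcal{T}(E)$ and applies the same device at stage $D$, giving $W\in\tau(\mathcal{T})$ and hence $\mathcal{T}(C)=\mathcal{T}^{\tau(\mathcal{T})}(C)$. Note that this is the same mechanism you already invoked to get $\pi$-continuity of $\tau(\mathcal{T})$ --- internal unions of families supported on a proper down-set of the current stage --- so once that trick is made explicit the whole proof closes.
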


Note that the \'etale topology itself qualifies, and this corresponds to the discrete topology on $\uS$. It is not hard to see that the \'etale opens of $\Sigma$ correspond to subobjects of $\uS$. In the contravariant topos approach one is typically only interested in subobjects of $\uS$ of a certain kind, the clopen subobjects. Recall that a subobject of $\underline{U}\subseteq\uS$ is a clopen subobject iff for each $C\in\mathcal{C}$ the subset $\underline{U}(C)\subseteq\Sigma_{C}$ is clopen with respect to the topology on the Gelfand spectrum $\Sigma_{C}$. In the external description $\pi:\Sigma_{e}\to\mathcal{C}_{\downarrow}$, the clopen subobjects correspond to the \'etale opens $U$ of $\Sigma_{e}$ satisfying the condition that for each $C\in\mathcal{C}$, the set $U_{C}:=U\cap\Sigma_{C}$ is clopen in $\Sigma_{C}$. These \'etale opens are not closed under infinite unions and therefore do not form a topology. However, they do form a basis for a topology. Note that, since we are working with von Neumann algebras, each $\Sigma_{C}$ has a basis of clopen subsets. By this observation, the internal topology of $\uS$ generated by the clopen subobjects can be presented externally as follows.

\begin{dork}
The space $\Sigma_{\downarrow}$ is the set $\Sigma_{e}$, where $U\subseteq\Sigma_{e}$ is open iff the following two conditions are satisfied
\begin{enumerate}
\item If $\lambda\in U_{C}$ and $C'\subseteq C$, then $\lambda|_{C'}\in U_{C'}$.
\item For every $C\in\mathcal{C}$, $U_{C}$ is open in $\Sigma_{C}$
\end{enumerate}
\end{dork}

\begin{poe} \label{prop: bundle}
The topology $\mathcal{O}\Sigma_{\downarrow}$ is a $\pi$-topology, and thus defines an internal topology on $\uS$, which is the topology generated by the closed open subobjects.
\end{poe}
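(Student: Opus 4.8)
The statement has two parts, and I would treat them in turn. First, that $\mathcal{O}\Sigma_{\downarrow}$ is a $\pi$-topology. I would begin by checking it is a topology on $\Sigma_{e}$ at all: conditions (1) and (2) in the definition of $\Sigma_{\downarrow}$ are each preserved under arbitrary unions and finite intersections (for (2), computing fibrewise and using that finite intersections and arbitrary unions of opens of $\Sigma_{C}$ are again open), while $\emptyset$ and $\Sigma_{e}$ satisfy both trivially. Comparing with the external description of the \'etale topology recalled just above, one sees that the \'etale opens of $\Sigma_{e}$ are exactly the subsets satisfying condition (1) alone; hence $\mathcal{O}\Sigma_{\downarrow}$, which imposes (1) together with the additional condition (2), is contained in the \'etale topology, i.e.\ is coarser. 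Continuity of $\pi$ is immediate: the opens of $\Cd$ are the downwards closed subsets $V\subseteq\mathcal{C}$, and $\pi^{-1}(V)=\coprod_{C\in V}\Sigma_{C}$ satisfies (1) precisely because $V$ is downwards closed and satisfies (2) because each of its fibres is either a full $\Sigma_{C}$ or empty. By the preceding proposition of~\cite{moerdijk}, $\mathcal{O}\Sigma_{\downarrow}$ then corresponds to an internal topology on $\uS$.

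For the second part -- that this internal topology is the one generated by the clopen subobjects -- I would reduce, using that same (order-preserving) correspondence, to showing that the \'etale opens of $\Sigma_{e}$ all of whose fibres are clopen (these are exactly the external descriptions of the clopen subobjects of $\uS$) form a basis for $\mathcal{O}\Sigma_{\downarrow}$: the smallest $\pi$-topology containing them is then $\mathcal{O}\Sigma_{\downarrow}$ itself, so the smallest internal topology containing the clopen subobjects is its internal counterpart. One inclusion is trivial: such an \'etale open satisfies (1) because it is \'etale and (2) because its fibres, being clopen, are open, so it lies in $\mathcal{O}\Sigma_{\downarrow}$.

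The content is in the other inclusion: every $U\in\mathcal{O}\Sigma_{\downarrow}$ is a union of clopen subobjects. Given $(C,\lambda)\in U$, since $A$ is a von Neumann algebra each $\Sigma_{C}$ has a basis of clopen sets, so pick a clopen $K_{q}\subseteq U_{C}$ containing $\lambda$, where $q\in\text{Proj}(C)$ is the projection corresponding to $K_{q}$ under Gelfand duality. Define a subobject $\underline{W}$ of $\uS$ by $\underline{W}(D)=K_{\delta^{o}(q)_{D}}$ for $D\subseteq C$ and $\underline{W}(D)=\emptyset$ otherwise. This is a well-defined sub-presheaf: for $D'\subseteq D\subseteq C$ one has $\delta^{o}(q)_{D'}\geq\delta^{o}(q)_{D}$ (recalled in Subsection~\ref{subsec: review}), so restriction sends $K_{\delta^{o}(q)_{D}}$ into $K_{\delta^{o}(q)_{D'}}$; its fibres are clopen by construction, so $\underline{W}$ is a clopen subobject, and $(C,\lambda)\in\underline{W}$ because $\delta^{o}(q)_{C}=q$. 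It remains to see $\underline{W}\subseteq U$, i.e.\ $K_{\delta^{o}(q)_{D}}\subseteq U_{D}$ for each $D\subseteq C$. Applying condition (1) to $U$ gives $\rho_{CD}(K_{q})\subseteq U_{D}$; the set $\rho_{CD}(K_{q})$ is compact, being a continuous image of the compact clopen $K_{q}$, and $U_{D}$ is open, so by zero-dimensionality of $\Sigma_{D}$ it is contained in a clopen subset of $U_{D}$. Now $K_{\delta^{o}(q)_{D}}$ is the smallest clopen subset of $\Sigma_{D}$ containing $\rho_{CD}(K_{q})$ -- this is the Gelfand-dual restatement of $\delta^{o}(q)_{D}=\bigwedge\{p\in\text{Proj}(D)\mid p\geq q\}$ together with $\rho_{CD}^{-1}(K_{p})=K_{p}$ for $p\in\text{Proj}(D)$ -- so $K_{\delta^{o}(q)_{D}}$ is contained in that clopen, hence in $U_{D}$. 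Thus $U$ equals the union of the clopen subobjects it contains.

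I expect the verification $\underline{W}\subseteq U$ to be the one genuinely delicate point. The subtlety is that $\rho_{CD}$ need not send clopen subsets of $\Sigma_{C}$ to clopen subsets of $\Sigma_{D}$, only to closed ones, so the clopen subobject generated by a single clopen slice of $U$ is strictly larger than the naive image $\rho_{CD}(K_{q})$ -- it is the clopen enlargement $K_{\delta^{o}(q)_{D}}$ -- and one must rule out that this enlargement leaves $U$. What makes it work is exactly that in the von Neumann setting the Gelfand spectra are Stone spaces: a compact subset of such a space lying inside an open set is contained in a clopen subset of that open set, and minimality of $K_{\delta^{o}(q)_{D}}$ among clopens containing $\rho_{CD}(K_{q})$ then closes the gap. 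This is also what ties the clopen-subobject presentation to von Neumann algebras rather than general C*-algebras.
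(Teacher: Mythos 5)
Your proposal is correct, and it follows the route the paper only gestures at: the paper states Proposition~\ref{prop: bundle} without proof, relying on the preceding paragraph's assertion that the clopen subobjects form a basis and that each $\Sigma_{C}$ has a basis of clopen sets. The routine parts (that $\mathcal{O}\Sigma_{\downarrow}$ is a topology coarser than the \'etale one with $\pi$ continuous, and that clopen subobjects lie in it) you handle exactly as one would expect. The genuine content, which the paper leaves implicit, is your verification that every $U\in\mathcal{O}\Sigma_{\downarrow}$ is a union of clopen subobjects, and you have correctly isolated the one delicate point: a clopen slice $K_{q}\subseteq U_{C}$ does not restrict to clopen images downward, so one must pass to the clopen subobject $D\mapsto K_{\delta^{o}(q)_{D}}$ and check it stays inside $U$. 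Your argument for this is sound: $p\in\text{Proj}(D)$ satisfies $p\geq q$ precisely when $K_{p}\supseteq\rho_{CD}(K_{q})$, so $K_{\delta^{o}(q)_{D}}$ is the smallest clopen containing the compact set $\rho_{CD}(K_{q})$, and compactness plus zero-dimensionality of $\Sigma_{D}$ places that compact set inside some clopen subset of $U_{D}$, which then absorbs $K_{\delta^{o}(q)_{D}}$. This correctly pins down where the von Neumann (Stone-space) hypothesis enters, something the paper's one-line remark about clopen bases does not make explicit. In short: same approach as the paper intends, but with the missing argument actually supplied.
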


In what follows, we write $\Sigma_{A}^{\downarrow}$ for $\Sigma_{\downarrow}$ whenever we want to emphasise $A$. We write $\uS_{\downarrow}$ for $\uS$ with the internal topology generated by the clopen subobjects.

\begin{rem}
As shown in~\cite[Section C1.6]{jh1}, the category $\mathbf{Loc}_{Sh(T)}$, of locales in $Sh(T)$, is equivalent to the category $\mathbf{Loc}/T$, of locales (in $\mathbf{Set}$) over $T$. In particular, a topological space internal to $Sh(T)$ corresponds to a locale in $Sh(T)$ as such a space is described by a bundle over $T$. Externally, passing from topological spaces to locales means that we forget that we are working with topologies that are coarser than the \'etale topology of some sheaf. Internally, passing from topological spaces to locales means that we forget about the set of points that we topologised. The bundle $\pi:\Sigma_{\downarrow}\to\mathcal{C}_{\downarrow}$, where $\Sigma_{\downarrow}$ has the topology of Proposition~\ref{prop: bundle}, perceived internally as a locale rather than as a topological space, was earlier considered in~\cite[Subsection 2.1]{wollie}.
\end{rem}

\subsubsection{Sobriety}

If we think of the spectral presheaf of the contravariant model as an internal space or locale, we can ask if it satisfies any separation properties. As shown in~\cite[Section 2.1]{wollie}, the spectral presheaf, seen as an internal locale, is in general not regular, and is therefore not the spectrum of some internal unital commutative C*-algebra. However, we prove the following weaker property.

\begin{lem}
The internal space $\underline{\Sigma}_{\downarrow}$, associated to the bundle $\pi_{\downarrow}:\Sigma_{\downarrow}\to\mathcal{C}_{\downarrow}$ of Proposition~\ref{prop: bundle} is sober (internally).
\end{lem}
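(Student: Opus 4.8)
The plan is to unwind internal sobriety using presheaf semantics over $\topos\cong Sh(\Cd)$ together with the external description of $\uS_{\downarrow}$ as the bundle $\pi_{\downarrow}\colon\Sigma_{\downarrow}\to\Cd$ of Proposition~\ref{prop: bundle}. Internally a space $X$ is sober precisely when the canonical comparison $X\to pt(\mathcal{O}X)$ is an isomorphism, equivalently when every completely prime filter on the frame $\mathcal{O}X$ is the neighbourhood filter of a unique point. So the target is: at each stage $C\in\mathcal{C}$, every internal completely prime filter $F$ on $\mathcal{O}(\uS_{\downarrow})$ over the open $\downarrow C$ equals $\{W : \lambda\in W\}$ for a unique $\lambda\in\Sigma_{C}$. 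Under the Yoneda dictionary a generalized element of $\uS$ at $C$ is a point of $\Sigma_{C}$, a generalized element of $\mathcal{P}\uS$ at $C$ is a subobject of $\uS$ restricted to $\downarrow C$, i.e. a restriction-compatible family $(\underline{U}_{D})_{D\subseteq C}$ with $\underline{U}_{D}\subseteq\Sigma_{D}$, and by the external description this family is an internal open exactly when each $\underline{U}_{D}$ is open in $\Sigma_{D}$. The two facts I would lean on are that each Gelfand spectrum $\Sigma_{C}$ is compact Hausdorff, hence sober, with a basis of clopens, and that every clopen $V\subseteq\Sigma_{C}$ is the $C$-component of a clopen subobject over $\downarrow C$, namely of the subobject with $D$-component $\{\lambda\in\Sigma_{D}:\lambda(\delta^{o}(p)_{D})=1\}$ for the projection $p\in\mathrm{Proj}(C)$ with $V=\{\lambda:\lambda(p)=1\}$; here restriction-closedness is exactly the monotonicity $\delta^{o}(p)_{D}\geq\delta^{o}(p)_{C}$ for $D\subseteq C$. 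Consequently the $C$-components of internal opens exhaust $\mathcal{O}\Sigma_{C}$.

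For uniqueness it suffices to see that $\uS_{\downarrow}$ is internally $T_{0}$. If $\lambda,\mu\in\Sigma_{C}$ lie in exactly the same internal opens then, testing against opens of the displayed clopen form, they lie in the same clopen subsets of $\Sigma_{C}$, whence $\lambda=\mu$ since $\Sigma_{C}$ is Hausdorff with a clopen basis. Thus $\uS_{\downarrow}\to pt(\mathcal{O}\uS_{\downarrow})$ is internally injective.

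For existence, given a completely prime filter $F$ at stage $C$, I would pass to its trace $F_{C}:=\{W_{C}: W\in F(C)\}$ on $\Sigma_{C}$. Using complete primeness of $F$, the basis of clopen subobjects and the lifting of clopens above, one checks that $F_{C}$ is again a completely prime filter of $\mathcal{O}\Sigma_{C}$; sobriety of $\Sigma_{C}$ then produces a unique $\lambda\in\Sigma_{C}$ with $F_{C}=\{V:\lambda\in V\}$. It then remains to check that $\lambda$, regarded as a generalized element at $C$, recovers $F$ at every substage: for $D\subseteq C$ the restriction-compatibility of members of $F$ together with $\delta^{o}(p)_{D}\geq\delta^{o}(p)_{C}$ force $F(D)=\{W\in\mathcal{O}\uS_{\downarrow}(D):\lambda|_{D}\in W_{D}\}$. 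Since the restriction of $\uS$ to $\downarrow D$ is again the spectral presheaf of the von Neumann algebra $D$, the construction is uniform in the stage, and assembling it over all $C$ yields $\Vdash$``$\uS_{\downarrow}$ is sober''.

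The delicate point is the existence step, i.e. the translation between the internal filter (or irreducible closed sublocale) and the external bundle. Because $\mathcal{O}\uS_{\downarrow}$ is strictly coarser than the \'etale topology, internal closures are computed globally rather than fibrewise — so that $\uS_{\downarrow}$ is very far from being internally $T_{1}$ — and constructively one must read ``closed and irreducible'' in the localic, completely-prime-filter sense, not via any decomposition of the underlying set. The genuine work is the constructive bookkeeping: that the trace on $\Sigma_{C}$ of a completely prime filter is again completely prime (this uses the clopen basis essentially) and that the point extracted from the fibre is compatible at every $D\subseteq C$ without introducing spurious points at lower stages. Once that is in place, the remaining verifications are routine consequences of sobriety and compact Hausdorffness of the Gelfand spectra.
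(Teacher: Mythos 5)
Your overall strategy, namely verifying internal sobriety stage by stage through completely prime filters rather than through a section criterion, is viable and is essentially the internal-language shadow of what the paper does externally; but the two steps you yourself flag as ``the genuine work'' are exactly where the content lies, and the toolkit you name cannot complete them. First, the trace step: a naive fibrewise reading of ``completely prime filter'' admits rogue elements. For instance, at any stage $C$ the compatible family $F(D)=\{W\in\mathcal{O}\uS_{\downarrow}(D)\mid W_{\mathbb{C}}\neq\emptyset\}$ (opens meeting the one-point fibre over the trivial context) is upward closed, meet-closed, proper, and splits unions fibrewise, yet it contains $\pi^{-1}(\{\mathbb{C}\})$, whose $C$-component is empty, so it is the neighbourhood filter of no $\lambda\in\Sigma_{C}$. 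What excludes it is the Kripke--Joyal clause for $\exists$ in a presheaf topos: writing $\pi^{-1}((\downarrow C)\setminus\{C\})$ as the join of the compatible family $S$ with $S(C)=\emptyset$ and $S(D)=\{\top_{D}\}$ for $D\subsetneq C$ shows that an \emph{internal} completely prime filter at stage $C$ can contain no open with empty $C$-component. Your sketch never isolates this point, and without it neither the properness of the trace nor your later appeals to primeness go through. Second, and more seriously, the compatibility step: knowing that the trace of $F$ at stage $C$ is the neighbourhood filter of $\lambda$ only tells you that \emph{some} $W\in F(C)$ has $W_{C}=V$ for each neighbourhood $V$ of $\lambda$; since filters are upward closed this gives no control over $W_{D}$ for $D\subsetneq C$, so you cannot yet conclude $F(D)=\{W\mid\lambda|_{D}\in W_{D}\}$. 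The monotonicity $\delta^{o}(p)_{D}\geq\delta^{o}(p)_{C}$ you invoke only says your daseinisation subobjects are restriction-closed, i.e.\ are valid opens; what the argument actually needs is that they are the \emph{smallest} opens with prescribed $C$-component, equivalently that $\rho_{CD}$ maps the clopen of $p$ \emph{onto} the clopen of $\delta^{o}(p)_{D}$. That is the openness of the restriction maps $\rho_{CD}:\Sigma_{C}\to\Sigma_{D}$ in disguise, and it is precisely the fact on which the paper's proof turns (there it is used to build the test opens $V$ with $V_{D'}=\rho_{CD'}(V_{C})$); your proposal never states or proves it.

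For contrast, the paper reduces internal sobriety to the statement that composition with $j:\Sigma_{e}\to\Sigma_{\downarrow}$ is a bijection on continuous sections (Moerdijk's Corollary 3.2), which black-boxes all of the filter bookkeeping above, proves that bijection by the openness argument just described, and then spends the second half of the proof handling the failure of $\Cd$ to be sober by passing to the sobrification $\mathcal{F}$ with the Scott topology. A fully stage-wise internal argument of the kind you propose would sidestep that last complication, which is a genuine (if modest) advantage --- but only once the two gaps above are filled.
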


\begin{proof}
First assume that $\mathcal{C}_{\downarrow}$ is a sober space, such as for $A=M_{n}(\mathbb{C})$. Let $\pi_{e}:\Sigma_{e}\to\mathcal{C}_{\downarrow}$, and $\pi_{\downarrow}:\Sigma_{\downarrow}\to\mathcal{C}_{\downarrow}$ denote the bundles associated to $\uS$ and $\uS_{\downarrow}$. The space $\Sigma_{\downarrow}$ and the \'etale space $\Sigma_{e}$ are both sober because $\mathcal{C}_{\downarrow}$ is sober. Let $j:\Sigma_{e}\to\Sigma_{\downarrow}$ be the function $(C,\lambda)\mapsto(C,\lambda)$, corresponding to the inclusion $j^{-1}:\mathcal{O}\Sigma_{\downarrow}\hookrightarrow\mathcal{O}\Sigma_{e}$. By~\cite[Corollary 3.2]{moerdijk}, the space $\underline{\Sigma}_{\downarrow}$ is sober internally iff the function $s\mapsto s\circ j$, mapping continuous sections of the bundle $\pi_{e}$ to continuous sections of the bundle $\pi_{\downarrow}$ is a bijection.

Let $U\in\mathcal{O}\mathcal{C}_{\downarrow}$, and $s:U\to\Sigma_{\downarrow}$ be a continuous section of $\pi_{\downarrow}$. For convenience, we write $s(C)=(C,r(C))$, with $r(C)\in\Sigma_{C}$. We will show that if $D\subseteq C$, then $r(D)=r(C)|_{D}$. When we know this, we can conclude that $s$ is also a continuous section of the bundle $\pi_{e}$. Let $\lambda\in\Sigma_{D}$ be different from $r(C)|_{D}$. Choose an open neighbourhood $U$ of $r(C)|_{D}$ in $\Sigma_{D}$ such that $\lambda\notin U$. Next, consider $V_{C}=\rho^{-1}_{CD}(U)$, an open neighbourhood of $r(C)$ in $\Sigma_{C}$. Define the open $V$ in $\Sigma_{\downarrow}$ as follows. If $D'\subseteq C$, then $V_{D'}=\rho_{CD'}(V_{C})$. If $D'$ is not below $C$, then  $V_{D'}=\emptyset$. Note that we used the fact that the restriction maps $\rho_{CD}$ are open maps, in order for $V$ to be open in $\Sigma_{\downarrow}$. By construction $C\in s^{-1}(V)$. If follows that  $D\in s^{-1}(V)$, as $s$ is continuous. We conclude that $\lambda\neq r(D)$. As $\lambda$ was an arbitrary element of $\Sigma_{D}$ such that $\lambda\neq r(C)|_{D}$, we conclude  $r(D)=r(C)|_{D}$. We can conclude that $j$ induces an isomorphism on the sections, and consequently, that $\underline{\Sigma}_{\downarrow}$ is sober internally. The completes the proof for the cases where $\mathcal{C}_{\downarrow}$ is sober. Next, drop this assumption.

The sobrification of $\mathcal{C}_{\downarrow}$ can be identified with the set $\mathcal{F}$ of filters of $\mathcal{C}$, equipped with the Scott topology. Recall that a subset $F\subseteq\mathcal{C}$ is a filter iff it is non-empty, upward closed, and downward directed. A subset $W\subseteq\mathcal{F}$ is Scott open iff it is upward closed with respect to the inclusion relation, and if for any directed family of filters $(F_{i})_{i\in I}$, satisfying $\bigcup_{i\in I}F_{i}\in W$, implies that there exists an $i_{0}\in I$, such that  $F_{i_{0}}\in W$. The Scott topology on $\mathcal{F}$ is generated by the basis
\begin{equation*}
W_{C}=\{F\in\mathcal{F}\mid C\in F\},\ \ \ C\in\mathcal{C}.
\end{equation*}
The continuous map 
\begin{equation*}
i:\mathcal{C}_{\downarrow}\to\mathcal{F},\ \ i(C)=\uparrow C:=\{E\in\mathcal{C}\mid E\supseteq C\}, 
\end{equation*}
defines, through its inverse image, an isomorphism of frames
\begin{equation*}
i^{-1}:\mathcal{O}\mathcal{F}\to\mathcal{O}\mathcal{C},\ \ i^{-1}(W_{C})=\downarrow C.
\end{equation*}
Using this frame isomorphism we identify $Sh(\mathcal{C}_{\downarrow})$ with $Sh(\mathcal{F})$. Let $\pi_{\mathcal{F}}:\Sigma_{\mathcal{F}}\to\mathcal{F}$ be the \'etale bundle corresponding to the spectral presheaf. Using the observation that for the principal filter $(\uparrow C)$, the smallest Scott open neighborhood is $W_{C}$, we identify the fibre $\pi^{-1}_{\mathcal{F}}(\uparrow C)$ with $\Sigma_{C}$. If we see $\lambda\in\Sigma_{C}$ as an element of $\lambda\in\uS(W_{C})$, and $F$ is any filter in $\mathcal{C}$ containing $C$, let $[\lambda]_{F}$ denote the germ of $\lambda$ in $F$. Note that the (\'etale) topology on $\Sigma_{\mathcal{F}}$ is generated by the basis
\begin{equation*}
B_{C,\lambda}=\{(F,[\lambda]_{F})\in\Sigma_{\mathcal{F}}\mid F\in W_{C}\}\ \ C\in\mathcal{C},\ \ \lambda\in\Sigma_{C},
\end{equation*}
whereas the topology on $\Sigma^{\downarrow}_{\mathcal{F}}$ is generated by the coarser basis
\begin{equation*}
B_{C,u}=\{(F,[\lambda]_{F})\in\Sigma_{\mathcal{F}}\mid\ F\in W_{C},\ \ \lambda\in u\},\ \ C\in\mathcal{C},\ \ u\in\mathcal{O}\Sigma_{C}.
\end{equation*}
Using the same reasoning as for sober $\mathcal{C}_{\downarrow}$, any continuous section of the bundle $\Sigma^{\downarrow}_{\mathcal{F}}\to\mathcal{F}$ is a continuous section of $\Sigma_{\mathcal{F}}\to\mathcal{F}$, which is enough to conclude that the space $\underline{\Sigma}_{\downarrow}$ is sober internally, even if $\mathcal{C}_{\downarrow}$ is not sober externally.
\end{proof}

\subsubsection{Continuous Maps}

Evidently, if we want to talk about topological spaces in topoi, we also want to talk about continuous maps. Let $(\underline{X},\mathcal{O}\underline{X})$ and $(\underline{Y},\mathcal{O}\underline{Y})$ be two topological spaces in $Sh(T)$ externally described by bundles $p:X\to T$, and $q:Y\to T$ respectively. A continuous map $f:(\underline{X},\mathcal{O}\underline{X})\to(\underline{Y},\mathcal{O}\underline{Y})$ is a sheaf morphism $f:\underline{X}\to\underline{Y}$ satisfying
\begin{equation*}
\Vdash \forall U\in\mathcal{P}\underline{Y}\ \ (U\in\mathcal{O}\underline{Y})\Rightarrow(f^{-1}(U)\in\mathcal{O}\underline{X}).
\end{equation*}
The sheaf morphism $f^{-1}:\mathcal{P}\underline{Y}\to\mathcal{P}\underline{X}$ used in this condition is described in \cite[SectionIV.1]{mm}, where it is aptly called $\mathcal{P}f$. Under the identification of $Sh(T)$ with $\textbf{\text{\'Etale}}(T)$, sheaf morphisms $\underline{f}:\underline{X}\to\underline{Y}$ correspond to commuting triangles of continuous functions
\[ \xymatrix{ X \ar[rr]^{f} \ar[dr]_{p} & & Y \ar[dl]^{q}\\
& T &} \] 
Here the map $f$ is continuous with respect to the \'etale topologies on $X$ and $Y$. Such a map $f$ corresponds to an internal continuous map $\underline{f}:(\underline{X},\mathcal{O}\underline{X})\to(\underline{Y},\mathcal{O}\underline{Y})$ iff, in addition, $f$ is continuous with respect to the coarser topologies on $X$ and $Y$ coming from the internal topologies $\mathcal{O}\underline{X}$. and $\mathcal{O}\underline{Y}$. We return to internal continuous maps after discussing the value objects of the contravariant topos approach.

\subsection{Spaces of values}

In the previous subsection the state space object of the contravariant approach, the spectral presheaf $\uS$, has been given the structure of an internal topological space. In this subsection we concentrate on the value object of the contravariant approach. The value object is thought of as the space of values for physical quantities. This object of values need not be the real numbers (insofar as one can even speak of \textbf{the} real numbers in a topos). As sketched in e.g.~\cite{ish}, one of the aims of these topos models is to investigate alternative spaces of values. This is because relying on real numbers may turn out to be problematic for theories of quantum gravity. In this subsection, we see how the value object of the contravariant approach is related to internal real numbers.

Let $\mathbb{R}$ denote the real numbers in the topos $\mathbf{Set}$, and let $P$ be a poset. In what follows $\text{OP}(P,\mathbb{R})$ denotes the set of order-preserving functions $r:P\to\mathbb{R}$ and $\text{OR}(P,\mathbb{R})$ denotes the set of order-reversing functions $s:P\to\mathbb{R}$. We write $r\leq s$ if $r(p)\leq s(p)$ for all $p\in P$. A popular choice for the value object in the contravariant approach is the functor $\underline{\mathbb{R}}^{\leftrightarrow}:\mathcal{C}^{op}\to\mathbf{Set}$, defined by
\begin{equation} \label{equ: valobj}
\underline{\mathbb{R}}^{\leftrightarrow}(C)=\{(r,s)\in\text{OP}(\downarrow C,\mathbb{R})\times\text{OR}(\downarrow C,\mathbb{R})\mid r\leq s\},
\end{equation}
where the restriction map corresponding to the inclusion $D\subseteq C$, maps $(r,s)$ to $(r|_{\downarrow D},s|_{\downarrow D})$. This object is closely related to two different kinds of real numbers in the topos $[\mathcal{C}^{op},\mathbf{Set}]$. Using the natural numbers $\underline{\mathbb{N}}$ of this topos, we can construct real numbers as we would in the topos $\mathbf{Set}$. However, the axiom of choice and law of excluded middle are not validated in the presheaf topos $[\mathcal{C}^{op},\mathbf{Set}]$. This entails that constructions that yield the same set of real numbers in the topos $\mathbf{Set}$, may yield different objects in the topos $[\mathcal{C}^{op},\mathbf{Set}]$. In particular, we will be interested in the three versions of real numbers in the following definition.

\begin{dork}
Consider the following versions of real numbers:
\begin{itemize}
\item The \textbf{lower real numbers}, $\mathbb{R}_{l}$, are the rounded down-closed subsets of $\mathbb{Q}$, where $\underline{x}\subseteq\mathbb{Q}$ is called rounded if $p\in\underline{x}$ implies that there exists a $p<q\in\mathbb{Q}$ such that $q\in\underline{x}$, and $\underline{x}\subseteq\mathbb{Q}$ is called down-closed if $p<q\in\underline{x}$ implies that $p\in\underline{x}$. If $\underline{x}\in\mathbb{R}_{l}$ and $q\in\mathbb{Q}$, then we write $q<\underline{x}$ whenever $q$ is in $\underline{x}$.
\item The \textbf{upper real numbers}, $\mathbb{R}_{u}$ , are the rounded up-closed subsets of $\mathbb{Q}$. In this case rounded means that if $p\in\bar{x}$ then there exists a $q<p$ such that $q\in\bar{x}$.
If $\bar{x}\in\mathbb{R}_{u}$ and $q\in\mathbb{Q}$, then we write $\bar{x}<q$ whenever $q$ is in $\bar{x}$.
\item The \textbf{Dedekind real numbers}, $\mathbb{R}_{d}$, are pairs $\langle\underline{x},\bar{x}\rangle$, where $\underline{x}\in\mathbb{R}_{l}$ is non-empty, $\bar{x}\in\mathbb{R}_{u}$ is non-empty, $\underline{x}\cap\bar{x}=\emptyset$, and $\underline{x}$ and $\bar{x}$ are arbitrarily close, in that if $q,r\in\mathbb{Q}$, with $q<r$, then either $q<\underline{x}$ or $\bar{x}>r$.
\end{itemize}
\end{dork}

Note that by the above definition the sets $\mathbb{Q}$ and $\emptyset$ are lower and upper real numbers. If we exclude $\mathbb{Q}$ and $\emptyset$, we note that in the topos $\mathbf{Set}$, any lower real can be identified with its supremum, and any upper real with its infimum. Therefore all three versions of real numbers given above can be identified with each other and with $\mathbb{R}$ (or with $\mathbb{R}$ extended with $\{-\infty,+\infty\}$, if we want to include $\mathbb{Q}$ and $\emptyset$). The definitions given above make sense internally to every topos that has a natural numbers object, and hence in particular to every Grothendieck topos. In such a topos $\mathcal{E}$, these constructively different notions of real numbers need not correspond to the same object, as they do in $\mathbf{Set}$.

Even though the sets $\mathbb{R}_{l}$, $\mathbb{R}_{u}$ and $\mathbb{R}_{d}$ coincide in $\mathbf{Set}$, the natural topologies on these sets differ. The topology on $\mathbb{R}_{l}$ is the topology generated by upper half intervals $(y,+\infty]$, $y\in\mathbb{R}$. The topology on $\mathbb{R}_{u}$ is the topology generated by half open intervals $[-\infty,y)$, with $y\in\mathbb{R}$. The topology on $\mathbb{R}_{d}$ is the familiar Hausdorff topology on $\mathbb{R}$ generated by the open intervals $(x,y)$, with $x,y\in\mathbb{R}$.

The previous statement requires some clarification. In what sense are these topologies natural? Each of the real numbers of the definition can be captured by a propositional theory, within the constraints of geometric logic~\cite{vic, jh1}. To such a theory we can associate a frame, just like one associates a Lindenbaum algebra to a classical propositional theory. The points of this frame are the (standard) models of the theory, which, in our case, are the real numbers. The topologies that we consider are the (Lindenbaum) frames of the corresponding theories.

What do the lower, upper and Dedekind reals look like in the topos $[\mathcal{C}^{op},\mathbf{Set}]$? As these reals are defined by a geometric propositional theory, we can view them as either locales (whose frame is constructed like the Lindenbaum algebra of a classical propositional theory) or as sets (the set of models of the theory). We will also describe them as internal topological spaces, which will be convenient when we consider daseinised self-adjoint operators.

The fact that these real numbers are described by propositional geometric theories also makes it easy to find the external description of their frames. Under the identification of the category of locales internal to $[\mathcal{C}^{op},\mathbf{Set}]$ with the category of locales over $\mathcal{C}_{\downarrow}$ with the Alexandroff down set topology, the different kinds of real numbers are given by the following bundles.

\begin{lem}
The external description of the locales of lower, upper and Dedekind real numbers in $[\mathcal{C}^{op},\mathbf{Set}]$ is given by the bundles
\begin{equation} \label{equ: lower}
\pi_{1}:\mathcal{C}_{\downarrow}\times\mathbb{R}_{\alpha}\to\mathcal{C}_{\downarrow},\ \ (C,x)\mapsto C,
\end{equation}
where for $\alpha$ we may take $l$, $u$ or $d$, and $\mathbb{R}_{\alpha}$ is viewed as a topological space in $\mathbf{Set}$ with the topologies given above. 
\end{lem}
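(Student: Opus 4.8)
The plan is to use the fact that each of the three kinds of real numbers is presented by a geometric propositional theory $\mathbb{T}_\alpha$, and that the formation of the classifying locale of a geometric theory is stable under pullback along the canonical geometric morphism from the slice topos. Concretely, under the identification $\topos \cong Sh(\mathcal{C}_\downarrow)$ (and the further equivalence $\mathbf{Loc}_{Sh(\mathcal{C}_\downarrow)} \simeq \mathbf{Loc}/\mathcal{C}_\downarrow$ recorded in the earlier remark citing \cite[Section C1.6]{jh1}), the internal locale presented by $\mathbb{T}_\alpha$ corresponds externally to the bundle whose fibre over each point is the $\mathbf{Set}$-based classifying space of $\mathbb{T}_\alpha$. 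Since $\mathbb{T}_l$, $\mathbb{T}_u$, $\mathbb{T}_d$ classify, in $\mathbf{Set}$, the spaces $\mathbb{R}_l$, $\mathbb{R}_u$, $\mathbb{R}_d$ with the topologies already described (lower, upper, Euclidean), and since these theories are \emph{propositional} — their classifying locales have no dependence on the base beyond pullback — the external bundle is simply the projection $\mathcal{C}_\downarrow \times \mathbb{R}_\alpha \to \mathcal{C}_\downarrow$.

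To make this concrete rather than invoke classifying-topos machinery wholesale, I would argue directly at the level of frames. The frame $\mathcal{O}\mathbb{R}_l$ internal to $\topos$ is generated, qua geometric theory, by formal basic opens $(q,\infty]$ indexed by $q \in \mathbb{Q}$ (equivalently by the rationals object $\underline{\mathbb{Q}}$, which in $\topos$ is the constant presheaf $\mathbb{Q}$), subject to the geometric axioms defining the lower-reals theory (roundedness, and the covering relations coming from the order on $\mathbb{Q}$). Because $\underline{\mathbb{Q}}$ is a constant presheaf and all the axioms are geometric sequents between these basic opens with rational parameters, computing the generated internal frame at each stage $C \in \mathcal{C}$ and transporting along the equivalence with $\mathbf{Loc}/\mathcal{C}_\downarrow$ yields, over each $C$, a copy of the ordinary frame $\mathcal{O}\mathbb{R}_l$ of lower reals in $\mathbf{Set}$, with the restriction maps $D \subseteq C$ acting as the identity. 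This is exactly the data of the bundle $\pi_1 : \mathcal{C}_\downarrow \times \mathbb{R}_l \to \mathcal{C}_\downarrow$: a continuous map from $\mathbf{Set}$ whose frame over $\mathcal{C}_\downarrow$, computed as in \cite[Section C1.6]{jh1}, is the sheaf $C \mapsto \mathcal{O}(\downarrow C \times \mathbb{R}_l)$. Checking that this sheaf of frames coincides with the internal frame presented by $\mathbb{T}_l$ is then a matter of matching generators and relations. The cases $\alpha = u$ and $\alpha = d$ are handled identically, using the upper-half-open basis and the (two-sided) open-interval basis respectively, together with the extra "located"/apartness axiom in the Dedekind case which cuts down the presentation but does not introduce base-dependence.

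The main obstacle I anticipate is not conceptual but bookkeeping: one must be careful that "presented by a geometric propositional theory" genuinely gives pullback-stability and hence a \emph{product} bundle, rather than something twisted over $\mathcal{C}_\downarrow$. The point that rescues us is precisely propositionality — the theory has no sorts other than truth values, so its classifying locale is the localic reflection of a frame built functorially from $\underline{\mathbb{Q}}$, and $\underline{\mathbb{Q}}$ is constant on the connected-enough pieces of $\mathcal{C}$ (indeed globally constant as a presheaf). A secondary point requiring care is the identification of the correct topology on $\mathbb{R}_\alpha$ in $\mathbf{Set}$: one must confirm that the Lindenbaum frame of $\mathbb{T}_l$ really is the Scott/lower topology on $\mathbb{R}$ (generated by $(y,\infty]$), and likewise for $\mathbb{T}_u$ and $\mathbb{T}_d$ — but this is exactly the content of the discussion preceding the lemma, which we are entitled to assume. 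Once these are in place, the identification of the external bundle as the trivial projection is immediate, and the proof reduces to citing the equivalence $\mathbf{Loc}_{Sh(\mathcal{C}_\downarrow)} \simeq \mathbf{Loc}/\mathcal{C}_\downarrow$ and the fact that a geometric propositional theory interpreted in $\topos$ has the same models, fibrewise, as in $\mathbf{Set}$.
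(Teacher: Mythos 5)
Your proposal is correct and follows essentially the same route as the paper, which justifies the lemma precisely by observing that $\mathbb{R}_{l}$, $\mathbb{R}_{u}$ and $\mathbb{R}_{d}$ are presented by propositional geometric theories and then citing \cite[Section D4.7]{jh1} for the resulting constant (product) bundle description over $\mathcal{C}_{\downarrow}$; your fibrewise generators-and-relations sketch just fills in the details of that citation. The only point the paper adds that you omit is the remark (in its footnote) that Johnstone's treatment assumes a sober base, so one formally passes to the sobrification of $\mathcal{C}_{\downarrow}$, which yields the same frames.
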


A discussion why (\ref{equ: lower}) gives the right description can be found in\footnote{Actually, in \cite{jh1} it is assumed that we are working over a sober space. We could consider the sobrification of $\mathcal{C}^{\downarrow}_{A}$ and consider the upper, lower and Dedekind reals over this space. However, this leads to the same frames we are using.} of \cite[Section D4.7]{jh1}. The bundle (\ref{equ: lower}), with $\alpha=l$ describes the lower reals as a locale. The corresponding internal set of lower reals in $[\mathcal{C}^{op},\mathbf{Set}]$ (the set of points of the locale) is given by the functor
\begin{equation*}
\underline{\mathbb{R}}_{l}:\mathcal{C}^{op}\to\mathbf{Set}, \ \ \underline{\mathbb{R}}_{l}(C)=C((\downarrow C),\mathbb{R}_{l}),
\end{equation*}
the presheaf of (Alexandroff) continuous functions taking values in $\mathbb{R}_{l}$. For any topological space $X$, a function $\mu:X\to\mathbb{R}_{l}$ is continuous iff it is lower semicontinuous, when seen as a function $\mu:X\to\mathbb{R}$. By definition of the down set topology on $\mathcal{C}$ the function $\mu$ is lower semicontinuous iff it is order reversing. In the contravariant approach the following presheaf plays an important r\^ole, see for example~\cite[Definition 8.2]{di}:
\begin{equation*}
\underline{\mathbb{R}}^{\preceq}:\mathcal{C}^{op}\to\mathbf{Set},\ \ \mathbb{R}^{\preceq}(C)=\text{OR}((\downarrow C),\mathbb{R}).
\end{equation*}
We recognise the presheaf $\underline{\mathbb{R}}^{\preceq}$ as the presheaf of lower real numbers $\underline{\mathbb{R}}_{l}$. In the same way the sets of upper and Dedekind real numbers in $[\mathcal{C}^{op},\mathbf{Set}]$ can be described.

\begin{lem}
Externally, the set of lower real numbers in $[\mathcal{C}^{op},\mathbf{Set}]$ is the presheaf
\begin{equation*}
\underline{\mathbb{R}}_{l}:\mathcal{C}^{op}\to\mathbf{Set},\ \ \underline{\mathbb{R}}_{l}(C)=OR((\downarrow C),\mathbb{R}).
\end{equation*}
Externally, the set of upper real numbers in $[\mathcal{C}^{op},\mathbf{Set}]$ is the presheaf
\begin{equation*}
\underline{\mathbb{R}}_{u}:\mathcal{C}^{op}\to\mathbf{Set},\ \ \underline{\mathbb{R}}_{u}(C)=OP((\downarrow C),\mathbb{R}).
\end{equation*}
The set of Dedekind real numbers $\underline{\mathbb{R}}_{d}$ of $[\mathcal{C}^{op},\mathbf{Set}]$ is externally given by the constant functor $\Delta(\mathbb{R})$.
\end{lem}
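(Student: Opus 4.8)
The plan is to derive all three cases from the external description of internal locales already established in the previous lemma, together with the standard fact (see \cite[Section C1.6]{jh1} and \cite{moerdijk}) that the internal set of points of an internal locale $L$ in $Sh(T)$, presented externally by a bundle $p\colon E\to T$ in $\mathbf{Loc}/T$, is the sheaf on $T$ whose sections over an open $U\subseteq T$ are the continuous (locale) sections of $p$ over $U$. Since by the previous lemma each of $\mathbb{R}_{l}$, $\mathbb{R}_{u}$, $\mathbb{R}_{d}$ is presented over $T=\Cd$ by the product projection $\pi_{1}\colon\Cd\times\mathbb{R}_{\alpha}\to\Cd$, a section over $U$ is just the graph of a continuous map $U\to\mathbb{R}_{\alpha}$. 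Under the equivalence $\topos\simeq Sh(\Cd)$ the representable $y(C)$ corresponds to the open $\downarrow C$, and a presheaf is determined by its values on representables, so this identifies $\underline{\mathbb{R}}_{\alpha}(C)$ with the set $C(\downarrow C,\mathbb{R}_{\alpha})$ of continuous functions $\downarrow C\to\mathbb{R}_{\alpha}$, with restriction along $D\subseteq C$ given by restricting functions along $\downarrow D\subseteq\downarrow C$. This is exactly the reasoning already carried out in the text for $\alpha=l$.

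It then remains to identify these function sets concretely, using the topologies on $\mathbb{R}_{l},\mathbb{R}_{u},\mathbb{R}_{d}$ recorded above and the Alexandroff down-set topology on $\downarrow C$, for which the minimal open neighbourhood of $D$ is $\downarrow D$. For $\alpha=l$ a map $\mu\colon\downarrow C\to\mathbb{R}$ is continuous into $\mathbb{R}_{l}$ iff it is lower semicontinuous, iff (by the above neighbourhood description) it is order-reversing, which gives $\underline{\mathbb{R}}_{l}(C)=\text{OR}(\downarrow C,\mathbb{R})$, reproducing the presheaf $\underline{\mathbb{R}}^{\preceq}$. Dually, continuity into $\mathbb{R}_{u}$ is upper semicontinuity, equivalently order-preservation, giving $\underline{\mathbb{R}}_{u}(C)=\text{OP}(\downarrow C,\mathbb{R})$. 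For $\alpha=d$ the target carries the Hausdorff topology, so for each $D\leq C$ the image $f(\downarrow D)$ of the minimal open neighbourhood of $D$ must be contained in every open set containing $f(D)$, hence equals $\{f(D)\}$; thus $f$ is constant on $\downarrow C$, so $\underline{\mathbb{R}}_{d}(C)\cong\mathbb{R}$ naturally in $C$, i.e.\ $\underline{\mathbb{R}}_{d}=\Delta(\mathbb{R})$. A final check that the restriction maps correspond under these identifications to the evident ones (function restriction, resp.\ the identity in the Dedekind case) is immediate.

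The main obstacle is not any of these translations, each of which is routine, but making precise and correctly invoking the bridge between the \emph{internal set of points of the geometric-theory locale} and the \emph{object of $\alpha$-reals defined directly from the natural numbers object}, together with the ``sheaf of sections'' description of that set of points; in particular one must take care, as in the footnote to the previous lemma, that passing to the sobrification of $\Cd$ leaves the relevant frames unchanged, and that the section description is being applied to the \emph{product} bundle $\pi_{1}$ rather than to a general bundle over $\Cd$. A fully self-contained alternative would bypass the locale picture and compute $\underline{\mathbb{R}}_{l}$, $\underline{\mathbb{R}}_{u}$, $\underline{\mathbb{R}}_{d}$ by unwinding the definitions of rounded down-closed, resp.\ up-closed, subsets of $\mathbb{Q}$ and of Dedekind cuts directly with presheaf (Kripke--Joyal) semantics over $\mathcal{C}$; this is more calculational but uses only machinery already introduced, and rederives the same three presheaves.
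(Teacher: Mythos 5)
Your proposal is correct and follows essentially the same route as the paper, which justifies this lemma by the preceding discussion: take the external bundle description of the $\alpha$-reals from the previous lemma, identify the internal set of points with the sheaf of continuous sections of $\pi_{1}:\Cd\times\mathbb{R}_{\alpha}\to\Cd$ over the opens $\downarrow C$, and then translate continuity into $\mathbb{R}_{l}$, $\mathbb{R}_{u}$, $\mathbb{R}_{d}$ on the Alexandroff space $\downarrow C$ into order-reversing, order-preserving, and constant functions respectively. Your extra cautions (sobrification, the product bundle, and the identification of the theory's model object with the locale's points) match the paper's footnote and citation of \cite[Section D4.7]{jh1}, so no further comparison is needed.
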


The next corollary described the value object of the contravariant model internally. It uses the following notation
\begin{equation*}
\forall\underline{x}\in\underline{\mathbb{R}}_{l}\ \forall\epsilon\in\underline{\mathbb{Q}}\ \ \underline{x}+\epsilon:=\{q+r\mid q<\underline{x},\ r<\epsilon\};
\end{equation*}
\begin{equation*}
\forall\overline{x}\in\underline{\mathbb{R}}_{u}\ \forall\epsilon\in\underline{\mathbb{Q}}\ \ \overline{x}+\epsilon:=\{q+r\mid q>\underline{x},\ r>\epsilon\}.
\end{equation*}
If we view a rational number $\epsilon$ as an upper or lower real number 
\begin{equation*}
\underline{\epsilon}=\{q\in\mathbb{Q}\mid q<\epsilon\},\ \ \overline{\epsilon}=\{q\in\mathbb{Q}\mid q>\epsilon\},
\end{equation*}
then $\underline{x}+\epsilon$ coincides with the sum $\underline{x}+\underline{\epsilon}$, and $\overline{x}-\epsilon$ coincides with both $\overline{x}+(\overline{-\epsilon})$, and $\overline{x}-\underline{\epsilon}$, where addition and subtraction are defined as in~\cite{vic2}

\begin{cor}
The presheaf (\ref{equ: valobj}) is the external description of the internal set
\begin{equation*}
\{(\overline{x},\underline{x})\in\underline{\mathbb{R}}_{u}\times\underline{\mathbb{R}}_{l}\mid\forall\epsilon\in\underline{\mathbb{Q}}^{+}\ \overline{x}-\epsilon<\underline{x}+\epsilon\},
\end{equation*}
where, $\overline{x}-\epsilon<\underline{x}+\epsilon$ means that  $(\overline{x}-\epsilon)\cap(\underline{x}+\epsilon)$ contains a rational number. 
\end{cor}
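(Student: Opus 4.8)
The plan is to show that the internal set $\{(\overline{x},\underline{x})\in\underline{\mathbb{R}}_{u}\times\underline{\mathbb{R}}_{l}\mid\forall\epsilon\in\underline{\mathbb{Q}}^{+}\ \overline{x}-\epsilon<\underline{x}+\epsilon\}$ and the presheaf $\underline{\mathbb{R}}^{\leftrightarrow}$ of (\ref{equ: valobj}) have, context by context, the same elements, and that the restriction maps agree. By the previous lemma we already know the external descriptions $\underline{\mathbb{R}}_{u}(C)=\mathrm{OP}((\downarrow C),\mathbb{R})$ and $\underline{\mathbb{R}}_{l}(C)=\mathrm{OR}((\downarrow C),\mathbb{R})$, so the products $\underline{\mathbb{R}}_{u}\times\underline{\mathbb{R}}_{l}$ evaluated at $C$ is $\mathrm{OP}((\downarrow C),\mathbb{R})\times\mathrm{OR}((\downarrow C),\mathbb{R})$, matching the ambient product in (\ref{equ: valobj}) after renaming $(\overline{x},\underline{x})$ as $(s,r)$ with $s$ order-preserving and $r$ order-reversing. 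What remains is to translate the internal predicate $\forall\epsilon\in\underline{\mathbb{Q}}^{+}\ \overline{x}-\epsilon<\underline{x}+\epsilon$ into the external condition $r\leq s$, i.e.\ $r(D)\leq s(D)$ for all $D\subseteq C$.

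First I would unwind the presheaf semantics of the bounded universal quantifier over $\underline{\mathbb{Q}}^{+}$: $C\Vdash\forall\epsilon\in\underline{\mathbb{Q}}^{+}\,\psi(\epsilon)$ holds iff for every $D\subseteq C$ and every (necessarily locally constant, hence globally constant on $\downarrow D$) positive rational $\epsilon\in\underline{\mathbb{Q}}^{+}(D)$ we have $D\Vdash\psi(\epsilon)$. Since $\underline{\mathbb{Q}}$ is the constant presheaf $\Delta(\mathbb{Q})$, this reduces to: for all $D\subseteq C$ and all rationals $\epsilon>0$, $D\Vdash(\overline{x}-\epsilon<\underline{x}+\epsilon)$. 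Next I would interpret the relation $<$ between the upper real $\overline{x}-\epsilon$ and the lower real $\underline{x}+\epsilon$ as stated in the corollary — it asserts that $(\overline{x}-\epsilon)\cap(\underline{x}+\epsilon)$ contains a rational. Chasing through the definitions of the displayed addition/subtraction on lower and upper reals and the local descriptions $\overline{x}=s$, $\underline{x}=r$ restricted to $\downarrow D$, the germ at $D$ of $\underline{x}+\epsilon$ is the down-set $\{q\in\mathbb{Q}\mid q<r(D)+\epsilon\}$ and that of $\overline{x}-\epsilon$ is the up-set $\{q\in\mathbb{Q}\mid q>s(D)-\epsilon\}$; their intersection is non-empty (contains a rational) precisely when $s(D)-\epsilon<r(D)+\epsilon$, i.e.\ $s(D)-r(D)<2\epsilon$. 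Quantifying over all $\epsilon>0$ gives $s(D)-r(D)\leq 0$, that is $r(D)\leq s(D)$; quantifying over all $D\subseteq C$ gives exactly the condition $r\leq s$ defining $\underline{\mathbb{R}}^{\leftrightarrow}(C)$. The converse direction (from $r\leq s$ back to the forcing of the internal formula) is the same computation read backwards.

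Finally I would check naturality: the restriction map of the internal subobject along $D\subseteq C$ is inherited from $\underline{\mathbb{R}}_{u}\times\underline{\mathbb{R}}_{l}$, which sends $(s,r)$ to $(s|_{\downarrow D},r|_{\downarrow D})$, and this is precisely the restriction map declared for $\underline{\mathbb{R}}^{\leftrightarrow}$ in (\ref{equ: valobj}); the defining inequality is evidently preserved under restriction, so the bijection is natural in $C$. The main obstacle I anticipate is purely bookkeeping rather than conceptual: one must be careful that the local sections of $\underline{\mathbb{Q}}$, $\underline{\mathbb{R}}_{l}$ and $\underline{\mathbb{R}}_{u}$ over $\downarrow D$ are handled correctly (a ``rational in $\underline{\mathbb{Q}}(D)$'' is genuinely a single rational because $\downarrow D$ is connected in the Alexandroff topology, being a poset with top element $D$), and that the arithmetic operations $\overline{x}-\epsilon$ and $\underline{x}+\epsilon$ from \cite{vic2} match the naive pointwise operations $D\mapsto s(D)-\epsilon$ and $D\mapsto r(D)+\epsilon$ on representatives — this uses rounded\-ness of the cuts in an essential way. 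Once that identification is in place, the equivalence of the two descriptions of the inequality is immediate, and the corollary follows.
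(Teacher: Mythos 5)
Your verification is correct and is precisely the computation the paper leaves implicit: the corollary is stated there without proof as an immediate consequence of the preceding lemma, and your steps — externalising the quantifier over the constant presheaf $\underline{\mathbb{Q}}^{+}$ via presheaf semantics, computing the germs of $\overline{x}-\epsilon$ and $\underline{x}+\epsilon$ at each $D\subseteq C$ as $\{q>(\text{upper value})-\epsilon\}$ and $\{q<(\text{lower value})+\epsilon\}$, and quantifying over $\epsilon$ to get the pointwise inequality, then checking naturality — are the intended argument. One small caution: your renaming $(\overline{x},\underline{x})\mapsto(s,r)$ reverses the roles of $r$ and $s$ relative to (\ref{equ: valobj}) (there $r$ denotes the order-preserving component), so the phrase ``$s(D)-r(D)\leq 0$, that is $r(D)\leq s(D)$'' silently switches back to the paper's convention; the inequality you actually derive — the upper real's value is at most the lower real's value at each $D$ — is the correct condition and matches $r\leq s$ in the paper's notation.
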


In what follows, we would like to view the upper and lower real numbers as internal spaces. Consider $\underline{\mathbb{R}}_{l}$, the internal set of real numbers in $[\mathcal{C}^{op}_{A},\mathbf{Set}]$. The corresponding \'etale bundle is given by $\pi_{l}:\mathcal{R}^{\downarrow}_{l,A}\to\mathcal{C}^{\downarrow}_{A}$, where
\begin{equation*}
\mathcal{R}^{\downarrow}_{l,A}=\{(C,s)\mid C\in\mathcal{C}, s\in OR((\downarrow C),\mathbb{R})\},
\end{equation*}
and $U\subseteq\mathcal{R}^{\downarrow}_{\,A}$ is open with respect to the \'etale topology iff
\begin{equation*}
\text{If}\ \ (C,s)\in U\ \ \text{and}\ \  D\subseteq C\ \ \text{then}\ \  (D,s|_{\downarrow D})\in U.
\end{equation*}

Provide $\mathcal{R}^{\downarrow}_{l,A}$ with the coarser topology generated by the \'etale opens
\begin{equation} \label{equ: basicR}
U_{x,C}=\{(D,s)\in\mathcal{R}^{\downarrow}_{l,A}\mid D\in(\downarrow C),\ s(D)>x\},\ \ C\in\mathcal{C}_{A}, x\in\mathbb{R}.
\end{equation}
Note that with respect to this topology, the function
\begin{equation*}
j:\mathcal{R}^{\downarrow}_{l,A}\to\mathcal{C}^{\downarrow}_{A}\times\mathbb{R}_{l},\ \ (C,s)\mapsto(C,s(C))
\end{equation*}
is a continuous map over $\mathcal{C}^{\downarrow}_{l,A}$, and the inverse image map $j^{-1}$ is an isomorphism of frames on the topologies. Whenever we want to see the internal lower reals as a topological space rather than a locale, we can use the bundle  $\pi_{l}:\mathcal{R}^{\downarrow}_{l,A}\to\mathcal{C}^{\downarrow}_{A}$, where $\mathcal{R}^{\downarrow}_{l,A}$ has the topology generated by (\ref{equ: basicR}). The same can be done for the upper reals using a topological space $\mathcal{R}^{\downarrow}_{u,A}$.

\subsection{Physical quantities as continuous maps} \label{subsec: physquant}

In this subsection we take an internal perspective on daseinised selfadjoint operators, by thinking of them as continuous functions from the space of states to the space of values. We have two reasons for this. The obvious one is that we are investigating the interplay between the internal language of the topoi and neorealism, i.e. formal proximity to classical structures. The second reason is that we want to investigate to what extent the elementary propositions $[a\in\Delta]$ can be obtained in an internal way. These propositions are labelled by opens $\Delta\in\mathcal{O}\mathbb{R}$ and have no obvious relation to the internal value object $\mathcal{R}$. Ideally, we would like to relate opens $\Delta\in\mathcal{O}\mathbb{R}$ to subobjects $\underline{\Delta}\subseteq\mathcal{R}$, such that for an operator $a$, represented by an arrow $\delta(a):\Sigma\to\mathcal{R}$, the elementary open $[a\in\Delta]$ is obtained internally as $\delta(a)^{-1}(\underline{\Delta})$. 

From Subsection~\ref{subsec: review} we know that for any $a\in A_{sa}$ and $C\in\mathcal{C}$, outer daseinisation provides an element $\delta^{o}(a)_{C}\in C_{sa}$. By Gelfand duality, we can see this as a continuous map 
\begin{equation*}
\widehat{\delta^{o}(a)_{C}}:\Sigma_{C}\to\mathbb{R}.
\end{equation*}
If $D\subseteq C$, then $\delta^{o}(a)_{D}\geq\delta^{o}(a)_{C}$ by definition of outer daseinisation. This entails
\begin{equation*}
\forall a\in A_{sa}\ \ \forall C\in\mathcal{C}\ \ \forall \lambda\in\Sigma_{C}\ \ \widehat{\delta^{o}(a)_{C}}(\lambda)\leq\widehat{\delta^{o}(a)_{D}}(\lambda|_{D}),
\end{equation*}
as this follows straight from
\begin{equation*}
\lambda(\delta^{o}(a)_{C})=\langle\delta^{o}(a)_{C},\lambda\rangle\leq\langle\delta^{o}(a)_{D},\lambda\rangle=\langle\delta^{o}(a)_{D},\lambda|_{D}\rangle=\lambda|_{D}(\delta^{o}(a)_{D}).
\end{equation*}
For a fixed $a\in A_{sa}$, and varying $C\in\mathcal{C}$, we can combine these maps into a single arrow $\underline{\delta^{o}(a)}:\underline{\Sigma}\to\underline{\mathbb{R}}^{\preceq}$. So internally $\underline{\delta^{o}(a)}$ defines a function from the spectral presheaf to the set of lower real numbers. The arrow is given by 
\begin{equation*}
\underline{\delta^{o}(a)}_{C}: \Sigma_{C}\to\text{OR}((\downarrow C),\mathbb{R}), \ \ \underline{\delta^{o}(a)}_{C}(\lambda)(D)=\langle\lambda,\delta^{o}(a)_{D}\rangle.
\end{equation*}

\begin{poe} \label{poe: outdas}
The function $\underline{\delta^{o}(a)}:\uS\to\underline{\mathbb{R}}_{l}$ is a continuous map of internal topological spaces.
\end{poe}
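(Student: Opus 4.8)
The plan is to verify continuity using the external description of internal topological spaces in $Sh(\mathcal{C}_\downarrow)$. By the discussion in Subsection~\ref{subsec: space} and the material on continuous maps, the arrow $\underline{\delta^o(a)}:\uS\to\underline{\mathbb{R}}_l$ is an internal continuous map exactly when the corresponding map of \'etale bundles $f:\Sigma_e\to\mathcal{R}^\downarrow_{l,A}$ over $\mathcal{C}_\downarrow$, given by $(C,\lambda)\mapsto(C,\underline{\delta^o(a)}_C(\lambda))$ with $\underline{\delta^o(a)}_C(\lambda)(D)=\langle\lambda|_D,\delta^o(a)_D\rangle$, is continuous not just for the \'etale topologies but for the coarser topologies: the one of Definition/Proposition~\ref{prop: bundle} on $\Sigma_\downarrow$ and the one generated by the basic opens $U_{x,C}$ of~(\ref{equ: basicR}) on $\mathcal{R}^\downarrow_{l,A}$. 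So the real content is: pull back a basic open $U_{x,C}$ along $f$ and check the preimage is open in $\Sigma_\downarrow$.

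The key steps, in order: first, I would fix $C\in\mathcal{C}$ and $x\in\mathbb{R}$ and compute $f^{-1}(U_{x,C})$ explicitly. Unwinding the definitions, $(D,\lambda)\in f^{-1}(U_{x,C})$ iff $D\subseteq C$ and $\underline{\delta^o(a)}_D(\lambda)(D)>x$, i.e. iff $D\subseteq C$ and $\langle\lambda,\delta^o(a)_D\rangle>x$; here I use that evaluating the order-reversing function $\underline{\delta^o(a)}_D(\lambda)$ at its top context $D$ gives $\lambda(\delta^o(a)_D)=\widehat{\delta^o(a)_D}(\lambda)$. Second, I would observe that for each such $D$, the set $\{\lambda\in\Sigma_D\mid\widehat{\delta^o(a)_D}(\lambda)>x\}$ is open in $\Sigma_D$, because $\widehat{\delta^o(a)_D}:\Sigma_D\to\mathbb{R}$ is continuous by Gelfand duality; call this open set $W_D$. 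Third, I must check the two conditions of Definition~1.11 (the defining conditions of $\mathcal{O}\Sigma_\downarrow$): condition (2), that each fibre of $f^{-1}(U_{x,C})$ is open in the corresponding $\Sigma_D$, is immediate from the previous step (the fibre over $D$ is $W_D$ if $D\subseteq C$ and empty otherwise); condition (1), the downward compatibility $\lambda\in W_D,\ D'\subseteq D\Rightarrow\lambda|_{D'}\in W_{D'}$, is exactly the inequality $\widehat{\delta^o(a)_D}(\lambda)\leq\widehat{\delta^o(a)_{D'}}(\lambda|_{D'})$ established just before the proposition (derived from $\delta^o(a)_{D'}\geq\delta^o(a)_D$). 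Once both conditions hold, $f^{-1}(U_{x,C})$ is open in $\Sigma_\downarrow$, and since these opens generate the topology of $\mathcal{R}^\downarrow_{l,A}$, $f$ is continuous, hence $\underline{\delta^o(a)}$ is internally continuous.

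I would also note at the outset that $f$ is automatically continuous for the \'etale topologies — this just says $\underline{\delta^o(a)}$ is a genuine sheaf morphism (a natural transformation of presheaves), which amounts to the restriction $\underline{\delta^o(a)}_C(\lambda)|_{\downarrow D}=\underline{\delta^o(a)}_D(\lambda|_D)$, and this holds because both sides send $D'\subseteq D$ to $\langle\lambda|_{D'},\delta^o(a)_{D'}\rangle$. The main obstacle, such as it is, is bookkeeping rather than depth: one must be careful that the coarse basic opens $U_{x,C}$ really do generate (not merely form a subbasis for) the intended topology, and that condition (1) of Definition~1.11 is the correct compatibility — but this is precisely the monotonicity of outer daseinisation under shrinking contexts, which is already in hand. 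So the proof is essentially a two-line verification: fibrewise openness from Gelfand continuity, and downward compatibility from the daseinisation inequality.

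\begin{proof}
By the external description of internal topological spaces, $\underline{\delta^o(a)}$ is a continuous map $\uS_\downarrow\to\underline{\mathbb{R}}_l$ internally iff the associated bundle map
\begin{equation*}
f:\Sigma_\downarrow\to\mathcal{R}^\downarrow_{l,A},\ \ (C,\lambda)\mapsto\big(C,\underline{\delta^o(a)}_C(\lambda)\big),
\end{equation*}
is continuous for the topologies of Proposition~\ref{prop: bundle} on the source and of~(\ref{equ: basicR}) on the target (it is automatically a sheaf morphism, hence continuous for the \'etale topologies, since for $D\subseteq C$ both $\underline{\delta^o(a)}_C(\lambda)|_{\downarrow D}$ and $\underline{\delta^o(a)}_D(\lambda|_D)$ send $D'\subseteq D$ to $\langle\lambda|_{D'},\delta^o(a)_{D'}\rangle$). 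It suffices to show $f^{-1}(U_{x,C})$ is open in $\Sigma_\downarrow$ for each basic open $U_{x,C}$ of~(\ref{equ: basicR}).

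Unwinding the definitions, $(D,\lambda)\in f^{-1}(U_{x,C})$ iff $D\subseteq C$ and $\underline{\delta^o(a)}_D(\lambda)(D)>x$, i.e. iff $D\subseteq C$ and $\widehat{\delta^o(a)_D}(\lambda)>x$, where we used $\underline{\delta^o(a)}_D(\lambda)(D)=\langle\lambda,\delta^o(a)_D\rangle=\widehat{\delta^o(a)_D}(\lambda)$. Writing $W_D=\{\lambda\in\Sigma_D\mid\widehat{\delta^o(a)_D}(\lambda)>x\}$, the fibre of $f^{-1}(U_{x,C})$ over $D$ equals $W_D$ when $D\subseteq C$ and is empty otherwise. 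Each $W_D$ is open in $\Sigma_D$ since $\widehat{\delta^o(a)_D}:\Sigma_D\to\mathbb{R}$ is continuous by Gelfand duality; this gives condition (2) of Definition~\ref{prop: bundle}. For condition (1), suppose $\lambda\in W_D$ with $D\subseteq C$ and let $D'\subseteq D$. By the monotonicity $\delta^o(a)_{D'}\geq\delta^o(a)_D$ (equivalently $\delta^o(a)_{D'}\geq_s\delta^o(a)_D$) we have
\begin{equation*}
\widehat{\delta^o(a)_{D'}}(\lambda|_{D'})=\langle\lambda|_{D'},\delta^o(a)_{D'}\rangle=\langle\lambda,\delta^o(a)_{D'}\rangle\geq\langle\lambda,\delta^o(a)_D\rangle=\widehat{\delta^o(a)_D}(\lambda)>x,
\end{equation*}
so $\lambda|_{D'}\in W_{D'}$, and $D'\subseteq C$ as well. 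Hence $f^{-1}(U_{x,C})$ satisfies both conditions of Definition~\ref{prop: bundle} and is open in $\Sigma_\downarrow$. Since the opens $U_{x,C}$ generate $\mathcal{O}\mathcal{R}^\downarrow_{l,A}$, the map $f$ is continuous, and therefore $\underline{\delta^o(a)}:\uS\to\underline{\mathbb{R}}_l$ is a continuous map of internal topological spaces.
\end{proof}
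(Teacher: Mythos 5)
Your proof is correct and follows essentially the same route as the paper's: both reduce internal continuity to checking that the preimage of each basic open $U_{x,C}$ of $\mathcal{R}^{\downarrow}_{l,A}$ is open in $\Sigma^{\downarrow}_{A}$, verify fibrewise openness via Gelfand duality, and verify downward closure via the monotonicity $\delta^{o}(a)_{D'}\geq\delta^{o}(a)_{D}$ for $D'\subseteq D$. The only difference is presentational (you name the fibres $W_{D}$ and spell out the identification $\underline{\delta^{o}(a)}_{D}(\lambda)(D)=\widehat{\delta^{o}(a)_{D}}(\lambda)$ more explicitly), so there is nothing to add.
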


\begin{proof}
At the level of \'etale bundles, the natural transformation $\underline{\delta^{o}(a)}:\uS\to\underline{\mathbb{R}}_{l}$ is given by
\[ \xymatrix{
\Sigma^{\downarrow}_{A} \ar[rr]^{\delta^{o}(a)} \ar[rd]_{\pi} & & \mathcal{R}^{\downarrow}_{l,A} \ar[ld]^{\pi_{1}}\\
&\mathcal{C}^{\downarrow}_{A}} \]
where
\begin{equation*}
\delta^{o}(a)(C,\lambda)=(C,D\mapsto\langle\lambda,\delta^{o}(a)_{D}\rangle).
\end{equation*}
The function $\delta^{o}(a)$ is continuous with respect to the \'etale topologies, simply because it comes from a natural transformation, but we need to check that it is also continuous with respect to the coarser topologies, corresponding to the internal topologies. Consider the basic open $U_{x,C}$ of $\mathcal{R}^{\downarrow}_{l,A}$. Then
\begin{equation} \label{equ: wazzap}
\delta^{o}(a)^{-1}(U_{x,C})_{D}= \left\{
\begin{array}{ll}
\widehat{\delta^{o}(a)_{D}}^{-1}(x,+\infty) & \text{if } D\subseteq C\\
\emptyset & \text{if } D\nsubseteq C.
\end{array} \right.
\end{equation}
From (\ref{equ: wazzap}) it is clear that for each $D\in\mathcal{C}_{A}$, the set $\delta^{o}(a)^{-1}(U_{x,C})_{D}$ is open in $\Sigma_{D}$. Also, if $(D,\lambda)\in\delta^{o}(a)^{-1}(U_{x,C})$ and $D'\subseteq D$, then
\begin{equation*}
\langle\lambda|_{D'},\delta^{o}(a)_{D'}\rangle\geq\langle\lambda,\delta^{o}(a)_{D}\rangle>x,
\end{equation*}
so that $(D',\lambda|_{D'})\in\delta^{o}(a)^{-1}(U_{x,C})$. We conclude that $\delta^{o}(a)^{-1}(U_{x,C})$ is open in $\Sigma^{\downarrow}_{A}$ with respect to the topology generated by the closed open subobjects.
\end{proof}

Instead of continuous maps of spaces, we can view $\delta^{o}(a)$ as an internal map of locales by considering the commutative triangle
\[ \xymatrix{
\Sigma^{\downarrow}_{A} \ar[rr]^{\delta^{o}(a)} \ar[rd]_{\pi} & & \mathcal{C}^{\downarrow}_{A}\times\mathbb{R}_{l} \ar[ld]^{\pi_{1}}\\
&\mathcal{C}^{\downarrow}_{A}} \]

where $\delta^{o}(a):\Sigma\to\mathcal{C}\times\mathbb{R}_{l}$ is given by $(C,\lambda)\mapsto(C,\langle\lambda,\delta^{o}(a)_{C}\rangle)$. Under the identification of the category of locales in $[\mathcal{C}^{op},\mathbf{Set}]$ with the category of locales over $\mathcal{C}_{\downarrow}$, the triangle of locale maps over $\mathcal{C}_{\downarrow}$ corresponds to an internal locale map $\underline{\delta^{o}(a)}:\underline{\Sigma}\to\underline{\mathbb{R}}_{l}$.

Just like the presheaf of order-reversing functions, we can define the presheaf of order-preserving function $\underline{\mathbb{R}}^{\succeq}$. This presheaf coincides with presheaf of upper real numbers $\underline{\mathbb{R}}_{u}$. Inner daseinisation of a self-adjoint operator defines a natural transformation $\underline{\delta^{i}(a)}:\underline{\Sigma}\to\underline{\mathbb{R}}^{\succeq}$. We leave it to the reader to prove the following analogue of the previous proposition.

\begin{poe} \label{poe: indas}
The function  $\underline{\delta^{i}(a)}:\underline{\Sigma}\to\underline{\mathbb{R}}_{u}$ is a continuous map of internal topological spaces.
\end{poe}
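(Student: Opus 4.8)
The plan is to mirror the proof of Proposition~\ref{poe: outdas}, making the dualisations that turn ``outer'' into ``inner'', ``order-reversing'' into ``order-preserving'', and lower reals into upper reals, so that each step has an exact counterpart in the proof just given. First I would describe the natural transformation $\underline{\delta^{i}(a)}:\uS\to\underline{\mathbb{R}}_{u}$ at the level of \'etale bundles as a commuting triangle over $\mathcal{C}^{\downarrow}_{A}$ with top arrow $\delta^{i}(a):\Sigma^{\downarrow}_{A}\to\mathcal{R}^{\downarrow}_{u,A}$ given by $\delta^{i}(a)(C,\lambda)=(C,\,D\mapsto\langle\lambda,\delta^{i}(a)_{D}\rangle)$. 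Continuity with respect to the \'etale topologies is automatic, since this map arises from a natural transformation; the only thing to check is continuity for the coarser topologies corresponding to the internal topologies on $\uS_{\downarrow}$ and on $\mathcal{R}^{\downarrow}_{u,A}$.

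Next I would identify the basic opens of $\mathcal{R}^{\downarrow}_{u,A}$. Dually to~(\ref{equ: basicR}), the topology generated by the \'etale opens is the one with basic sets $U_{x,C}=\{(D,s)\in\mathcal{R}^{\downarrow}_{u,A}\mid D\in(\downarrow C),\ s(D)<x\}$, since continuity into $\mathbb{R}_{u}$ is upper semicontinuity, and upper semicontinuity for the Alexandroff down-set topology means order-preservingness. Then I compute the preimage: for a basic open $U_{x,C}$,
\begin{equation*}
\delta^{i}(a)^{-1}(U_{x,C})_{D}= \left\{
\begin{array}{ll}
\widehat{\delta^{i}(a)_{D}}^{-1}(-\infty,x) & \text{if } D\subseteq C,\\
\emptyset & \text{if } D\nsubseteq C.
\end{array} \right.
\end{equation*}
Each fibre component is open in $\Sigma_{D}$ since $\widehat{\delta^{i}(a)_{D}}$ is continuous on the compact Hausdorff space $\Sigma_{D}$, so condition~(2) of Definition~2.9 holds.

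Finally I would verify the downward-closure condition~(1) of the internal topology. If $(D,\lambda)\in\delta^{i}(a)^{-1}(U_{x,C})$ and $D'\subseteq D$, then because $D'\subseteq D$ implies $\delta^{i}(a)_{D'}\leq_{s}\delta^{i}(a)_{D}$, and on the commutative algebra $D$ the spectral order coincides with the usual order and Gelfand evaluation is order-preserving, we get $\langle\lambda|_{D'},\delta^{i}(a)_{D'}\rangle\leq\langle\lambda,\delta^{i}(a)_{D}\rangle<x$, whence $(D',\lambda|_{D'})\in\delta^{i}(a)^{-1}(U_{x,C})$. This shows the preimage of every basic open is open in $\Sigma^{\downarrow}_{A}$, so $\delta^{i}(a)$ is continuous for the coarser topologies, i.e.\ $\underline{\delta^{i}(a)}:\uS\to\underline{\mathbb{R}}_{u}$ is a continuous map of internal topological spaces.

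The only place where one must take care --- and hence the main (minor) obstacle --- is getting the direction of all inequalities consistent: inner daseinisation is decreasing in the context ($D\subseteq C\Rightarrow\delta^{i}(a)_{D}\leq_{s}\delta^{i}(a)_{C}$, the opposite of outer), the upper reals carry the $[-\infty,y)$-topology rather than the $(y,+\infty]$-topology, and order-preserving replaces order-reversing; once these three dualisations are aligned the argument is a line-by-line transcription of the proof of Proposition~\ref{poe: outdas}, and one could alternatively just remark that applying the outer-daseinisation result to $-a$ and using $\delta^{i}(a)_{C}=-\delta^{o}(-a)_{C}$ together with the homeomorphism $\mathbb{R}_{l}\cong\mathbb{R}_{u}$, $x\mapsto -x$, gives the statement immediately.
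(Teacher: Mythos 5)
Your proof is correct and is exactly what the paper intends: the paper explicitly leaves Proposition~\ref{poe: indas} to the reader as the line-by-line dual of the proof of Proposition~\ref{poe: outdas}, and your dualisations (basic opens $s(D)<x$ for $\mathcal{R}^{\downarrow}_{u,A}$, preimages $\widehat{\delta^{i}(a)_{D}}^{-1}(-\infty,x)$, and the reversed inequality $\langle\lambda|_{D'},\delta^{i}(a)_{D'}\rangle\leq\langle\lambda,\delta^{i}(a)_{D}\rangle$ for downward closure) are all the right ones. The only blemish is the reference to ``Definition~2.9'', which should point to the definition of the space $\Sigma_{\downarrow}$ in Section~\ref{subsec: space}; the closing remark via $\delta^{i}(a)_{C}=-\delta^{o}(-a)_{C}$ is also a valid shortcut.
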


\subsubsection{Covariant Version} \label{subsub: physquantco}

Before we try to connect the continuous daseinised operators to the elementary propositions we first look at the way this works in the covariant version of the topos approach. The topos $[\mathcal{C}_{A},\mathbf{Set}]$ is equivalent (even isomorphic) to the topos of sheaves over $\mathcal{C}^{\uparrow}_{A}$, the set $\mathcal{C}_{A}$ equipped with the upset Alexandroff topology. 

\begin{lem}
In $[\mathcal{C},\mathbf{Set}]$, the internal lower and upper reals (as sets) are externally given by the functors
\begin{equation*}
\underline{\mathbb{R}}_{l}:\mathcal{C}_{A}\to\mathbf{Set},\ \ \underline{\mathbb{R}}_{l}(C)=\text{OP}((\uparrow C),\mathbb{R}),
\end{equation*}
\begin{equation*}
\underline{\mathbb{R}}_{u}:\mathcal{C}_{A}\to\mathbf{Set},\ \ \underline{\mathbb{R}}_{u}(C)=\text{OR}((\uparrow C),\mathbb{R}).
\end{equation*}
\end{lem}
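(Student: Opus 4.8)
The plan is to dualise the argument already used (implicitly) for the contravariant topos. Recall from the earlier discussion that for $[\mathcal{C}^{op},\mathbf{Set}]\cong Sh(\mathcal{C}^{\downarrow}_{A})$, the internal lower reals were identified with the presheaf $C\mapsto C((\downarrow C),\mathbb{R}_l)$ of Alexandroff-continuous functions into $\mathbb{R}_l$, and we observed that a function into $\mathbb{R}_l$ is continuous iff it is lower semicontinuous, which for the down-set topology means order-reversing. The covariant case replaces $\mathcal{C}^{\downarrow}_{A}$ by $\mathcal{C}^{\uparrow}_{A}$, so the whole computation runs in parallel with ``down'' and ``up'' swapped, and ``order-reversing'' and ``order-preserving'' swapped accordingly.

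Concretely, I would proceed as follows. First, use the stated equivalence $[\mathcal{C}_A,\mathbf{Set}]\cong Sh(\mathcal{C}^{\uparrow}_{A})$, under which the object $k(C)$ (the covariant representable at $C$) corresponds to the sheaf of sections over the smallest open containing $C$ in the upset topology, namely $\uparrow C$. Second, invoke the general external description (as in \cite[Section D4.7]{jh1}, exactly as used in the lemma giving \eqref{equ: lower}): the internal locale of lower reals in $Sh(T)$ is the bundle $T\times\mathbb{R}_l\to T$, and its internal set of points is the sheaf $U\mapsto C(U,\mathbb{R}_l)$ of continuous $\mathbb{R}_l$-valued functions. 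Applying this with $T=\mathcal{C}^{\uparrow}_{A}$ gives $\underline{\mathbb{R}}_l(C)=C((\uparrow C),\mathbb{R}_l)$, and symmetrically $\underline{\mathbb{R}}_u(C)=C((\uparrow C),\mathbb{R}_u)$. Third, identify these function spaces: a function $\mu:X\to\mathbb{R}_l$ is continuous iff it is lower semicontinuous as a map $X\to\mathbb{R}$; for $X$ carrying the \emph{up}-set Alexandroff topology, the opens are the upward-closed sets, so $\mu$ is lower semicontinuous iff $\mu^{-1}(y,+\infty]$ is upward closed for each $y$, which is precisely the condition that $\mu$ be order-\emph{preserving}. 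Hence $C((\uparrow C),\mathbb{R}_l)=\text{OP}((\uparrow C),\mathbb{R})$. Dually, continuity into $\mathbb{R}_u$ amounts to upper semicontinuity, i.e.\ $\mu^{-1}[-\infty,y)$ upward closed, i.e.\ $\mu$ order-reversing, giving $C((\uparrow C),\mathbb{R}_u)=\text{OR}((\uparrow C),\mathbb{R})$. Finally, check that the restriction maps match: for $D\subseteq C$ one has $\uparrow D\supseteq\uparrow C$, and the functorial action on $k$ corresponds to restricting a section along this inclusion, which is exactly restriction of an order-preserving (resp.\ order-reversing) function from $\uparrow D$ to $\uparrow C$; this is the natural presheaf structure, so the identification is an isomorphism of functors.

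The only genuinely delicate point is making sure the direction of the Alexandroff topology is tracked correctly at every stage — it is easy to get a sign error and conclude the opposite assignment. The safeguard is the semicontinuity characterisation of continuity into $\mathbb{R}_l$ and $\mathbb{R}_u$: once one fixes that opens of $\mathcal{C}^{\uparrow}_{A}$ are the \emph{upward}-closed sets and that lower reals have the $(y,+\infty]$-topology, the computation $\mu^{-1}(y,+\infty]$ upward closed $\Leftrightarrow$ $\mu$ order-preserving is forced, and likewise for the upper reals. Everything else is the verbatim covariant mirror of the contravariant argument already given in the excerpt, so no new machinery is needed; one may also note in passing (as a sanity check against \cite{hls}) that this is consistent with the interval domain / value object $\underline{\mathbb{R}}$ of the covariant model being built from these $\text{OP}$ and $\text{OR}$ presheaves.
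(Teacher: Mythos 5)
Your proposal is correct and follows essentially the same route the paper takes: the paper proves the contravariant case by combining the external description of the lower/upper reals as the bundles $\mathcal{C}_{\downarrow}\times\mathbb{R}_{\alpha}\to\mathcal{C}_{\downarrow}$ with the observation that continuity into $\mathbb{R}_{l}$ (resp.\ $\mathbb{R}_{u}$) is lower (resp.\ upper) semicontinuity, hence monotonicity for the Alexandroff topology, and then simply states the covariant lemma as the mirror image with the roles of order-preserving and order-reversing interchanged. Your write-up just makes that dualisation explicit, with the direction-of-topology bookkeeping done correctly.
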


Note that with respect to the contravariant version the roles of order-preserving and order-reversing functions have been interchanged. In the covariant model, the role of the spectral presheaf, as a state space, is played by the internal Gelfand spectrum $\uS_{\uA}$ of $\uA$. Using the identification $[\mathcal{C},\mathbf{Set}]\cong Sh(\mathcal{C}_{\uparrow})$, and the observation that locales in $Sh(\mathcal{C}_{\uparrow})$ correspond to locale maps over $L(\mathcal{C}_{\uparrow})$, we describe the spectrum as a continuous map $\pi_{A}:\Sigma^{\uparrow}_{\uA}\to\mathcal{C}^{\uparrow}_{A}$ of topological spaces. Note that a map of spaces induces a map of the associated locales, so this would indeed define an internal locale. The external description of $\uS_{\uA}$ is given in the following proposition~\cite[Section 2.2]{wollie} .

\begin{poe} \label{prop: extspecco}
Let $\Sigma_{\uA}^{\uparrow}$ (or $\Sigma_{\uparrow}$ for short) be the set $\coprod_{C\in\mathcal{C}}\Sigma_{C}$ equipped with the topology, where $U\in\mathcal{O}\Sigma_{\uparrow}$ iff the following two conditions are satisfied
\begin{enumerate}
\item If $\lambda\in U_{D}$, $D\subseteq C$, and $\lambda'\in\Sigma_{C}$ such that $\lambda'|_{D}=\lambda$, then $\lambda'\in U_{C}$.
\item For every $C\in\mathcal{C}$, $U_{C}$ is open in $\Sigma_{C}$
\end{enumerate}
The continuous map
\begin{equation*}
\pi_{A}:\Sigma^{\uparrow}_{\uA}\to\mathcal{C}^{\uparrow}_{A}\ \ (C,\lambda)\mapsto C
\end{equation*}
is the external description of the spectrum $\uS_{\uA}$.
\end{poe}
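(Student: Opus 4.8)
The plan is to verify that the stated topology on $\Sigma_\uparrow = \coprod_{C\in\mathcal{C}}\Sigma_C$ really is the external description, in the sense of the étale-bundle/locale dictionary over $\mathcal{C}^\uparrow_A$, of the internal Gelfand spectrum $\uS_\uA$ of the internal commutative C*-algebra $\uA$ from (\ref{equ: bohr}). The natural strategy is to compute the internal Gelfand spectrum directly from its generators-and-relations presentation and then read off the resulting frame over $\mathcal{C}^\uparrow_A$, rather than to guess a space and check functoriality by hand. So first I would recall the constructive description of the Gelfand spectrum of a commutative C*-algebra (as in \cite{banmul3,coq}): its frame is generated by formal symbols $D(a)$ for $a$ a self-adjoint element, subject to the geometric relations $D(a)\wedge D(-a)=0$, $D(a)=\bigvee_{q>0}D(a-q)$, $D(a+b)\leq D(a)\vee D(b)$, $D(ab)\leq(D(a)\wedge D(b))\vee(D(-a)\wedge D(-b))$, and $D(1)=1$. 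This presentation is geometric, so it can be interpreted in $[\mathcal{C},\mathbf{Set}]\cong Sh(\mathcal{C}^\uparrow_A)$ and its external description over $\mathcal{C}^\uparrow_A$ is obtained by interpreting the same presentation fibrewise and then sheafifying/taking the étale space.

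Next I would carry out that fibrewise computation. Over a context $C$, the internal algebra $\uA$ restricted to $\uparrow C$ is the diagram $D\mapsto D$ for $D\supseteq C$; its Gelfand spectrum over the Alexandroff space $\uparrow C$ ought to have stalk at $C$ equal to $\Sigma_C$, since $\uA(C)=C$ and the smallest open containing $C$ is $\uparrow C$ itself, on which the "smallest" algebra in the diagram is $C$. Concretely, a point of the spectrum over the open $\uparrow C$ is a model of the propositional theory, i.e. a multiplicative functional, and one checks this is exactly $\Sigma_C$. Then I would identify the opens: an open of the internal spectrum corresponds to a subfunctor of the frame, and unwinding the generators $D(a)$ with $a\in\uA(C)=C$ produces exactly subsets $U=\coprod_C U_C$ with $U_C\subseteq\Sigma_C$ open in the Gelfand topology (condition (2)), while the functoriality/gluing along $D\subseteq C$ — i.e. the requirement that opens restrict correctly under the restriction maps $\uparrow C\hookrightarrow\uparrow D$ and the pullback of points along $\rho_{CD}:\Sigma_C\to\Sigma_D$ — forces precisely condition (1): if $\lambda\in U_D$ and $\lambda'|_D=\lambda$ then $\lambda'\in U_C$. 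Equivalently, these are the opens that are saturated with respect to the preimages $\rho_{CD}^{-1}$, which matches the earlier remark that internal opens of $\uS_\uA$ are the families $(S_C)$ with $\rho_{CD}^{-1}(S_D)\subseteq S_C$. Finally I would confirm that the bundle map $(C,\lambda)\mapsto C$ is continuous for this topology (immediate, since each $\Sigma_C$ sits inside as the fibre and $\pi^{-1}(\uparrow C)$ is open) and that the topology is coarser than the étale one, so it genuinely describes an internal locale.

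The main obstacle I anticipate is not any single hard estimate but the bookkeeping of the constructive Gelfand spectrum over a non-sober Alexandroff base: one must be careful that the external description in $Sh(\mathcal{C}^\uparrow_A)$ via \cite[C1.6]{jh1} commutes with taking the geometric theory's models fibrewise, and that sheafification does not enlarge the frame beyond the two stated conditions. In practice I would lean on \cite[Section 2.2]{wollie}, where this external description is established, and present the argument as: (i) recall the geometric presentation of the internal Gelfand spectrum; (ii) note geometric theories have external descriptions computed stalkwise; (iii) identify stalks with $\Sigma_C$ and opens with the families satisfying (1)–(2); (iv) check $\pi_A$ is continuous and the topology lies below the étale topology, so a bona fide internal locale results. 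The one genuinely delicate point to get right is the direction of the saturation condition (1) — it involves pulling the point $\lambda$ \emph{up} along all $\lambda'$ with $\lambda'|_D=\lambda$, which is dual to the contravariant case (Definition with $\Sigma_\downarrow$) and is exactly what distinguishes the covariant spectrum; I would double-check this against the inner-daseinisation compatibility $\rho_{CD}^{-1}(S_D)\subseteq S_C$ noted earlier in the excerpt.
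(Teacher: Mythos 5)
Your proposal is correct and follows essentially the same route as the paper, which offers no proof of its own here but simply cites \cite[Section 2.2]{wollie}, where the argument is exactly the one you sketch: externalize the geometric (Banaschewski--Mulvey/Coquand) presentation of the internal Gelfand spectrum of $\uA$ over the Alexandroff space $\mathcal{C}^{\uparrow}_{A}$ and identify the resulting opens with the families $(U_{C})$ satisfying conditions (1) and (2). You also correctly isolate the one delicate point, namely the direction of the saturation condition (1), matching the compatibility $\rho_{CD}^{-1}(U_{D})\subseteq U_{C}$ noted earlier in the paper for inner daseinisation.
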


Note that $\Sigma_{\uparrow}$ is the same set as $\Sigma_{\downarrow}$, but the topologies are different. In this covariant version the inner and outer daseinised operators define locale maps

\begin{poe}{(\cite[Proposition 6]{wollie}}
Outer daseinisation defines a commutative triangle of continuous maps
\[ \xymatrix{
\Sigma^{\uparrow}_{\uA} \ar[rr]^{\delta^{o}(a)} \ar[rd]_{\pi_{A}} & & \mathcal{C}^{\uparrow}_{A}\times\mathbb{R}_{u} \ar[ld]^{\pi_{1}}\\
&\mathcal{C}^{\uparrow}_{A},} \]
for which we denote the corresponding internal locale map as
\begin{equation*}
\underline{\delta^{o}(a)}:\underline{\Sigma}_{\underline{A}}\to\underline{\mathbb{R}}_{u}.
\end{equation*}
In the same way, inner daseinisation defines a locale map
\begin{equation*}
\underline{\delta^{i}(a)}: \underline{\Sigma}_{\underline{A}}\to\underline{\mathbb{R}}_{l}.
\end{equation*}
\end{poe}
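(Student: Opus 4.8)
The plan is to mimic, in the covariant (upset) setting, the argument already carried out for the contravariant case in Proposition~\ref{poe: outdas}, exploiting the structural symmetry noted in the text: the only changes are that the Alexandroff topology on $\mathcal{C}_A$ is the upset topology, that $\Sigma_\uparrow$ has the topology of Proposition~\ref{prop: extspecco}, and that the roles of order-preserving and order-reversing functions (equivalently, of lower and upper reals) are swapped. Concretely, I would first write down the candidate map at the level of sets, $\delta^o(a):\Sigma_\uparrow\to\mathcal{C}^\uparrow_A\times\mathbb{R}_u$, $(C,\lambda)\mapsto(C,\langle\lambda,\delta^o(a)_C\rangle)$, observe that it commutes with the projections to $\mathcal{C}^\uparrow_A$, and check that it is well defined and continuous for the \'etale topologies because it arises from a natural transformation of copresheaves (using the functoriality of $C\mapsto\delta^o(a)_C$ and the fact that $\rho_{CD}(\lambda)=\lambda|_D$).

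The substantive point is continuity with respect to the \emph{coarser} internal topologies, i.e. the topology of Proposition~\ref{prop: extspecco} on the source and the upper-real topology generated by the half-lines $[-\infty,x)$ on $\mathbb{R}_u$. So I would take a subbasic open $U_{x,C}$ of $\mathcal{C}^\uparrow_A\times\mathbb{R}_u$ (those pairs $(D,\mu)$ with $D\supseteq C$ and $\mu<x$) and compute its preimage fibrewise:
\begin{equation*}
\delta^o(a)^{-1}(U_{x,C})_D=\begin{cases}\widehat{\delta^o(a)_D}^{-1}([-\infty,x)) & \text{if } D\supseteq C,\\ \emptyset & \text{otherwise.}\end{cases}
\end{equation*}
Each such fibre is open in $\Sigma_D$ since $\widehat{\delta^o(a)_D}$ is continuous, establishing condition (2) of Proposition~\ref{prop: extspecco}. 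For condition (1) I must check the upward coherence: if $\lambda\in\delta^o(a)^{-1}(U_{x,C})_D$, $D\subseteq C'$, and $\lambda'\in\Sigma_{C'}$ restricts to $\lambda$, then $\lambda'\in\delta^o(a)^{-1}(U_{x,C})_{C'}$. This is exactly where the monotonicity of daseinisation enters with the sign appropriate to the covariant model: from $D\subseteq C'$ one has $\delta^o(a)_{C'}\leq_s\delta^o(a)_D$, hence $\langle\lambda',\delta^o(a)_{C'}\rangle\leq\langle\lambda',\delta^o(a)_D\rangle=\langle\lambda,\delta^o(a)_D\rangle<x$, and also $C'\supseteq C$ since $C\subseteq D\subseteq C'$, so $\lambda'$ indeed lands in $U_{x,C}$. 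This shows $\delta^o(a)^{-1}(U_{x,C})$ is open in $\Sigma_\uparrow$, so $\delta^o(a)$ is continuous, and the induced triangle of locale maps over $\mathcal{C}^\uparrow_A$ yields the internal locale map $\underline{\delta^o(a)}:\underline{\Sigma}_{\underline{A}}\to\underline{\mathbb{R}}_u$ via the equivalence $\mathbf{Loc}_{Sh(\mathcal{C}_\uparrow)}\simeq\mathbf{Loc}/L(\mathcal{C}_\uparrow)$.

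For the inner daseinisation claim I would run the mirror-image argument: take $\delta^i(a):(C,\lambda)\mapsto(C,\langle\lambda,\delta^i(a)_C\rangle)$ into $\mathcal{C}^\uparrow_A\times\mathbb{R}_l$, use subbasic opens of the lower-real topology (the half-lines $(x,+\infty]$), compute preimages fibrewise using $\widehat{\delta^i(a)_D}^{-1}((x,+\infty])$, and for the coherence condition invoke $\delta^i(a)_{C'}\geq_s\delta^i(a)_D$ when $D\subseteq C'$ (from the adjunction $i_C\dashv\delta^i(-)_C$ and the displayed chain of inequalities in Subsection~\ref{subsec: review}), giving $\langle\lambda',\delta^i(a)_{C'}\rangle\geq\langle\lambda,\delta^i(a)_D\rangle>x$. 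I expect the main obstacle to be purely bookkeeping rather than conceptual: keeping straight which direction the inclusions of contexts run, which way the spectral order points after daseinisation, and which of $\mathbb{R}_l,\mathbb{R}_u$ carries which topology — getting any one of these backwards would break condition (1) of Proposition~\ref{prop: extspecco}. Since the bulk of the work is already done in~\cite[Proposition 6]{wollie} and the contravariant analogue above, it is reasonable to state the result and indicate that the proof is the evident adaptation, checking the single coherence inequality explicitly.
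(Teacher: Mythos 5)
Your proposal is correct and matches the paper's approach: the paper does not reprove this statement but cites \cite[Proposition 6]{wollie} and remarks that it is the same triangle as in the contravariant case (Proposition~\ref{poe: outdas}) with only the topologies changed, which is precisely the adaptation you carry out. Your fibrewise computation of the preimage of a subbasic open $(\uparrow C)\times[-\infty,x)$ and the upward-coherence check via $\delta^{o}(a)_{C'}\leq_{s}\delta^{o}(a)_{D}$ (and its mirror for $\delta^{i}$) are exactly the right verifications, with all signs and topologies correctly matched.
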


At the level of sets and functions, this is the same triangle as for the contravariant version. The difference is only in the topologies. The same holds for inner daseinisation.

We can pair the two daseinisation maps together as the locale map
\begin{equation*}
\underline{\delta(a)}=\langle\underline{\delta^{i}(a)},\underline{\delta^{o}(a)}\rangle:\uS_{\uA}\to\underline{\mathbb{R}}_{l}\times\underline{\mathbb{R}}_{u},
\end{equation*}
which externally is described by

\[ \xymatrix{ \Sigma^{\uparrow}_{\uA} \ar[rr]^{\delta(a)} \ar[dr]_{\pi} & & \mathcal{C}^{\uparrow}_{A}\times\mathbb{R}_{l}\times\mathbb{R}_{u} \ar[dl]^{\pi_{1}} \\
& \mathcal{C}^{\uparrow}_{A} &} \]

where $\delta(a)(C,\lambda)=(C,\langle\lambda,\delta^{i}(a)_{C}\rangle,\langle\lambda,\delta^{o}(a)_{C}\rangle)$, and we used the identification
\begin{equation*}
(\mathcal{C}^{\uparrow}_{A}\times\mathbb{R}_{l})\times_{\mathcal{C}^{\uparrow}_{A}}(\mathcal{C}^{\uparrow}_{A}\times\mathbb{R}_{u})\cong\mathcal{C}^{\uparrow}_{A}\times\mathbb{R}_{l}\times\mathbb{R}_{u}.
\end{equation*}
The covariant approach normally uses the interval domain $\mathbb{IR}$ as a value object. In the topos $\mathbf{Set}$, as a set it has pairs $[x,y]$, with $x,y\in\mathbb{R}$, $x\leq y$, as elements. The topology on $\mathbb{IR}$ is generated by the basis
\begin{equation*}
(r,s)=\{[x,y]\in\mathbb{IR}\mid r<x\leq y<s\},\ \  r,s\in\mathbb{Q},\ \ r<s.
\end{equation*}
Consider the injective function
\begin{equation*}
j:\mathbb{IR}\to\mathbb{R}_{l}\times\mathbb{R}_{u}, \ \ j([x,y])=(x,y).
\end{equation*}
This function is continuous because $j^{-1}((r,+\infty]\times[-\infty,s))$ is equal to $(r,s)\in\mathcal{O}\mathbb{IR}$ if $r<s$, and the empty set if $r\geq s$. Note that for each context $C\in\mathcal{C}$ and any $\lambda\in\Sigma_{C}$, we have the inequality $\langle\lambda,\delta^{i}(a)_{C}\rangle\leq\langle\lambda,\delta^{o}(a)_{C}\rangle$ so the map $\delta(a)$ factors through the interval domain as

\[ \xymatrix{ \Sigma^{\uparrow}_{\uA} \ar[rr]^{\delta(a)} \ar[drr] & & \mathcal{C}^{\uparrow}_{A}\times\mathbb{R}_{l}\times\mathbb{R}_{u} \\
& & \mathcal{C}^{\uparrow}_{A}\times\mathbb{IR} \ar[u]_{\mathcal{C}\times j} } \]

Note that this is a commutative triangle in $\mathbf{Loc}/\mathcal{C}^{\uparrow}_{A}$, where $\mathcal{C}^{\uparrow}_{A}\times\mathbb{IR}$ is seen as a bundle over $\mathcal{C}^{\uparrow}_{A}$ by projecting on the first coordinate. This bundle $\pi_{1}:\mathcal{C}^{\uparrow}_{A}\times\mathbb{IR}\to\mathcal{C}^{\uparrow}_{A}$ is the external description of the interval domain $\underline{\mathbb{IR}}$ in $[\mathcal{C}_{A},\mathbf{Set}]$. The factorised map $\Sigma^{\uparrow}_{\uA}\to\mathcal{C}^{\uparrow}_{A}\times\mathbb{IR}$ is the external description of the daseinisation map $\underline{\Sigma}_{\uA}\to\underline{\mathbb{IR}}$ used in~\cite{wollie}.

Now we can connect this daseinisation map to the elementary propositions, at least for the case where we consider an open interval $\Delta=(r,s)$ in the set of (Dedekind) real numbers. We can translate this to an open subset of $\mathcal{C}^{\uparrow}_{A}\times\mathbb{IR}$ (or an open subset of $\mathcal{C}^{\uparrow}_{A}\times\mathbb{R}_{l}\times\mathbb{R}_{u}$) by
\begin{equation*}
\hat{\Delta}=\{(C,[x,y])\in\mathcal{C}^{\uparrow}_{A}\times\mathbb{IR}\mid C\in\mathcal{C}, [x,y]\in(r,s)\},
\end{equation*}
where we view $(r,s)$ as an open of $\mathbb{IR}$. In addition, define for any real number $\epsilon>0$
\begin{equation*}
\hat{\Delta}+\epsilon=\{(C,[x,y])\in\mathcal{C}^{\uparrow}_{A}\times\mathbb{IR}\mid C\in\mathcal{C}, [x,y]\in(r-\epsilon,s+\epsilon)\}.
\end{equation*}
For $\hat{\Delta}$, we get the corresponding open of $\Sigma^{\uparrow}_{\uA}$
\begin{equation} \label{equ: prop1}
\delta(a)^{-1}(\hat{\Delta})=\{(C,\lambda)\in\Sigma\mid r<\langle\lambda,\delta^{i}(a)_{C}\rangle\leq\langle\lambda,\delta^{o}(a)_{C}\rangle<s\}.
\end{equation}
In~\cite{wollie} the elementary proposition $[a\in\Delta]$, viewed externally as an open of $\Sigma_{\uparrow}$, was described as
\begin{equation} \label{equ: prop2}
[a\in\Delta]=\{(C,\lambda)\in\Sigma\mid\langle\lambda,\delta^{i}(\chi_{\Delta}(a))_{C}\rangle=1\},
\end{equation}
where $\chi_{\Delta}(a)$ is the spectral projection of $a$, associated with $\Delta$. The elementary proposition (\ref{equ: prop2}) was introduced to mimic the contravariant elementary proposition. 

The opens (\ref{equ: prop1}) and (\ref{equ: prop2}) are closely related.

\begin{tut}{(\cite[Lemma 3.11]{wollie})} \label{poe: das}
Let $r<s$ in $\mathbb{R}$, and $\epsilon>0$. Then
\begin{equation*}
\delta(a)^{-1}(\hat{\Delta})\subseteq[a\in\Delta]\subseteq\delta(a)^{-1}(\hat{\Delta}+\epsilon).
\end{equation*}
\end{tut}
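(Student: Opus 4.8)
The plan is to verify the two inclusions fibre by fibre over $\mathcal{C}$, using the explicit descriptions (\ref{equ: prop1}) of $\delta(a)^{-1}(\hat{\Delta})$ and (\ref{equ: prop2}) of $[a\in\Delta]$ (and the evident analogue of (\ref{equ: prop1}) for $\hat{\Delta}+\epsilon$), together with the fact that, $A$ being a von Neumann algebra, each Gelfand spectrum $\Sigma_{C}$ is Stonean. The one input about characters I would isolate first: for $b\in C_{sa}$, the spectral projection $\chi_{(r,s)}(b)$ corresponds under Gelfand duality to the \emph{smallest} clopen subset of $\Sigma_{C}$ containing $\{\lambda\mid\widehat{b}(\lambda)\in(r,s)\}$, and $\chi_{[r,s]}(b)$ to the \emph{largest} clopen subset contained in $\{\lambda\mid\widehat{b}(\lambda)\in[r,s]\}$. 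Hence, for $\lambda\in\Sigma_{C}$,
\begin{equation*}
\widehat{b}(\lambda)\in(r,s)\ \Longrightarrow\ \langle\lambda,\chi_{(r,s)}(b)\rangle=1,\qquad\qquad \langle\lambda,\chi_{[r,s]}(b)\rangle=1\ \Longrightarrow\ \widehat{b}(\lambda)\in[r,s],
\end{equation*}
and, since a character is multiplicative and maps projections into $\{0,1\}$, it sends a meet $p\wedge q$ of projections of $C$ to $1$ exactly when it sends both $p$ and $q$ to $1$.

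For the first inclusion, let $(C,\lambda)\in\delta(a)^{-1}(\hat{\Delta})$, so $\langle\lambda,\delta^{i}(a)_{C}\rangle$ and $\langle\lambda,\delta^{o}(a)_{C}\rangle$ both lie in $(r,s)$; by the above, $\lambda$ sends $q:=\chi_{(r,s)}(\delta^{i}(a)_{C})\wedge\chi_{(r,s)}(\delta^{o}(a)_{C})\in\text{Proj}(C)$ to $1$. The crux is that $q\leq\chi_{(r,s)}(a)$: comparing spectral resolutions, $a\leq_{s}\delta^{o}(a)_{C}$ gives $e^{\delta^{o}(a)_{C}}_{s-}\leq e^{a}_{s-}$, whence $\chi_{(r,s)}(\delta^{o}(a)_{C})\leq e^{a}_{s-}=\chi_{(-\infty,s)}(a)$, while $\delta^{i}(a)_{C}\leq_{s}a$ gives $e^{\delta^{i}(a)_{C}}_{r}\geq e^{a}_{r}$, whence $\chi_{(r,s)}(\delta^{i}(a)_{C})\leq 1-e^{a}_{r}=\chi_{(r,\infty)}(a)$, and the meet of these two is $\chi_{(r,s)}(a)$. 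Being a projection of $C$ below $\chi_{\Delta}(a)=\chi_{(r,s)}(a)$, $q$ is below $\delta^{i}(\chi_{\Delta}(a))_{C}$, so $\langle\lambda,\delta^{i}(\chi_{\Delta}(a))_{C}\rangle\geq\langle\lambda,q\rangle=1$, i.e.\ $(C,\lambda)\in[a\in\Delta]$.

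For the second inclusion, let $(C,\lambda)\in[a\in\Delta]$, so $\langle\lambda,\delta^{i}(\chi_{\Delta}(a))_{C}\rangle=1$. Here the key fact is the dual inequality $\delta^{i}(\chi_{\Delta}(a))_{C}\leq\chi_{[r,s]}(\delta^{i}(a)_{C})\wedge\chi_{[r,s]}(\delta^{o}(a)_{C})$. I would obtain it from de Groote's description of the spectral families of the spectral-order daseinisations, $e^{\delta^{o}(a)_{C}}_{x}=\delta^{i}(e^{a}_{x})_{C}$ and $e^{\delta^{i}(a)_{C}}_{x}=\delta^{o}(e^{a}_{x})_{C}$ (see~\cite{di2}): if $q\in\text{Proj}(C)$ with $q\leq\chi_{(r,s)}(a)$, then $q\leq e^{a}_{s}$ forces $q\leq\delta^{i}(e^{a}_{s})_{C}=e^{\delta^{o}(a)_{C}}_{s}$, and $qe^{a}_{r}=0$ together with $e^{\delta^{o}(a)_{C}}_{r-}\leq e^{a}_{r-}\leq e^{a}_{r}$ gives (after compressing by $q$) $qe^{\delta^{o}(a)_{C}}_{r-}=0$, so $q\leq e^{\delta^{o}(a)_{C}}_{s}-e^{\delta^{o}(a)_{C}}_{r-}=\chi_{[r,s]}(\delta^{o}(a)_{C})$; the estimate for $\chi_{[r,s]}(\delta^{i}(a)_{C})$ is symmetric. (Alternatively, these estimates can be proved by exhibiting explicit approximants $c\in C_{sa}$ with $c\geq_{s}a$ or $c\leq_{s}a$ — for instance $c=sq+\|a\|(1-q)$ — without invoking the spectral-family formula.) Applying the inequality to $q=\delta^{i}(\chi_{\Delta}(a))_{C}$ and then the character fact, both $\langle\lambda,\delta^{i}(a)_{C}\rangle$ and $\langle\lambda,\delta^{o}(a)_{C}\rangle$ lie in the \emph{closed} interval $[r,s]$; since $\delta^{i}(a)_{C}\leq_{s}\delta^{o}(a)_{C}$ implies $\delta^{i}(a)_{C}\leq\delta^{o}(a)_{C}$, we get $r-\epsilon<\langle\lambda,\delta^{i}(a)_{C}\rangle\leq\langle\lambda,\delta^{o}(a)_{C}\rangle<s+\epsilon$, i.e.\ $(C,\lambda)\in\delta(a)^{-1}(\hat{\Delta}+\epsilon)$.

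The step I expect to be the main obstacle is the inequality $\delta^{i}(\chi_{\Delta}(a))_{C}\leq\chi_{[r,s]}(\delta^{i}(a)_{C})\wedge\chi_{[r,s]}(\delta^{o}(a)_{C})$ used for the second inclusion: unlike the first inclusion, which rests only on the defining relations $\delta^{i}(a)_{C}\leq_{s}a\leq_{s}\delta^{o}(a)_{C}$, this one genuinely needs the spectral-family/daseinisation dictionary and careful bookkeeping of open versus closed endpoints. That bookkeeping is precisely why the right-hand inclusion cannot be tightened: when the spectrum of $a$ accumulates at $r$ or $s$ within the relevant spectral subspace, $\langle\lambda,\delta^{i}(a)_{C}\rangle$ or $\langle\lambda,\delta^{o}(a)_{C}\rangle$ may actually equal $r$ or $s$, and the enlargement by $\epsilon$ is what reopens the interval so that a point of $\delta(a)^{-1}(\hat{\Delta}+\epsilon)$ is recovered.
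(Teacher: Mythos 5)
Your proof is correct, but it takes a genuinely different route from the paper's. The paper (following \cite[Lemma 3.11]{wollie}) obtains both inclusions from the scalar identities (\ref{equ: ujelly1})--(\ref{equ: ujelly2}), which express $\langle\lambda,\delta^{i}(a)_{C}\rangle$ as a supremum and $\langle\lambda,\delta^{o}(a)_{C}\rangle$ as an infimum over reals $x$ detected by projections $p\in\mathcal{P}(C)$ with $\langle\lambda,p\rangle=1$ and $p\leq 1-\chi_{[-\infty,x)}(a)$ (resp.\ $p\leq\chi_{[-\infty,x)}(a)$); this is exactly how the detailed proofs of the contravariant analogue, the two lemmas preceding Theorem~\ref{tut: das}, are run. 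You instead establish character-free operator inequalities --- $\chi_{(r,s)}(\delta^{i}(a)_{C})\wedge\chi_{(r,s)}(\delta^{o}(a)_{C})\leq\chi_{\Delta}(a)$ for the left inclusion and $\delta^{i}(\chi_{\Delta}(a))_{C}\leq\chi_{[r,s]}(\delta^{i}(a)_{C})\wedge\chi_{[r,s]}(\delta^{o}(a)_{C})$ for the right one --- and only evaluate against $\lambda$ at the very end, via the Stonean description of spectral projections of elements of $C$. The two arguments are cousins, since (\ref{equ: ujelly1})--(\ref{equ: ujelly2}) are essentially the de Groote spectral-family formulas read off against a character, which is the same dictionary you invoke for the harder inclusion. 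What your version buys: the left inclusion needs nothing beyond the defining relations $\delta^{i}(a)_{C}\leq_{s}a\leq_{s}\delta^{o}(a)_{C}$ and monotonicity of spectral resolutions, and the projection inequalities are reusable facts about daseinisation independent of any state or character. What the paper's version buys: a single scalar formula handling upper and lower half-lines uniformly, which transfers verbatim to the contravariant model. One bookkeeping caveat: inner daseinisation of a right-continuous spectral family need not be right-continuous, so the formula for $e^{\delta^{o}(a)_{C}}_{x}$ strictly requires the regularisation $\bigwedge_{y>x}\delta^{i}(e^{a}_{y})_{C}$; the four inequalities you actually use ($q\leq e^{\delta^{o}(a)_{C}}_{s}$, $q\perp e^{\delta^{o}(a)_{C}}_{r-}$, and their $\delta^{i}$ counterparts) are all stable under this regularisation, so your argument goes through.
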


This theorem establishes the relations between elementary propositions, described by inner daseinised projections, and two-sided daseinised self-adjoint operators. Furthermore, through this correspondence the external space of real numbers is linked to the internal value object in a straightforward way. The inclusions of the theorem follow from the identities
\begin{equation} \label{equ: ujelly1}
\langle\lambda,\delta^{i}(a)_{C}\rangle=\text{sup}\{r\in\mathbb{R}\mid\exists p\in\mathcal{P}(C),\ \langle\lambda,p\rangle=1,\ \ p\leq1-\chi_{[-\infty,r)}(a)\},
\end{equation}
\begin{equation} \label{equ: ujelly2}
\langle\lambda,\delta^{o}(a)_{C}\rangle=\text{inf}\{r\in\mathbb{R}\mid\exists p\in\mathcal{P}(C),\ \langle\lambda,p\rangle=1,\ \ p\leq\chi_{[-\infty,r)}(a)\},
\end{equation}
where $\mathcal{P}(C)$ is the Boolean algebra of projection operators of $C$. From these identities the connection to inner daseinisation of spectral projections of $a$ becomes clear. Note that for each projection operator $q\in\mathcal{P}(A)$ there exists an $p\in\mathcal{P}(C)$ with the properties $\langle\lambda,p\rangle=1$ and $p\leq q$, iff $\langle\lambda,\delta^{i}(q)_{C}\rangle=1$. So, for example, (\ref{equ: ujelly2}) can be written as
\begin{equation*}
\langle\lambda,\delta^{o}(a)_{C}\rangle=\text{inf}\{r\in\mathbb{R}\mid\langle\lambda,\delta^{i}(\chi_{[-\infty,r)}(a))_{C}\rangle=1\}.
\end{equation*}

\subsubsection{Contravariant Version}

As in the previous subsection, we can combine the two daseinisation maps into a single map $\mathcal{C}^{\downarrow}_{A}$;
\begin{equation*}
\delta(a):\Sigma_{A}^{\downarrow}\to\mathcal{C}^{\downarrow}_{A}\times\mathbb{R}_{u}\times\mathbb{R}_{l},\ \ (C,\lambda)\mapsto(C,\langle\lambda,\delta^{i}(a)_{C}\rangle,\langle\lambda,\delta^{o}(a)_{C}\rangle),
\end{equation*}
which is the external description of the internal continuous map
\begin{equation*}
 \underline{\delta(a)}=\langle\underline{\delta^{i}(a)},\underline{\delta^{o}(a)}\rangle:\underline{\Sigma}_{A}\to\underline{\mathbb{R}}_{u}\times\underline{\mathbb{R}}_{l}.
 \end{equation*}

Given $\Delta=(s,r)$, with $r,s\in\mathbb{R}$ such that $s<r$, we consider the open 
\begin{equation*}
\hat{\Delta}=\mathcal{C}_{A}\times[-\infty,r)\times(s,+\infty]\in\mathcal{O}(\mathcal{C}^{\downarrow}_{A}\times\mathbb{R}_{u}\times\mathbb{R}_{l}).
\end{equation*}
Likewise, if $\epsilon>0$, then $\hat{\Delta}+\epsilon$ is defined as $\mathcal{C}_{A}\times[-\infty,r+\epsilon)\times(s-\epsilon,+\infty]$.
\begin{align*}
\delta(a)^{-1}(\hat{\Delta}) &=\{(C,\lambda)\in\Sigma\mid r>\langle\lambda,\delta^{i}(a)_{C}\rangle\leq\langle\lambda,\delta^{o}(a)_{C}\rangle>s\},\\
&= \delta^{i}(a)^{-1}([-\infty,r))\cap\delta^{o}(a)^{-1}((s,+\infty]).
\end{align*}

\begin{tut} \label{tut: das}
Let $a\in A_{sa}$, $r,s\in\mathbb{R}$, $r<s$, and $\epsilon>0$. Then
\begin{equation*}
\delta(a)^{-1}(\hat{\Delta})\subseteq[a<r]\cap[a>s]\subseteq\delta(a)^{-1}(\hat{\Delta}+\epsilon).
\end{equation*}
\end{tut}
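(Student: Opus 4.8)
The plan is to split the two-sided inclusion into two one-sided statements --- one relating $[a<r]$ to the inner daseinisation $\delta^i(a)_C$, the other relating $[a>s]$ to the outer daseinisation $\delta^o(a)_C$ --- and to prove each from the identities (\ref{equ: ujelly1}) and (\ref{equ: ujelly2}) together with the De~Morgan duality $\delta^o(p)_C=1-\delta^i(1-p)_C$ for $p\in\mathcal{P}(C)$, which is immediate from the defining formulas for inner and outer daseinisation of projections. Recall that $[a<r]$ is the clopen subobject of $\uS$ obtained by outer-daseinising the spectral projection $\chi_{[-\infty,r)}(a)$, so that $(C,\lambda)\in[a<r]$ iff $\lambda\bigl(\delta^o(\chi_{[-\infty,r)}(a))_C\bigr)=1$; using the duality, the identity $1-\chi_{[-\infty,r)}(a)=\chi_{[r,+\infty]}(a)$, and the fact that a multiplicative functional takes only the values $0$ and $1$ on the projections of $C$, this is equivalent to $\langle\lambda,\delta^i(\chi_{[r,+\infty]}(a))_C\rangle=0$; similarly $(C,\lambda)\in[a>s]$ is equivalent to $\langle\lambda,\delta^i(\chi_{[-\infty,s]}(a))_C\rangle=0$. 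On the other hand, unwinding the external description of $\delta(a)$ recalled just before the statement, $(C,\lambda)\in\delta(a)^{-1}(\hat\Delta)$ means $\langle\lambda,\delta^i(a)_C\rangle<r$ and $\langle\lambda,\delta^o(a)_C\rangle>s$, and $(C,\lambda)\in\delta(a)^{-1}(\hat\Delta+\epsilon)$ means $\langle\lambda,\delta^i(a)_C\rangle<r+\epsilon$ and $\langle\lambda,\delta^o(a)_C\rangle>s-\epsilon$.

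With this bookkeeping in place it suffices to establish the two chains
\begin{gather*}
\{\langle\lambda,\delta^i(a)_C\rangle<r\}\subseteq[a<r]\subseteq\{\langle\lambda,\delta^i(a)_C\rangle\leq r\},\\
\{\langle\lambda,\delta^o(a)_C\rangle>s\}\subseteq[a>s]\subseteq\{\langle\lambda,\delta^o(a)_C\rangle\geq s\},
\end{gather*}
where each displayed inequality abbreviates the set of pairs $(C,\lambda)$ satisfying it. For the first chain I would rewrite (\ref{equ: ujelly1}), using $1-\chi_{[-\infty,t)}(a)=\chi_{[t,+\infty]}(a)$ and the characterisation ``there exists $p\in\mathcal{P}(C)$ with $\langle\lambda,p\rangle=1$ and $p\leq q$ iff $\langle\lambda,\delta^i(q)_C\rangle=1$'' recalled just after (\ref{equ: ujelly2}), into $\langle\lambda,\delta^i(a)_C\rangle=\sup\{t\mid\langle\lambda,\delta^i(\chi_{[t,+\infty]}(a))_C\rangle=1\}$. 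Since $t\mapsto\chi_{[t,+\infty]}(a)$ is order reversing and $\delta^i(-)_C$ is monotone, this index set is down-closed, so $\langle\lambda,\delta^i(a)_C\rangle<r$ forces $r$ out of it, i.e.\ $\langle\lambda,\delta^i(\chi_{[r,+\infty]}(a))_C\rangle=0$, which is precisely $(C,\lambda)\in[a<r]$; conversely $(C,\lambda)\in[a<r]$ expels every $t\geq r$ from the index set, so that its supremum is $\leq r$. The second chain is the mirror image, using the rewriting of (\ref{equ: ujelly2}) already displayed in the text, $\langle\lambda,\delta^o(a)_C\rangle=\inf\{t\mid\langle\lambda,\delta^i(\chi_{[-\infty,t)}(a))_C\rangle=1\}$, the up-closedness of that index set, and the comparison $\chi_{[-\infty,s]}(a)\leq\chi_{[-\infty,t)}(a)$, valid whenever $t>s$, which lets one pass between the closed ray appearing in $[a>s]$ and the open rays appearing in (\ref{equ: ujelly2}).

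Assembling, intersecting the left ends of the two chains gives $\delta(a)^{-1}(\hat\Delta)\subseteq[a<r]\cap[a>s]$, and intersecting the right ends gives $[a<r]\cap[a>s]\subseteq\{\langle\lambda,\delta^i(a)_C\rangle\leq r\}\cap\{\langle\lambda,\delta^o(a)_C\rangle\geq s\}$, which lies inside $\delta(a)^{-1}(\hat\Delta+\epsilon)$ because $r<r+\epsilon$ and $s-\epsilon<s$. I do not expect a conceptual obstacle: the one delicate point is combinatorial, namely keeping open versus closed endpoints of the spectral rays straight when combining (\ref{equ: ujelly1}) and (\ref{equ: ujelly2}) with the De~Morgan duality, and it is precisely this that makes the slack $\epsilon$ genuinely necessary --- but only for the upper inclusion, since $[a<r]$ and $[a>s]$ yield merely the non-strict bounds $\delta^i(a)_C\leq r$ and $\delta^o(a)_C\geq s$. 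Finally, note that the argument decides only membership of pairs $(C,\lambda)$ and never touches the topology of $\Sigma_{A}^{\downarrow}$, so the three sets are being compared at the level of underlying sets; this also makes transparent why $[a<r]\cap[a>s]$ need not coincide with any single interval proposition $[a\in\Delta]$, as outer daseinisation of projections does not preserve meets.
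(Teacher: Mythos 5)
Your proposal is correct and follows essentially the same route as the paper: the two-sided inclusion is split into the one-sided estimates $\delta^{i}(a)^{-1}([-\infty,r))\subseteq[a<r]\subseteq\delta^{i}(a)^{-1}([-\infty,r+\epsilon))$ and $\delta^{o}(a)^{-1}((s,+\infty])\subseteq[a>s]\subseteq\delta^{o}(a)^{-1}((s-\epsilon,+\infty])$, each derived from (\ref{equ: ujelly1}), (\ref{equ: ujelly2}) and the complementation $\delta^{o}(p)_{C}=1-\delta^{i}(1-p)_{C}$, exactly as in the paper's two auxiliary lemmas. Your reformulation in terms of down-closed and up-closed index sets of spectral rays is only a cosmetic repackaging of the monotonicity arguments carried out there by hand, and your observation that the slack $\epsilon$ is needed only for the upper inclusion matches the paper's lemmas as stated.
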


For the proof of Theorem~\ref{tut: das} we start by considering outer and inner daseinisation separately. First, recall  that the elementary propositions  of the contravariant model are given by
\begin{equation} \label{equ: propcon}
[a\in\Delta]=\{(C,\lambda)\in\Sigma\mid\langle\lambda,\delta^{o}(\chi_{\Delta}(a))_{C}\rangle=1\},
\end{equation}
where $a\in A_{sa}$, $\Delta\in\mathcal{O}\mathbb{R}$, and $\chi_{\Delta}(a)$ is the spectral projection associated to this pair. For half-intervals we will use the notation $[a<r]:=[a\in(-\infty,r)]$ and $[a>s]:=[a\in(s,+\infty)]$.

\begin{lem}
Let $r\in\mathbb{R}$, $a\in A_{sa}$, and $\delta^{i}(a):\Sigma^{\downarrow}_{A}\to\mathcal{C}^{\downarrow}_{A}\times\mathbb{R}_{u}$, be the corresponding (continuous) inner daseinised map. If we identify the half-interval $(-\infty,r)$ of the real numbers $\mathbb{R}$ with the open
\begin{equation*}
\mathcal{C}^{\downarrow}_{A}\times[-\infty,r)=\{(C,x)\in\mathcal{C}_{A}\times\mathbb{R}_{u}\mid x<r\}\in\mathcal{O}(\mathcal{C}^{\downarrow}_{A}\times\mathbb{R}_{u})
\end{equation*}
and write this open as $[-\infty,r)$ (with some abuse of notation), then, for each $\epsilon>0$,
\begin{equation}
\delta^{i}(a)^{-1}([-\infty,r))\subseteq [a<r]\subseteq \delta^{i}(a)^{-1}([\infty,r+\epsilon)).
\end{equation}
\end{lem}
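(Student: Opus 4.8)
The plan is to reduce the statement to a pointwise inequality and then to the identity~(\ref{equ: ujelly1}). Unwinding the external description of $\underline{\delta^{i}(a)}$ from the Contravariant Version subsection, we have, for every $t\in\mathbb{R}$,
\begin{equation*}
\delta^{i}(a)^{-1}\bigl(\mathcal{C}^{\downarrow}_{A}\times[-\infty,t)\bigr)=\{(C,\lambda)\in\Sigma\mid\langle\lambda,\delta^{i}(a)_{C}\rangle<t\}.
\end{equation*}
So the two inclusions to be proved amount, for each $C\in\mathcal{C}$ and $\lambda\in\Sigma_{C}$, to the implications
\begin{equation*}
\langle\lambda,\delta^{i}(a)_{C}\rangle<r\ \Longrightarrow\ (C,\lambda)\in[a<r]\qquad\text{and}\qquad(C,\lambda)\in[a<r]\ \Longrightarrow\ \langle\lambda,\delta^{i}(a)_{C}\rangle<r+\epsilon.
\end{equation*}

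First I would re-express membership in $[a<r]$ in the shape that appears in~(\ref{equ: ujelly1}). By~(\ref{equ: propcon}), $(C,\lambda)\in[a<r]$ iff $\langle\lambda,\delta^{o}(\chi_{(-\infty,r)}(a))_{C}\rangle=1$, and since $a$ is bounded we may write $\chi_{[-\infty,r)}(a)$ for $\chi_{(-\infty,r)}(a)$. Because $\lambda$ is multiplicative, $\langle\lambda,-\rangle$ sends each projection of $C$ to $0$ or $1$, so this condition fails exactly when $\langle\lambda,1-\delta^{o}(\chi_{[-\infty,r)}(a))_{C}\rangle=1$. Now $\mathcal{P}(C)$ is a complete Boolean algebra, so De Morgan gives $1-\delta^{o}(q)_{C}=\delta^{i}(1-q)_{C}$ for projections $q$; combining this with the characterisation recalled after Theorem~\ref{poe: das} (there is $p\in\mathcal{P}(C)$ with $\langle\lambda,p\rangle=1$ and $p\leq q'$ iff $\langle\lambda,\delta^{i}(q')_{C}\rangle=1$), applied to $q'=1-\chi_{[-\infty,r)}(a)$, we get that $(C,\lambda)\notin[a<r]$ iff there exists $p\in\mathcal{P}(C)$ with $\langle\lambda,p\rangle=1$ and $p\leq 1-\chi_{[-\infty,r)}(a)$. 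Hence, writing
\begin{equation*}
R:=\{t\in\mathbb{R}\mid\exists p\in\mathcal{P}(C),\ \langle\lambda,p\rangle=1,\ p\leq 1-\chi_{[-\infty,t)}(a)\},
\end{equation*}
we have established $(C,\lambda)\in[a<r]\iff r\notin R$, while~(\ref{equ: ujelly1}) reads $\langle\lambda,\delta^{i}(a)_{C}\rangle=\sup R$.

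The remaining step is purely order-theoretic: $R$ is a down-set of $\mathbb{R}$, because if $t'\leq t$ and $p$ witnesses $t\in R$, then $1-\chi_{[-\infty,t')}(a)\geq 1-\chi_{[-\infty,t)}(a)\geq p$, so the same $p$ witnesses $t'\in R$. Consequently, if $\langle\lambda,\delta^{i}(a)_{C}\rangle=\sup R<r$ then $r\notin R$, which gives the first implication; and if $r\notin R$ then, since $R$ is a down-set, $R\subseteq(-\infty,r]$, so $\langle\lambda,\delta^{i}(a)_{C}\rangle=\sup R\leq r<r+\epsilon$, which gives the second. Translating back through the external description then yields $\delta^{i}(a)^{-1}([-\infty,r))\subseteq[a<r]\subseteq\delta^{i}(a)^{-1}([-\infty,r+\epsilon))$.

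The only delicate point is the bookkeeping in the middle paragraph: keeping the strict versus non-strict spectral-projection conventions consistent, justifying $1-\delta^{o}(q)_{C}=\delta^{i}(1-q)_{C}$ for projections, and using that $\langle\lambda,-\rangle$ is two-valued on $\mathcal{P}(C)$ so that ``not $1$'' means ``$0$''. Once $[a<r]$ has been rewritten in terms of the set $R$ occurring in~(\ref{equ: ujelly1}), the rest is the one-line down-set argument above. (If one prefers not to invoke the post-theorem remark, the equivalence $(C,\lambda)\in[a<r]\iff r\notin R$ can be proved by hand: $\delta^{o}(q)_{C}$ is the least projection of $C$ dominating $q$, hence it is itself a member of the family it is the infimum of, and $\langle\lambda,-\rangle$ is an upward-closed $\{0,1\}$-valued function on $\mathcal{P}(C)$.)
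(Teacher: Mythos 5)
Your proof is correct and follows essentially the same route as the paper: both reduce the two inclusions to pointwise statements about $\langle\lambda,\delta^{i}(a)_{C}\rangle$ via the identity~(\ref{equ: ujelly1}) and the definition~(\ref{equ: propcon}) of $[a<r]$ through $\delta^{o}(\chi_{[-\infty,r)}(a))_{C}$, using the complementation $1-\delta^{o}(q)_{C}=\delta^{i}(1-q)_{C}$ and the monotonicity of the spectral projections. Your repackaging of the two directions as the single biconditional $(C,\lambda)\in[a<r]\iff r\notin R$ followed by a down-set/supremum argument is a slightly cleaner organisation of the same argument, not a different one.
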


\begin{proof}
Assume that $\lambda\in [a<r]_{C}$. By definition, this is equivalent to
\begin{equation} \label{equ: hihi}
\langle\lambda,\delta^{o}(\chi_{[-\infty,r)}(a))_{C}\rangle=1.
\end{equation}
By definition of outer daseinisation of projections this is in turn equivalent to
\begin{equation*}
\forall p\in\mathcal{P}(C),\ \ p\geq\chi_{[-\infty,r)}(a)\ \ \rightarrow\ \  \langle\lambda,p\rangle=1.
\end{equation*}
Switching to $\neg p=1-p$, this is equivalent to
\begin{equation*}
\forall p\in\mathcal{P}(C)\ \ p\leq 1-\chi_{[-\infty,r)}(a)\ \ \rightarrow\ \ \langle\lambda,p\rangle=0.
\end{equation*}
For any $x\geq r$
\begin{equation*}
1-\chi_{[-\infty,x)}(a)\leq1-\chi_{[-\infty,r)}(a).
\end{equation*}
Assume that for some projection $p\in\mathcal{P}(C)$, $p\leq1-\chi_{[-\infty,x)}(a)$. Then $p\leq1-\chi_{[-\infty,r)}(a)$, and by assumption $\langle\lambda, p\rangle=0$. We conclude that if $x\in\mathbb{R}$ is an element of the set
\begin{equation*}
\{y\in\mathbb{R}\mid\exists p\in\mathcal{P}(C)\ \ p\leq 1-\chi_{[-\infty,y)}(a)\ \wedge\ \langle\lambda,p\rangle=1\},
\end{equation*}
then $x<r$. As $\langle\lambda,\delta^{i}(a)_{C}\rangle$ is the supremum of such $x\in\mathbb{R}$ by (\ref{equ: ujelly1}), we know that $\langle\lambda,\delta^{i}(a)_{C}\rangle\leq r$. By definition, for each $\epsilon>0$, $\lambda\in\delta^{i}(a)^{-1}([-\infty,r+\epsilon))_{C}$. We have shown that
\begin{equation*}
[a<r]\subseteq\delta^{i}(a)^{-1}([-\infty,r+\epsilon)).
\end{equation*}
Next, assume that $\lambda\in\delta^{i}(a)^{-1}([-\infty,r))_{C}$. From (\ref{equ: ujelly1}) we deduce that 
\begin{equation*}
\forall p\in\mathcal{P}(C)\ \ p\geq\chi_{[-\infty,x)}(a)\ \wedge\ \langle\lambda, p\rangle=0\ \ \rightarrow x<r.
\end{equation*}
If $p\geq\chi_{[\-\infty,r)}(a)$, then $\langle\lambda,p\rangle=1$. This is equivalent to (\ref{equ: hihi}). We conclude that $\lambda\in[a<r]_{C}$, completing the proof of the lemma.
\end{proof}

\begin{lem}
Let $s\in\mathbb{R}$, $a\in A_{sa}$, and $\delta^{o}(a):\Sigma^{\downarrow}_{A}\to\mathcal{C}^{\downarrow}_{A}\times\mathbb{R}_{l}$, be the corresponding (continuous) outer daseinised map. If we identify the half-interval $(s,+\infty)$ of the real numbers $\mathbb{R}$ with the open
\begin{equation*}
\mathcal{C}^{\downarrow}_{A}\times(s,+\infty]=\{(C,x)\in\mathcal{C}_{A}\times\mathbb{R}_{l}\mid x>s\}\in\mathcal{O}(\mathcal{C}^{\downarrow}_{A}\times\mathbb{R}_{l}),
\end{equation*}
and write this open as $(s,+\infty]$ (with some abuse of notation). Then, for each $\epsilon>0$
\begin{equation} \label{equ: halfouter}
\delta^{o}(a)^{-1}((s,+\infty])\subseteq [a>s]\subseteq \delta^{o}(a)^{-1}((s-\epsilon,+\infty]).
\end{equation}
\end{lem}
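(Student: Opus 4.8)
The plan is to mirror the structure of the preceding lemma on inner daseinisation, but now using the outer daseinisation identity (\ref{equ: ujelly2}) in place of (\ref{equ: ujelly1}). I would first unwind the meaning of the contravariant elementary proposition $[a>s]$. By (\ref{equ: propcon}) we have $\lambda\in[a>s]_{C}$ iff $\langle\lambda,\delta^{o}(\chi_{(s,+\infty)}(a))_{C}\rangle=1$, and by definition of outer daseinisation of projections this is equivalent to: for all $p\in\mathcal{P}(C)$ with $p\geq\chi_{(s,+\infty)}(a)$ one has $\langle\lambda,p\rangle=1$. Passing to complements $\neg p=1-p$ and using $1-\chi_{(s,+\infty)}(a)=\chi_{[-\infty,s]}(a)$, this becomes: for all $p\in\mathcal{P}(C)$ with $p\leq\chi_{[-\infty,s]}(a)$ one has $\langle\lambda,p\rangle=0$. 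This is the form that connects directly to the set appearing in the supremum description (\ref{equ: ujelly2}) of $\langle\lambda,\delta^{o}(a)_{C}\rangle$; note that $\chi_{[-\infty,s]}(a)\leq\chi_{[-\infty,x)}(a)$ for every $x>s$, so a projection dominated by the former is dominated by the latter.

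For the first inclusion $\delta^{o}(a)^{-1}((s,+\infty])\subseteq[a>s]$, I would take $\lambda$ with $\langle\lambda,\delta^{o}(a)_{C}\rangle>s$ and show directly that $\langle\lambda,\delta^{o}(\chi_{(s,+\infty)}(a))_{C}\rangle=1$. Since $\langle\lambda,\delta^{o}(a)_{C}\rangle$ is the infimum over $r$ of those $r$ for which some $p\in\mathcal{P}(C)$ with $\langle\lambda,p\rangle=1$ satisfies $p\leq\chi_{[-\infty,r)}(a)$, the hypothesis $\langle\lambda,\delta^{o}(a)_{C}\rangle>s$ rules out $r\leq s$, i.e. for every $p\in\mathcal{P}(C)$ with $\langle\lambda,p\rangle=1$ we cannot have $p\leq\chi_{[-\infty,s]}(a)$; equivalently every $p\leq\chi_{[-\infty,s]}(a)$ has $\langle\lambda,p\rangle=0$, equivalently every $p\geq\chi_{(s,+\infty)}(a)$ has $\langle\lambda,p\rangle=1$, which is exactly $\lambda\in[a>s]_{C}$. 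For the second inclusion $[a>s]\subseteq\delta^{o}(a)^{-1}((s-\epsilon,+\infty])$, I would run the converse: assuming $\lambda\in[a>s]_{C}$, the equivalent condition says every $p\in\mathcal{P}(C)$ with $p\leq\chi_{[-\infty,s]}(a)$ has $\langle\lambda,p\rangle=0$; since for any $x<s$ we have $\chi_{[-\infty,x)}(a)\leq\chi_{[-\infty,s]}(a)$, no $r\leq s$ can belong to the set over which the infimum in (\ref{equ: ujelly2}) is taken, hence $\langle\lambda,\delta^{o}(a)_{C}\rangle\geq s>s-\epsilon$, so $\lambda\in\delta^{o}(a)^{-1}((s-\epsilon,+\infty])_{C}$. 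Both arguments are pointwise in $C$ and $\lambda$, so they assemble to the claimed inclusions of opens of $\Sigma^{\downarrow}_{A}$.

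The only genuinely delicate point, and the one I would treat most carefully, is the handling of the boundary: the elementary proposition is built from $\chi_{(s,+\infty)}(a)$, whose complement is $\chi_{[-\infty,s]}(a)$, a \emph{closed} half-line spectral projection, whereas the supremum/infimum descriptions (\ref{equ: ujelly1}) and (\ref{equ: ujelly2}) are phrased with the \emph{open} half-lines $[-\infty,r)$. The reconciliation is that $\chi_{[-\infty,s]}(a)=\bigwedge_{x>s}\chi_{[-\infty,x)}(a)$ and dually $\chi_{[-\infty,x)}(a)\nearrow\chi_{[-\infty,s]}(a)$ as $x\uparrow s$ through values $x<s$ but never reaching it — which is precisely why one picks up the $\epsilon$ slack in the second inclusion and why no slack is needed in the first. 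Once this is made explicit the two inclusions are routine, and combining this lemma with the preceding one on inner daseinisation — intersecting $\delta^{i}(a)^{-1}([-\infty,r))$ with $\delta^{o}(a)^{-1}((s,+\infty])$ and using the displayed computation of $\delta(a)^{-1}(\hat{\Delta})$ just above Theorem~\ref{tut: das} — immediately yields Theorem~\ref{tut: das}.
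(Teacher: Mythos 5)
Your proposal is correct and follows essentially the same route as the paper's proof: both unwind $[a>s]_{C}$ via outer daseinisation of $\chi_{(s,+\infty)}(a)$, pass to complements $p\mapsto 1-p$ to land on projections below $\chi_{[-\infty,s]}(a)$, and then invoke the infimum description (\ref{equ: ujelly2}) of $\langle\lambda,\delta^{o}(a)_{C}\rangle$, with the open-versus-closed half-line boundary absorbed by the $\epsilon$ slack in the second inclusion (the paper makes the same point by choosing $\epsilon_{0}=\tfrac{1}{2}(\langle\lambda,\delta^{o}(a)_{C}\rangle-s)$ where you argue by contradiction with the infimum). The only cosmetic difference is that the paper phrases (\ref{equ: ujelly2}) in its complemented form $\inf\{x\mid\exists p\geq 1-\chi_{[-\infty,x)}(a),\ \langle\lambda,p\rangle=0\}$, which is the same quantity.
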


\begin{proof}
Assume that $\lambda\in\delta^{o}(a)^{-1}((s,+\infty])_{C}$, implying $\langle\lambda,\delta^{o}(a)_{C}\rangle>s$. Define
\begin{equation*}
\epsilon_{0}=\frac{1}{2}(\langle\lambda,\delta^{o}(a)_{C}\rangle-s).
\end{equation*}
Let $p\in\mathcal{P}(C)$ satisfy $p\geq\chi_{(s,+\infty]}(a)$. Using
\begin{equation*}
p\geq\chi_{(s,+\infty]}(a)\geq1-\chi_{[-\infty,s+\epsilon_{0})}(a),
\end{equation*}
and (\ref{equ: ujelly2}), which tells us that
\begin{equation*}
\text{inf}\{x\in\mathbb{R}\mid\exists p\in\mathcal{P}(C)\ \ p\geq 1-\chi_{[-\infty,x)}(a)\ \wedge\ \langle\lambda,p\rangle=0\}>s+\epsilon_{0}.
\end{equation*}
We conclude that $\langle\lambda, p\rangle=1$. This implies that $\lambda\in[a>s]_{C}$, proving the left inequality of (\ref{equ: halfouter}). For the right inequality, assume that $\lambda\in[a>s]_{C}$. Let $x<s$, and assume that $p\geq1-\chi_{[-\infty,x)}(a)$. As $x<s$, we know 
\begin{equation*}
p\geq1-\chi_{[-\infty,x)}(a)\geq\chi_{(s,+\infty]}(a).
\end{equation*}
By assumption, this implies that $\langle\lambda,p\rangle=1$. This, in turn, implies
\begin{equation*}
\text{inf}\{x\in\mathbb{R}\mid\exists p\in\mathcal{P}(C)\ \ p\geq 1-\chi_{[-\infty,x)}(a)\ \wedge\ \langle\lambda,p\rangle=0\}\geq s.
\end{equation*}
By (\ref{equ: ujelly2}) $\langle\lambda,\delta^{o}(a)\rangle\geq s$, completing the proof of the lemma.
\end{proof}

Theorem~\ref{tut: das} follows from the previous two lemmas. Note that the inclusions
\begin{equation*}
\delta(a)^{-1}(\hat{\Delta})\subseteq[a<r]\cap[a>s]\subseteq\delta(a)^{-1}(\hat{\Delta}+\epsilon).
\end{equation*}
take the same shape as for the covariant version, given by Theorem~\ref{poe: das}, especially when we note that in the covariant version
\begin{equation*}
[a\in(s,r)]=[a<r]\cap[a>s].
\end{equation*}
This equality is consequence of the fact that inner daseinisation of projections preserves meets. Outer daseinisation does not preserves meets (it does preserve joins), so in the contravariant case we only have the inclusion
\begin{equation} \label{equ: diff}
[a\in(s,r)]\subseteq[a<r]\cap[a>s],
\end{equation}
where typically the inequality is strict. \\

Theorem~\ref{tut: das} tells us that at least in some cases, such as $\underline{[a<r]}$ and $\underline{[a>s]}$ where $a$ has a discrete spectrum , elementary propositions can be obtained internally as $\underline{\delta(a)}^{-1}(\underline{\Delta})$, where $\underline{\Delta}$ is suitably chosen open of the value space.

\subsection{States as Probability Valuations} \label{subsec: states}

In this subsection we investigate the connection between states, defined algebraically as normalised positive linear functionals on $A$, and, probability valuations on the spectral presheaf $\uS_{A}$, viewed internally to $[\mathcal{C}_{A}^{op},\mathbf{Set}]$ as an internal topological space. We can think of a probability valuation on a locale (or in particular a topological space) as a probability measure, but being defined only on the opens instead of the Borel algebra generated by it.

In the covariant approach states are described as internal probability valuations on the spectral locale $\uS_{\uA}$. Probability valuations on $\uS_{\uA}$ correspond bijectively with quasi-states on $A$ \cite{hls}. 
Closely related to this are the maps introduced by D\"oring in~\cite{doe3}, which he calls measures. In the setting of von Neumann algebras, it is straightforward to verify that such maps correspond bijectively with quasi-states. 

\begin{dork} \label{def: val}
Let $X$ be a locale in any topos $\mathcal{E}$, and let $[0,1]_{l}$ be the set of lower reals between 0 and 1. A \textbf{probability valuation} on $X$ is a function $\mu:\mathcal{O}X\to[0,1]_{l}$ satisfying the following conditions. Let $U,V\in\mathcal{O}X$ and $\{U_{\lambda}\}_{\lambda\in I}\subseteq\mathcal{O}X$ be a directed subset. Then
\begin{itemize}
\item $\mu$ is monotone. If $U\leq V$, then $\mu(U)\leq\mu(V)$.
\item $\mu(\perp)=0,\  \mu(\top)=1$, where $\perp$ and $\top$ are respectively the bottom and top element of $\mathcal{O}X$.
\item $\mu(U)+\mu(V)=\mu(U\wedge V)+\mu(U\vee V)$,
\item $\mu\left(\bigvee_{\lambda\in I}U_{\lambda}\right)=\bigvee_{\lambda\in I}\mu(U_{\lambda})$.
\end{itemize}
\end{dork}

\begin{lem} \label{lem: states}
A state $\psi:A\to\mathbb{C}$ defines a probability valuation $\mu_{\psi}$ on the spectral presheaf $\uS_{A}$.
\end{lem}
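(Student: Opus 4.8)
The plan is to build the probability valuation $\mu_\psi$ from the external description of $\uS_A$ as the étale-bundle space $\pi:\Sigma_A^\downarrow\to\mathcal{C}_A^\downarrow$ of Proposition~\ref{prop: bundle}. Internally, a probability valuation is a map $\mathcal{O}\uS_A\to[0,1]_l$; externally, an open of the internal space $\uS_\downarrow$ corresponds to a clopen subobject $\underline{U}$ of $\uS$ (equivalently, an open $U\subseteq\Sigma_\downarrow$ with each $U_C$ clopen in $\Sigma_C$), and an internal lower real over $\mathcal{C}_\downarrow$ is an order-reversing function $\mathcal{C}\to\mathbb{R}$. So the data of $\mu_\psi$ amounts to assigning, for each clopen subobject $\underline{U}\subseteq\uS$ and each $C\in\mathcal{C}$, a number $\mu_\psi(\underline{U})(C)\in[0,1]$, order-reversing in $C$, and satisfying the valuation axioms internally (which by presheaf/Kripke–Joyal semantics means: satisfying them for the restricted data at each $C$, with the lower-real structure).

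First I would define, for a context $C$ and a clopen $S_C\subseteq\Sigma_C$, the value $\psi_C(S_C):=\psi(p_{S_C})$, where $p_{S_C}\in\mathrm{Proj}(C)$ is the projection corresponding to $S_C$ under Gelfand duality (so $p_{S_C}=\widehat{\cdot}^{-1}$ of the indicator function $\chi_{S_C}$). Then for a clopen subobject $\underline{U}$ I set
\begin{equation*}
\mu_\psi(\underline{U})(C):=\psi\bigl(p_{\underline{U}(C)}\bigr).
\end{equation*}
The main points to check are: (i) order-reversal, i.e. if $D\subseteq C$ then $\mu_\psi(\underline{U})(D)\geq\mu_\psi(\underline{U})(C)$ — this follows because $\underline{U}$ being a subobject of $\uS$ forces $\rho_{CD}(\underline{U}(C))\subseteq\underline{U}(D)$, hence $p_{\underline{U}(D)}\geq\delta^o(p_{\underline{U}(C)})_D\geq p_{\underline{U}(C)}$ (viewing the latter in $A$), and $\psi$ is positive and monotone on projections; (ii) that the assignment lands in $[0,1]_l$, i.e. defines a genuine lower real — one presents $\mu_\psi(\underline{U})(C)$ as the rounded down-closed set $\{q\in\mathbb{Q}\mid q<\psi(p_{\underline{U}(D)})$ for all $D\subseteq C\}$, or more cleanly uses that order-reversing $\mathbb{R}$-valued functions on $\downarrow C$ are exactly the points of $\underline{\mathbb{R}}_l$ by the earlier lemmas; (iii) the four valuation axioms. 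Monotonicity and normalisation are immediate ($\underline{\emptyset}(C)=\emptyset\mapsto 0$, $\uS(C)=\Sigma_C\mapsto\psi(1)=1$). The modularity axiom $\mu(U)+\mu(V)=\mu(U\wedge V)+\mu(U\vee V)$ reduces, at each $C$, to the identity $\psi(p)+\psi(q)=\psi(p\wedge q)+\psi(p\vee q)$ for commuting projections $p,q\in\mathrm{Proj}(C)$, which holds since $p+q=(p\wedge q)+(p\vee q)$ in a commutative algebra and $\psi$ is linear; here one uses that meets and joins of clopen subobjects are computed contextwise for clopen subobjects (this is where the clopen, rather than arbitrary, subobjects matter). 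Directed joins: internally $\bigvee_\lambda U_\lambda$ is again computed contextwise as $\bigcup_\lambda \underline{U}_\lambda(C)$ — but a directed union of clopens need not be clopen, so one must be careful; the correct reading is that $\mathcal{O}\uS_\downarrow$ is the frame generated by clopen subobjects, an internal open is an internal directed join of clopens, and one verifies $\mu_\psi$ extends to this frame with $\mu_\psi(\bigvee U_\lambda)=\sup_\lambda\mu_\psi(U_\lambda)$; this extension is forced and continuity in $\sigma$-weak/normality of $\psi$ on the von Neumann algebra $A$ gives the needed countable/directed continuity.

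The step I expect to be the main obstacle is the directed-join (Scott-continuity) axiom read correctly in the internal language. The subtlety is twofold: clopen subobjects are not closed under infinite unions, so $\mu_\psi$ must first be pinned down on clopen subobjects and then shown to extend uniquely to the generated frame $\mathcal{O}\uS_\downarrow$; and the internal statement $\mu(\bigvee_\lambda U_\lambda)=\bigvee_\lambda\mu(U_\lambda)$ with values in $[0,1]_l$ must be unwound via presheaf semantics — a directed family of internal opens at stage $C$ is a compatible family of directed families of clopens over $\downarrow C$, and the join of lower reals is the union of the corresponding down-sets of rationals, so the identity becomes a statement about suprema of $\psi$-values along directed families of projections in $A$. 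Normality of the state $\psi$ on the von Neumann algebra $A$ — i.e. $\psi(\bigvee_\lambda p_\lambda)=\sup_\lambda\psi(p_\lambda)$ for directed families of projections — is exactly what is needed, and this is where the restriction to von Neumann algebras (rather than general C*-algebras) pays off. Everything else is a routine check of the Gelfand-duality dictionary between clopen subsets of $\Sigma_C$ and projections in $C$, together with linearity and positivity of $\psi$.
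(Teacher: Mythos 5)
Your construction agrees with the paper's on clopen subobjects, but the route is different and one of your key claims is off. The paper does not define $\mu_{\psi}$ on clopens and then extend: it applies the Riesz--Markov theorem to each restriction $\psi|_{D}$ to get a probability valuation $\mu_{\psi}^{(D)}$ defined on \emph{all} of $\mathcal{O}\Sigma_{D}$ at once, and then sets $(\mu_{\psi})_{C}(U)(D)=\mu_{\psi}^{(D)}(U_{D})$ directly for every internal open $U$ (recall that the opens of $\uS_{\downarrow}$ at stage $C$ are all downward-compatible families of opens $U_{D}\subseteq\Sigma_{D}$, not just the clopen subobjects, which merely generate the topology). This sidesteps your extension problem entirely, since there is nothing to extend. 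Your two-step version (define $\psi(p_{S})$ on clopens, extend to the generated frame by Scott-continuity) can be made to work --- every open is a directed union of clopens because clopens form a basis closed under finite unions --- but it amounts to re-deriving the Riesz--Markov measure by hand.

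The genuine error is the appeal to normality of $\psi$ for the directed-join axiom. The directed join in $\mathcal{O}\uS_{\downarrow}$ is computed contextwise as the \emph{set-theoretic union} of the opens $(U_{\lambda})_{D}$, which is an open but generally not clopen subset of $\Sigma_{D}$; it is \emph{not} the clopen set corresponding to the projection-lattice join $\bigvee_{\lambda}p_{\lambda}$ (that would be the closure of the union). The identity $\mu_{\psi}^{(D)}(\bigcup_{\lambda}(U_{\lambda})_{D})=\sup_{\lambda}\mu_{\psi}^{(D)}((U_{\lambda})_{D})$ holds for the Riesz--Markov measure of \emph{any} state by inner regularity: a compact subset of a directed union of opens lies in one of them. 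No $\sigma$-weak continuity of $\psi$ enters, and indeed it cannot, since the lemma is asserted for arbitrary states (and the covariant analogue pairs valuations with quasi-states, not normal states). Normality would be needed only if the frame's join were $\bigvee_{\lambda}p_{\lambda}$, which it is not. Relatedly, the von Neumann hypothesis is used in this part of the paper to guarantee that each $\Sigma_{C}$ has a basis of clopens (so that clopen subobjects generate $\mathcal{O}\uS_{\downarrow}$), not to supply normality.
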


\begin{proof}
Using presheaf semantics we can describe what a probability valuation on $\uS_{A}$ comes down to externally for the topos $[\mathcal{C}_{A}^{op},\mathbf{Set}]$. Recall that 
\begin{equation*}
\underline{[0,1]}_{l}(C)\cong\text{OR}(\downarrow C,[0,1]). 
\end{equation*}
An internal probability valuation $\underline{\mu}:\mathcal{O}\uS_{A}\to\underline{[0,1]}_{l}$ is externally described by giving, for each $C\in\mathcal{C}$, a function
\begin{equation*}
\mu_{C}:\mathcal{O}\Sigma^{\downarrow}_{C}\to\text{OR}(\downarrow C,[0,1]),
\end{equation*}
such that, if $C\in\mathcal{C}$, and $U\in\mathcal{O}\Sigma^{\downarrow}_{C}$, then
\begin{equation*}
\forall D\in(\downarrow C)\ \  \mu_{C}(U)(D)=\mu_{D}(U\cap\Sigma^{\downarrow}_{D})(D).
\end{equation*}
We used the notation $\Sigma^{\downarrow}_{C}$ to denote $\coprod_{D\in(\downarrow C)}\Sigma_{D}$, equipped with the relative topology of $\Sigma^{\downarrow}_{A}$. The four axioms of Definition~\ref{def: val} translate externally to the following four conditions. For each $C\in\mathcal{C}$, and $D\in(\downarrow C)$,
\begin{itemize}
\item If $U\subseteq V$ in $\mathcal{O}\Sigma_{C}^{\downarrow}$, then $\mu_{C}(U)(D)\leq\mu_{C}(V)(D)$.
\item $\mu_{C}(\Sigma_{C}^{\downarrow})(D)=1$ and $\mu_{C}(\emptyset)(D)=0$.
\item If $U,V\in\mathcal{O}\Sigma^{\downarrow}_{C}$, then
\begin{equation*}
\mu_{C}(U)(D)+\mu_{C}(V)(D)=\mu_{C}(U\cap V)(D)+\mu_{C}(U\cup V)(D).
\end{equation*}
\item If $\{U_{\lambda}\}_{\lambda\in\Lambda}$ is a directed subset of $\mathcal{O}\Sigma_{C}^{\downarrow}$, then 
\begin{equation*}
\mu_{C}\left(\bigcup_{\lambda} U_{\lambda}\right)(D)=\sup_{\lambda}\left(\mu_{C}(U_{\lambda})(D)\right).
\end{equation*}
\end{itemize}
Let $\psi$ be a positive normalised linear functional on $A$, then $\psi$ defines such a valuation $\{\mu_{C}\}_{C\in\mathcal{C}}$ as follows. Restricting $\psi$ to $C\in\mathcal{C}$ gives a positive normalised linear functional $\psi|_{C}:C\to\mathbb{C}$. By the Riesz-Markov theorem this is equivalent to a probability valuation $\mu_{\psi}^{(C)}:\mathcal{O}\Sigma_{C}\to[0,1]$. Define
\begin{equation*}
(\mu_{\psi})_{C}:\mathcal{O}\Sigma^{\downarrow}_{C}\to\text{OR}(\downarrow C,[0,1]),\ \ (\mu_{\psi})_{C}(U)(D)=\mu_{\psi}^{(D)}(U_{D}),
\end{equation*}
where $U_{D}=U\cap\Sigma_{D}$. It is straightforward to verify that this definition satisfies all conditions required to define an internal probability valuation, and we leave this to the reader.
\end{proof}

\subsubsection{Does each probability valuation arise from a quasi-state?}

In the covariant topos model, the probability valuations on the spectral locale $\uS_{\uA}$ correspond bijectively with quasi-states on $A$. In particular, for von Neumann algebras without a type $I_{2}$ summand, probability valuations correspond to the states on $A$. We would like to know if each probability valuation on the spectral presheaf $\uS_{A}$ comes from a (quasi-)state on $A$. For the covariant model, the correspondence follows straight from the topos-valid version of the Riesz-Markov Theorem. But, in connection to the contravariant model, it may be more helpful to study probability valuations using presheaf semantics. 

\begin{dork}
A function $\psi:A\to\mathbb{C}$ is a \textbf{quasi-state} if it satisfies
\begin{itemize}
\item $\psi$ is positive; for each $a\in A$, $\psi(a^{\ast}a)\geq0$.
\item $\psi$ normalised; $\psi(1)=1$.
\item $\psi$ is quasi-linear; for each $C\in\mathcal{C}_{A}$, $\psi|_{C}$ is linear.
\item If $a,b\in A_{sa}$, then $\psi(a+ib)=\psi(a)+i\psi(b)$.
\end{itemize}
\end{dork}

\begin{poe}
In the covariant approach, probability valuations on $\uS_{\uA}$ correspond bijectively to quasi-states on $A$.
\end{poe}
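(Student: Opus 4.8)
The plan is to obtain the bijection as a composite of two simpler correspondences, factoring through the \emph{internal states} of $\uA$. Since $\uA$ is, internally to $[\mathcal{C}_A,\mathbf{Set}]$, a commutative unital C*-algebra with Gelfand spectrum $\uS_{\uA}$, the constructively valid Gelfand and Riesz--Markov theorems used in~\cite{hls} already supply a natural bijection between probability valuations on the internal locale $\uS_{\uA}$ (in the sense of Definition~\ref{def: val}) and the internal states of $\uA$ --- that is, the arrows $\rho\colon\uA\to\underline{\mathbb{C}}$ which the internal language perceives as positive, unital, $\underline{\mathbb{C}}$-linear functionals. It therefore suffices to prove that the internal states of $\uA$ in $[\mathcal{C}_A,\mathbf{Set}]$ are in bijection with the quasi-states of $A$; this second correspondence, which is where the actual content lies, I would establish by externalising along the equivalence $[\mathcal{C}_A,\mathbf{Set}]\cong Sh(\mathcal{C}^{\uparrow}_A)$, much as Lemma~\ref{lem: states} was proved using (co)presheaf semantics.

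First I would record that, because each principal up-set $\uparrow C$ has a least element and is hence a connected subspace of $\mathcal{C}^{\uparrow}_A$, every Alexandroff-continuous map $\uparrow C\to\mathbb{R}$ is constant; consequently the internal Dedekind reals of $[\mathcal{C}_A,\mathbf{Set}]$ form the constant functor $\Delta(\mathbb{R})$, and the internal complex numbers $\underline{\mathbb{C}}=\Delta(\mathbb{C})$. An arrow $\rho\colon\uA\to\Delta(\mathbb{C})$ is then just a family of functions $\rho_C\colon C\to\mathbb{C}$ which, by naturality together with the fact that the transition maps of $\uA$ are the inclusions $C\hookrightarrow C'$, satisfies $\rho_{C'}|_C=\rho_C$ whenever $C\subseteq C'$. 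Unfolding the internal axioms by forcing at each stage, ``$\rho$ is positive / unital / linear'' becomes exactly ``$\rho_C\colon C\to\mathbb{C}$ is a positive / unital / linear functional for every $C\in\mathcal{C}_A$'', using that the internal operations on $\Delta(\mathbb{C})$ and the internal C*-structure of $\uA$ at stage $C$ are the ordinary ones on $C$, and that positivity of an element of $C$ is absolute. Given such a $\rho$, since every $a\in A_{sa}$ generates a context $W^{*}(a)\in\mathcal{C}_A$ with $a\in W^{*}(a)$, one sets $\psi(a):=\rho_{W^{*}(a)}(a)$, which is well defined by the compatibility relations, and extends to all of $A$ by $\psi(a+ib):=\psi(a)+i\psi(b)$ for $a,b\in A_{sa}$, using the Cartesian decomposition. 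One then checks that $\psi$ is positive (every positive element of $A$ lies in a context), normalised ($\rho_C(1)=1$ for every $C$), quasi-linear ($\psi|_C=\rho_C$ is linear), and satisfies $\psi(a+ib)=\psi(a)+i\psi(b)$ by construction, so $\psi$ is a quasi-state. Conversely a quasi-state $\psi$ yields the family $\rho_C:=\psi|_C$, which is natural and, by quasi-linearity, positivity and normalisation of $\psi$, internally a state; and the two assignments are readily seen to be mutually inverse.

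The main obstacle is not a hard computation but the careful bookkeeping of this externalisation: verifying that the internal notions of positivity and $\underline{\mathbb{C}}$-linearity for a functional on the internal C*-algebra $\uA$ really do unfold, stage by stage, into the classical notions on each context $C$, and checking that the passage between a function defined only on the normal elements $\bigcup_{C\in\mathcal{C}_A}C$ (which is all that an internal state gives directly) and a function on all of $A$ via $x=\tfrac12(x+x^{*})+i\,\tfrac{1}{2i}(x-x^{*})$ is well defined and inverse to restriction. One must also make sure that both bijections are natural enough to compose cleanly with the internal Riesz--Markov correspondence. Once this is granted, the parenthetical remark in the text --- that for a von Neumann algebra without a type $I_2$ summand probability valuations correspond to honest states --- follows from the classical Gleason-type theorem that on such algebras every quasi-state is linear.
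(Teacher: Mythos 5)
Your proof is correct, but it takes a genuinely different route from the paper's. You factor the bijection through the \emph{internal states} of $\uA$: first the constructive Riesz--Markov theorem underlying~\cite{hls} identifies probability valuations on $\uS_{\uA}$ with internal states, and then you externalise internal states (via $\underline{\mathbb{R}}_{d}\cong\Delta(\mathbb{R})$ and stage-by-stage forcing) as compatible families $\{\rho_{C}\}_{C\in\mathcal{C}_{A}}$, i.e.\ quasi-states. The paper explicitly mentions this route (``the correspondence follows straight from the topos-valid version of the Riesz-Markov Theorem'') and deliberately sets it aside in favour of a direct presheaf-semantics argument: the external data of a valuation is restricted to \emph{maximal} contexts $C$, where it becomes an ordinary valuation on $\Sigma_{C}$ and hence, by the classical Riesz--Markov theorem, a state $\psi_{C}$ on $C$; the real content of the paper's proof is then the compatibility $\psi_{C_{1}}|_{D}=\psi_{C_{2}}|_{D}$ for $D\subseteq C_{1},C_{2}$, which is extracted from the valuation axioms alone using the decomposition $\Sigma^{\uparrow}_{D}=(\uparrow S)\cup(\uparrow S^{c})$ with $(\uparrow S)\cap(\uparrow S^{c})=\emptyset$ for a projection $p\in D$, together with modularity and the fact that $\mu_{D}(\uparrow S)$ is order-preserving on $\uparrow D$. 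Your version is shorter and more conceptual once the constructive machinery is granted; the paper's version is elementary at the level of presheaf semantics and is designed as a template one might hope to adapt to the contravariant model, where the spectral presheaf is not compact regular and the internal Riesz--Markov theorem is unavailable, so your argument would not transfer there. Two minor points: the constancy of continuous maps $\uparrow C\to\mathbb{R}$ is better justified via the specialisation order (continuity into a $T_{1}$ space forces $f(D)=f(E)$ whenever $D\subseteq E$) than by connectedness alone, which does not by itself imply constancy; and you should state explicitly, as the paper leaves implicit by analogy with Lemma~\ref{lem: states}, that the inverse passage from quasi-states to valuations and the verification that the two assignments are mutually inverse are routine.
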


\begin{proof}
If $\underline{\mu}$ is a probability valuation on the spectral locale, then for each $C\in\mathcal{C}$ it gives a function
\begin{equation} \label{equ: colocstate}
\mu_{C}:\mathcal{O}\Sigma^{\uparrow}_{C}\to\text{OP}(\uparrow C,[0,1]).
\end{equation}
If $C$ is a maximal context, then (\ref{equ: colocstate}) can be seen as a function
\begin{equation*}
\mu_{C}:\mathcal{O}\Sigma_{C}\to[0,1].
\end{equation*}
As $\underline{\mu}$ is a probability valuation, each such $\mu_{C}$ also satisfies the conditions for a probability valuation of $\Sigma_{C}$. This means that $\mu_{C}$ corresponds to a unique state $\psi_{C}:C\to\mathbb{C}$. These local states combine to a single quasi-state iff, given $D\subseteq C_{1}, C_{2}$, $\psi_{C_{1}}|_{D}=\psi_{C_{2}}|_{D}$. We proceed to show that this is indeed the case. If $D\subseteq C$, and $U\in\mathcal{O}\Sigma^{\uparrow}_{D}$, then
\begin{equation} \label{equ: natur}
\mu_{D}(U)(C)=\mu_{C}(U\cap\Sigma^{\uparrow}_{C})(C).
\end{equation}
Let $p\in D$ be a projection operator, corresponding to the closed open subset $S\subseteq\Sigma_{D}$. Define $(\uparrow S)$, an open of $\Sigma_{\uparrow}$, by taking $(\uparrow S)_{C}=\rho^{-1}_{CD}(S)$ if $C\supseteq D$ and $(\uparrow S)_{C}=\emptyset$ if $C\nsupseteq D$. The set-theoretic complement $S^{c}$ of $S$ in $\Sigma_{D}$ is open and closed in $\Sigma_{D}$ and also defines an open $\uparrow S^{c}$. Note that $(\uparrow S)\cap(\uparrow S^{c})=\emptyset$ and $(\uparrow S)\cup(\uparrow S^{c})=\Sigma^{\uparrow}_{D}$. This implies that for any $D\subseteq C$,
\begin{equation} \label{equ: cruxop}
\mu_{D}(\uparrow S)(D)+\mu_{D}(\uparrow S^{c})(D)=1=\mu_{D}(\uparrow S)(C)+\mu_{D}(\uparrow S^{c})(C).
\end{equation}
As $\mu_{D}(\uparrow S)$ and $\mu_{D}(\uparrow S^{c})$ are both order preserving with respect to $\uparrow D$, we conclude
\begin{equation} \label{equ: cruxopdeux}
\mu_{D}(\uparrow S)(C)=\mu_{D}(\uparrow S)(D).
\end{equation} 
Return to the situation $D\subseteq C_{1}, C_{2}$ with $C_{i}$ maximal, and recall that $p\in D$ is the projection operator corresponding to $S$ in $\Sigma_{D}$. Consequently, $p$ corresponds to $\rho^{-1}_{C_{i}D}(S)$ in $\Sigma_{C_{i}}$. We now compute
\begin{align*}
\psi_{C_{2}}(p) &=\mu_{C_{2}}(\rho^{-1}_{C_{2}D}(S))(C_{2})\\
&= \mu_{D}(\uparrow S)(C_{2})\\
&=\mu_{D}(\uparrow S)(D)\\
&= \mu_{D}(\uparrow S)(C_{1})\\
&=\mu_{C_{1}}(\rho^{-1}_{C_{1}D}(S))(C_{1})=\psi_{C_{1}}(p),
\end{align*}
where we used (\ref{equ: natur}) for the second and fifth equalities, and (\ref{equ: cruxopdeux}) for the third and fourth equalities. This proves that $\psi_{C_{2}}|_{D}=\psi_{C_{1}}|_{D}$, and demonstrates how internal valuations on the spectral locale can be identified with quasi-states. 
\end{proof}

Note that this proof cannot be directly applied to the contravariant model for probability valuations on the spectral presheaf. One obstacle is that restricting the valuation to a maximal context $C$ does not yield a probability valuation on $\Sigma_{C}$ (it is defined on $\mathcal{O}\Sigma^{\downarrow}_{C}$). Another obstacle is that for any given closed open $S$ of $\Sigma_{D}$, in general $(\downarrow S)\cap(\downarrow S^{c})\neq\emptyset$. \\

As far as the author knows, it is an open question whether there exist probability valuations on the spectral presheaf that do not arise from quasi-states.

\subsubsection{Pseudo-states}

Originally, in the contravariant model a state on $A$ was represented as a subobject $\mathfrak{w}\subseteq\uS_{A}$, called a pseudo-state. As alternatives, truth objects~\cite[Section 6.3]{di}, and later, the measures of~\cite{doe3} were introduced. We do not need to further discuss these alternatives as these lead to the same truth values as the subobjects $\mathfrak{w}$, when paired with elementary propositions.

In the assignment of a truth value to a state/proposition pair, the internal language of the topos $[\mathcal{C}_{A}^{op},\mathbf{Set}]$ plays a key role. Let $P$ denote some property of interest, e.g., one of the form $``a\in\Delta"$. This proposition is represented by a (clopen) subobject $\underline{P}\subseteq\uS_{A}$. Let $s$ be a state. In the contravariant approach, any state on $A$, in the sense of a positive normalised linear functional $\psi:A\to\mathbb{C}$, induces a subobject $\mathfrak{w}\subseteq\uS_{A}$. From the pair $(\psi,P)$, we can construct the following proposition $\mathfrak{w}\subseteq\underline{P}$ in the language of $[\mathcal{C}_{A}^{op},\mathbf{Set}]$. The associated truth value $[\mathfrak{w}\subseteq\underline{P}]:\underline{1}\to\underline{\Omega}$, is given by the sieve (i.e. downwards closed subset of $\mathcal{C}$)
\begin{equation} \label{equ: 2bveen2b}
[\mathfrak{w}\subseteq\underline{P}]=\{C\in\mathcal{C}\mid\psi(p_{C})=1\},
\end{equation}
where $p_{C}$ denotes the projection operators corresponding to the clopen subset $\underline{P}_{C}\subseteq\Sigma_{C}$. In particular, $\mathfrak{w}\subseteq\underline{P}$ is true at stage $C$ (i.e., $C\Vdash\mathfrak{w}\subseteq\underline{P}$) iff $\psi(p_{C})=1$.

\begin{rem}
If we want to stick close to classical physics, at least from the internal perspective, we can do the following. A state $s$ should correspond to an element of the state space $\uS_{A}$, so instead of thinking of $s$ as a subobject $\mathfrak{w}\subseteq\uS_{A}$, consider it as a generalised element $s:\mathfrak{w}\to\uS_{A}$. We think of the property $P$ as a subset of the state space $\uS_{A}$. This corresponds to an arrow $P: \underline{1}\to\mathcal{P}\uS_{A}$. In the language of $[\mathcal{C}^{op}_{A},\mathbf{Set}]$, we consider the term $s\in P$, represented by the arrow
\begin{equation*}
[s\in P]:\mathfrak{w}\cong\mathfrak{w}\times\underline{1}\stackrel{\langle s,P\rangle}{\longrightarrow}\uS\times\mathcal{P}\uS\stackrel{ev}{\longrightarrow}\underline{\Omega},
\end{equation*}
where $ev:\uS\times\mathcal{P}\uS\to\uO$ denotes the evaluation map of the exponential $\mathcal{P}\uS=\uO^{\uS}$. The relation between $[s\in P]:\mathfrak{w}\to\uO$ and $[\mathfrak{w}\subseteq\underline{P}]:\underline{1}\to\uO$ is as follows. The truth value $[\mathfrak{w}\subseteq\underline{P}]$ is equal to $\mathbf{true}:\underline{1}\to\uO$ (at stage $C$) iff $[s\in P]$ factors (at stage C) as
\[ \xymatrix{\mathfrak{w} \ar[rr]^{[s\in P]} \ar[dr]_{!} & & \uO \\
& \underline{1} \ar[ur]_{\mathbf{true}} } \]
\end{rem}

A state $\psi$ induces a probability valuation $\mu_{\psi}:\mathcal{O}\uS_{A}\to\underline{[0,1]}_{l}$ and a property $P$ is represented by an open of the space $\uS_{A}$, $\underline{P}:\underline{1}\to\mathcal{O}\uS_{A}$. For any $x\in[0,1]$, we can consider the proposition $\mu_{\psi}(\underline{P})\geq\underline{x}$. Here, $\underline{x}:\underline{1}\to\underline{[0,1]}_{l}$ is the constant function
\begin{equation*}
\underline{x}_{C}:(\downarrow C)\to[0,1],\ \ \underline{x}_{C}(D)=x.
\end{equation*}
Note that in this way we represent $x$ not only as a lower real number, but as a Dedekind real number as well, since the function $\underline{x}$ is constant. The truth value of this proposition is given by the sieve
\begin{equation} \label{equ: dadadabum}
[\mu_{\psi}(\underline{P})\geq\underline{x}]=\{C\in\mathcal{C}\mid \psi(p_{C})\geq x\}.
\end{equation}
In particular, for $x=1$ we get the same truth value as (\ref{equ: 2bveen2b}). 

\begin{rem}
If we represent states as internal probability valuations, we get the same truth values as normally used in the contravariant approach. One of the goals of the neorealism program was to get rid of probabilities altogether, by replacing them by generalised topos-theoretic truth values. In this paper we will not follow this interesting idea. More information can be found in~\cite{di5}. 
\end{rem}

\subsection{Summary}

By Proposition~\ref{prop: bundle} the spectral presheaf $\uS_{A}$ of the contravariant approach was considered as an internal topological space, where the topology was generated by the clopen subobjects. With respect to this topology, any self-adjoint element $a\in A_{sa}$ defined through daseinisation, a continuous map $\underline{\delta(a)}:\uS_{A}\to\underline{\mathbb{R}}_{u}\times\underline{\mathbb{R}}_{l}$, as shown by Propositions~\ref{poe: outdas} and~\ref{poe: indas}. Theorem~\ref{tut: das} showed that at least some of the elementary propositions $\underline{[a\in\Delta]}$ are of the form $\underline{\delta(a)}^{-1}(\underline{\hat{\Delta}})$, where $\underline{\hat{\Delta}}$ is a suitable open of $\underline{\mathbb{R}}_{u}\times\underline{\mathbb{R}}_{l}$. By Lemma~\ref{lem: states}, states $\psi$ on $A$ define probability valuations $\underline{\mu}_{\psi}$ on the space $\uS_{A}$.  The truth value of the proposition
\begin{equation*}
\underline{\mu}_{\psi}(\underline{[a\in\Delta]})=\underline{1},
\end{equation*}
is the same truth value normally assigned to the proposition $\underline{[a\in\Delta]}$ relative to the state $\psi$ in the contravariant approach.

Note that although we used the same objects, arrows and truth values as normally used for the contravariant approach, the perspective was different as we aimed for an internal view which was close to classical physics. 

There are some open questions. Are there probability valuations on the spectral presheaf that do not arise from quasi-states? And how deep to the connections between daseinised self-adjoint operators and the daseinisation of their spectral projections run?

\section{Differences from the Topos Set} \label{sec: differences}

Although the topos models $[\mathcal{C}^{op},\mathbf{Set}]$, and $[\mathcal{C},\mathbf{Set}]$ may resemble $\mathbf{Set}$ closely from the internal perspective, there are important differences, such as the logic  being multi-valued. In addition, the internal mathematics of these topos models validates neither the axiom of choice, nor the law of excluded middle. In this section we concentrate on these differences for both the contravariant and covariant approach.

\subsection{Contravariant Quantum Logic} \label{subsec: alternative}

In the two topos models, properties of the system under investigation, such as $[a\in\Delta]$ are represented by opens of $\mathcal{O}\Sigma_{\downarrow}$, or $\mathcal{O}\Sigma_{\uparrow}$. These two frames can be viewed as complete Heyting algebras. With this Heyting algebra structure,  $\mathcal{O}\Sigma_{\downarrow}$, or $\mathcal{O}\Sigma_{\uparrow}$ produce alternatives to the quantum logic of Birkhoff and von Neumann. At first glance these alternatives offered by the topos approaches look promising. In orthodox quantum logic, the lattice is non-distributive, making it hard to read $\wedge$ as \textit{and}, and $\vee$ as \textit{or}. Heyting algebras are always distributive. Another point is that orthodox quantum logic lacks a satisfactory implication operator, whereas a Heyting algebra has an implication by definition. In this respect the logics produced by $\mathcal{O}\Sigma_{\downarrow}$ and $\mathcal{O}\Sigma_{\uparrow}$ are looking good. However, we should realise that it is not a priori clear that the operations of these Heyting algebras $(\wedge,\vee,\neg,\to)$, have any physical significance. Consider the following simple example, which shows that recovering distributivity is not an achievement by itself.

Let $\mathcal{H}$ be a Hilbert space, and $\mathcal{P}\mathcal{H}$ be the power set of this space. Just as any power object in any topos defines a complete Heyting algebra, $\mathcal{P}\mathcal{H}$ defines a complete Heyting algebra, when ordered by inclusion. As we are working in $\mathbf{Set}$ it is even a complete Boolean algebra. Consider a proposition $[a\in\Delta]$. We associate to this proposition a projection operator $\chi_{\Delta}(a)$. Such a projection operator can be identified with a subset of $\mathcal{H}$ (which happens to be a closed subspace). In this way we represent elementary propositions $[a\in\Delta]$ as elements of a complete Boolean algebra, but the algebra $\mathcal{P}\mathcal{H}$ can hardly be called an interesting quantum logic. 

We don't expect the topos models to perform as badly as the logic $\mathcal{P}\mathcal{H}$, which completely ignores the linear structure of quantum theory. Even so, we need to investigate the Heyting algebras of the topos models.

Below, we try to understand the Heyting algebra structures of $\mathcal{O}\Sigma_{\downarrow}$, and $\mathcal{O}\Sigma_{\uparrow}$ by looking at the truth values these operations produce, when combined with states. 

In this subsection and the next one, we will \textit{assume that} $A$ \textit{is of the form} $M_{n}(\mathbb{C})$. This will make it easier to deal with the negation operation explicitly. It also implies that $\mathcal{O}\Sigma_{\downarrow}$ coincides with $\mathcal{O}_{cl}(\uS)$, the complete Heyting algebra typically considered in the contravariant model. 

We start with the Heyting algebra $\mathcal{O}\Sigma_{\downarrow}$ of the contravariant model, and treat the covariant version $\mathcal{O}\Sigma_{\uparrow}$ in the next subsection.

\subsubsection{Single Proposition}

Consider an elementary proposition $[a\in\Delta]$. We represent such a proposition as an open of $\mathcal{O}\Sigma_{\downarrow}$ by taking the outer daseinisation of the spectral projection $\chi_{\Delta}(a)$. If $\Delta$ is an open half-interval, then by Theorem~\ref{tut: das}, this $[a\in\Delta]$ is equal to $\delta(a)^{-1}(\hat{\Delta})$, for a suitably chosen $\hat{\Delta}$.

Let $\psi$ be a state on $A$, and let $\underline{\mu}:\mathcal{O}\uS\to\underline{[0,1]}_{l}$ be the internal probability valuations associated to it as in Subsection~\ref{subsec: states}. The elementary proposition $[a\in\Delta]$ defines an open $\underline{[a\in\Delta]}:\underline{1}\to\mathcal{O}\uS$. Consider the internal proposition (in the sense of a closed formula)
\begin{equation*}
\underline{\mu}(\underline{[a\in\Delta]})=\underline{1}.
\end{equation*}
By (\ref{equ: dadadabum}), this proposition is true at stage $C$ iff
\begin{equation*}
\psi(\delta^{o}(\chi_{\Delta}(a))_{C})=1,
\end{equation*}
which, by spelling out the definition of outer daseinisation of projections, is
\begin{equation*}
\psi\left(\bigwedge\{p\in\text{Proj}(C)\mid p\geq\chi_{\Delta}(a)\right)=1.
\end{equation*}
This leads to the following proposition.

\begin{poe}
For any $a\in A_{sa}$ and $\Delta\in\mathcal{O}\mathbb{R}$, and state $\psi$ the following two conditions are equivalent:
\begin{enumerate}
\item $C\Vdash \underline{\mu}((\underline{[a\in\Delta]})=\underline{1})$;
\item $\forall p\in\text{Proj}(C)\ \ p\geq\chi_{\Delta}(a)\ \rightarrow \psi(p)=1$.
\end{enumerate}
\end{poe}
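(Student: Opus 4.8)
The proposition is essentially an unpacking of the definitions already assembled in the excerpt, so the plan is to chase through the chain of equivalences already indicated in the paragraph preceding the statement, and to make the last step — moving from the meet of a family of projections to a universal quantifier over that family — rigorous using the hypothesis that $A = M_{n}(\mathbb{C})$.

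First I would recall the external description of the truth value of $\underline{\mu}(\underline{[a\in\Delta]}) = \underline{1}$. By equation~(\ref{equ: dadadabum}) (with $x = 1$), or equivalently by~(\ref{equ: 2bveen2b}), the forcing condition $C \Vdash \underline{\mu}(\underline{[a\in\Delta]}) = \underline{1}$ holds iff $\psi(p_{C}) = 1$, where $p_{C}$ is the projection in $C$ corresponding to the clopen subset $\underline{[a\in\Delta]}_{C} \subseteq \Sigma_{C}$. By the definition of the elementary proposition in the contravariant model via outer daseinisation — recall $\underline{[a\in\Delta]}_{C}$ corresponds to $\delta^{o}(\chi_{\Delta}(a))_{C}$ — this says exactly $\psi(\delta^{o}(\chi_{\Delta}(a))_{C}) = 1$. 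Spelling out the definition of outer daseinisation of projections, $\delta^{o}(\chi_{\Delta}(a))_{C} = \bigwedge\{p \in \mathrm{Proj}(C) \mid p \geq \chi_{\Delta}(a)\}$, so condition (1) is equivalent to $\psi\bigl(\bigwedge\{p \in \mathrm{Proj}(C) \mid p \geq \chi_{\Delta}(a)\}\bigr) = 1$.

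It remains to show that this last equality is equivalent to condition (2), namely $\forall p \in \mathrm{Proj}(C)\ \ p \geq \chi_{\Delta}(a) \rightarrow \psi(p) = 1$. The implication $(2) \Rightarrow (1)$ in this form is immediate in one direction only if the meet itself lies in the family: here $q := \delta^{o}(\chi_{\Delta}(a))_{C}$ is a projection in $C$ with $q \geq \chi_{\Delta}(a)$, so it is one of the $p$'s quantified over in (2), whence $\psi(q) = 1$. Conversely, if $\psi(q) = 1$ then for any $p \in \mathrm{Proj}(C)$ with $p \geq \chi_{\Delta}(a)$ we have $p \geq q$ (since $q$ is the greatest lower bound), and monotonicity of $\psi$ on positive operators — together with $\psi(1) = 1$ and $0 \leq p - q \leq 1 - q$, giving $0 \leq \psi(p) - \psi(q) = \psi(p) - 1 \leq \psi(1-q) = 0$ — forces $\psi(p) = 1$. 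This is where the structure is used: the key point is simply that $\mathrm{Proj}(C)$ is a complete lattice so the meet exists and is attained in $C$, which is guaranteed for any von Neumann algebra $C$ and in particular here; the hypothesis $A = M_{n}(\mathbb{C})$ is what the surrounding subsection assumes but is not strictly needed for this step.

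The main obstacle, such as it is, is purely bookkeeping: being careful that the correspondence between clopen subobjects of $\uS$, projections in the contexts, and the outer-daseinised spectral projections is invoked correctly, and that the forcing clause for the internal proposition $\underline{\mu}(\underline{P}) = \underline{1}$ is the one given by~(\ref{equ: dadadabum}) rather than some other reading. Once those identifications are in place the equivalence $(1) \Leftrightarrow (2)$ is the two-line monotonicity argument above.
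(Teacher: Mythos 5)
Your proposal is correct and follows essentially the same route as the paper, which derives exactly the chain $(1)\Leftrightarrow \psi(\delta^{o}(\chi_{\Delta}(a))_{C})=1 \Leftrightarrow \psi\bigl(\bigwedge\{p\in\text{Proj}(C)\mid p\geq\chi_{\Delta}(a)\}\bigr)=1$ in the discussion immediately preceding the statement and leaves the final passage from the meet to the universal quantifier implicit. Your explicit monotonicity argument for that last step (using that the meet is itself a member of the family, since $\text{Proj}(C)$ is complete and $\delta^{o}(\chi_{\Delta}(a))_{C}\geq\chi_{\Delta}(a)$) is exactly the right way to close it, and you are also right that the standing assumption $A=M_{n}(\mathbb{C})$ is not needed here.
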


The proposition $[a\in\Delta]$ is true, relative to a state $\psi$, and in context $C$, iff by performing only the measurements allowed by $C$, it is impossible to refute; given the system, prepared in the state $\psi$, a measurement of $a$ yields a value in $\Delta$ with certainty.  Of course, if we want to completely avoid operationalist notions, then this does not yield a satisfactory account of truth. 

\subsubsection{Disjunction}

Let $a_{1},a_{2}\in A_{sa}$, and $\Delta_{1},\Delta_{2}\in\mathcal{O}\mathbb{R}$. In order to obtain an understanding of the clopen subobject
\begin{equation*}
\underline{[a_{1}\in\Delta_{1}]}\vee\underline{[a_{2}\in\Delta_{2}]},
\end{equation*}
we pick an arbitrary state $\psi$ and consider the truth value of the proposition
\begin{equation*}
\underline{\mu}(\underline{[a_{1}\in\Delta_{1}]}\vee\underline{[a_{2}\in\Delta_{2}]})=\underline{1},
\end{equation*}
where $\underline{\mu}=\underline{\mu}_{\psi}$ is internal probability valuation corresponding to $\psi$. This proposition is true at stage $C\in\mathcal{C}$ (equivalently, the sieve of the truth value of the proposition contains $C$) iff
\begin{equation*}
\mu_{C}(\underline{[a_{1}\in\Delta_{1}]}_{C}\cup\underline{[a_{2}\in\Delta_{2}]}_{C})=1.
\end{equation*}
By definition of the local valuation $\mu_{C}$, this simply states that,
\begin{equation*}
\psi(\delta^{o}(\chi_{\Delta_{1}}(a_{1}))_{C}\vee\delta^{o}(\chi_{\Delta_{2}}(a_{2}))_{C})=1.
\end{equation*}
Recall that outer daseinisation of projections respects $\vee$, giving the simplification
\begin{equation*}
\psi(\delta^{o}(\chi_{\Delta_{1}}(a_{1})\vee\chi_{\Delta_{2}}(a_{2}))_{C})=1.
\end{equation*}
Spelling out the definition of outer daseinisation of projections, this is equivalent to
\begin{equation*}
\forall p\in\text{Proj}(C)\ \ p\geq\chi_{\Delta_{1}}(a_{1})\vee\chi_{\Delta_{2}}(a_{2})\ \rightarrow \psi(p)=1.
\end{equation*}
We collect this result in the following proposition.
\begin{poe} \label{lem: discontra}
If we define $\underline{[a\in\Delta]}$ using $\delta^{o}(\chi_{\Delta}(a))$, then the following two conditions are equivalent in the contravariant model:
\begin{enumerate}
\item $C\Vdash\ \underline{\mu}(\underline{[a_{1}\in\Delta_{1}]}\vee\underline{[a_{2}\in\Delta_{2}]})=\underline{1}$;
\item If $p\in\text{Proj}(C)$ satisfies $p\geq\chi_{\Delta_{1}}(a_{1})$ and $p\geq\chi_{\Delta_{2}}(a_{2})$, then $\psi(p)=1$.
\end{enumerate}
\end{poe}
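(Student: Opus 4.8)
The plan is to unwind the forcing statement in (1) by presheaf semantics, translate each ingredient into its external von Neumann algebraic description, and reduce the claim to an elementary statement about $\psi$ and a single projection. In fact almost the entire chain is already displayed in the discussion immediately preceding the proposition; what is left is to identify the end of that chain with condition (2) and to justify one direction of a final equivalence.

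Concretely, I would first record, exactly as in the derivation of (\ref{equ: dadadabum}), that $C\Vdash\underline{\mu}(\underline{[a_1\in\Delta_1]}\vee\underline{[a_2\in\Delta_2]})=\underline{1}$ holds iff $\mu_C\big(\underline{[a_1\in\Delta_1]}_C\cup\underline{[a_2\in\Delta_2]}_C\big)=1$, using that the join of two clopen subobjects of $\uS$ is computed pointwise as a union, and that for $A=M_n(\mathbb{C})$ the frame $\mathcal{O}\Sigma_\downarrow$ coincides with $\mathcal{O}_{cl}\uS$. Since $\mu_C$ is the Riesz--Markov valuation of $\psi|_C$, and under Gelfand duality the union of the clopen sets attached to $\delta^o(\chi_{\Delta_i}(a_i))_C$ corresponds to the join $\delta^o(\chi_{\Delta_1}(a_1))_C\vee\delta^o(\chi_{\Delta_2}(a_2))_C$ of projections in $C$, this condition becomes $\psi\big(\delta^o(\chi_{\Delta_1}(a_1))_C\vee\delta^o(\chi_{\Delta_2}(a_2))_C\big)=1$. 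Then I would invoke the fact (used throughout the contravariant model) that outer daseinisation of projections preserves joins, rewriting this as $\psi(\delta^o(q)_C)=1$ with $q:=\chi_{\Delta_1}(a_1)\vee\chi_{\Delta_2}(a_2)\in\text{Proj}(A)$, and spell out $\delta^o(q)_C=\bigwedge\{p\in\text{Proj}(C)\mid p\geq q\}$.

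It then suffices to prove that $\psi\big(\bigwedge\{p\in\text{Proj}(C)\mid p\geq q\}\big)=1$ iff $\psi(p)=1$ for every $p\in\text{Proj}(C)$ with $p\geq q$, together with the observation that in the projection lattice of $A$ the joint condition $p\geq\chi_{\Delta_1}(a_1)$ and $p\geq\chi_{\Delta_2}(a_2)$ is the same as $p\geq q$, so that the right-hand side is precisely condition (2). The forward direction is immediate: $\delta^o(q)_C\leq p$ for each such $p$, so monotonicity of $\psi$ on projections together with $\psi(\mathbf 1)=1$ forces $\psi(p)=1$. For the converse I would use that $\{p\in\text{Proj}(C)\mid p\geq q\}$ is downward directed (closed under meets), and that for $A=M_n(\mathbb{C})$ the lattice $\text{Proj}(C)$ is finite, so that this set is finite and $\delta^o(q)_C$ is literally one of its members, whence $\psi(\delta^o(q)_C)=1$ by hypothesis.

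The only genuinely non-formal point, which I would flag as the main obstacle if one wanted to drop the standing assumption $A=M_n(\mathbb{C})$, is exactly this converse step: that the meet of a downward directed family of projections all of $\psi$-value $1$ again has $\psi$-value $1$. In finite dimensions it is free; in general it requires normality of $\psi$ (equivalently, the existence of a support projection $s(\psi)$, so that $\psi(p)=1\iff p\geq s(\psi)$ and hence $\{p\mid\psi(p)=1\}$ is closed under arbitrary meets). Since the whole subsection works with $M_n(\mathbb{C})$, I would simply cite finiteness of $\text{Proj}(C)$ and keep the argument short.
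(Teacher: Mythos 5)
Your proposal is correct and follows essentially the same chain as the paper: unwind the forcing relation to $\mu_{C}(\underline{[a_{1}\in\Delta_{1}]}_{C}\cup\underline{[a_{2}\in\Delta_{2}]}_{C})=1$, pass via Gelfand duality to $\psi(\delta^{o}(\chi_{\Delta_{1}}(a_{1}))_{C}\vee\delta^{o}(\chi_{\Delta_{2}}(a_{2}))_{C})=1$, use that outer daseinisation preserves joins, and spell out the definition of $\delta^{o}(q)_{C}$. The only difference is that you explicitly justify the final equivalence $\psi(\bigwedge\{p\in\text{Proj}(C)\mid p\geq q\})=1\iff\forall p\geq q,\ \psi(p)=1$ (via finiteness of $\text{Proj}(C)$ for $M_{n}(\mathbb{C})$, or normality in general), a step the paper asserts without comment; this is a worthwhile clarification rather than a divergence.
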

We could interpret the result of this proposition in the following way: the internal proposition $\underline{[a_{1}\in\Delta_{1}]}\vee\underline{[a_{2}\in\Delta_{2}]}$ is true at context $C$ iff by using a single measurement allowed by $C$ it is impossible to refute both claims: for the system in state $\psi$, a measurement of $a_{i}$ yields a value in $\Delta_{i}$ with certainty, where $i\in\{1,2\}$.

\subsubsection{Conjunction}

At least on the mathematical level, the truth values from Proposition~\ref{lem: discontra} take on a simple form. This is a consequence of the fact that outer daseinisation respects joins of projection operators. How do the conjunctions fare? Consider the truth value of the proposition
\begin{equation*}
\underline{\mu}(\underline{[a_{1}\in\Delta_{1}]}\wedge\underline{[a_{2}\in\Delta_{2}]})=\underline{1}.
\end{equation*}
This proposition is true at stage $C$ iff
\begin{equation}  \label{equ: gen1}
\psi(\delta^{o}(\chi_{\Delta_{1}}(a_{1}))_{C}\wedge\delta^{o}(\chi_{\Delta_{2}}(a_{2}))_{C})=1.
\end{equation}
Spelling out the definition of outer daseinisation, and using distributivity of meets, this is equivalent to
\begin{equation*}
\psi\left(\bigwedge\{p\in\text{Proj}(C)\mid p\geq\chi_{\Delta_{1}}(a_{1})\ \text{or}\ p\geq\chi_{\Delta_{2}}(a_{2})\}\right)=1
\end{equation*}
Note that this identity implies
\begin{equation} \label{equ: gen2}
\forall p\in\text{Proj}(C)\ p\geq\chi_{\Delta_{1}}(a_{1})\ \text{or}\ p\geq\chi_{\Delta_{2}}(a_{2})\ \rightarrow \ \psi(p)=1.
\end{equation}
Next, assume (\ref{equ: gen2}). It follows that $\psi(\delta^{o}(\chi_{\Delta_{i}}(a_{i}))_{C})=1$, where $i\in\{1,2\}$. Let the clopens $S_{i}$ of $\Sigma_{C}$ correspond to the projections $\delta^{o}(\chi_{\Delta_{i}}(a_{i}))_{C}$, and let $\mu:\mathcal{O}\Sigma_{C}\to[0,1]$ denote the probability valuation corresponding to the state $\psi|_{C}$. By assumption $\mu(S_{i})=1$, for $i\in\{1,2\}$. The modular law implies
\begin{equation*}
\mu(S_{1}\cap S_{2})=\mu(S_{1})+\mu(S_{2})-\mu(S_{1}\cup S_{2})=1,
\end{equation*}
which in turn implies (\ref{equ: gen1}).

\begin{poe} \label{lem: concontra}
If we define $\underline{[a\in\Delta]}$ using $\delta^{o}(\chi_{\Delta}(a))$, then the following two conditions are equivalent in the contravariant model:
\begin{enumerate}
\item $C\Vdash\ \underline{\mu}(\underline{[a_{1}\in\Delta_{1}]}\wedge\underline{[a_{2}\in\Delta_{2}]})=\underline{1}$;
\item If $p\in\text{Proj}(C)$ satisfies $p\geq\chi_{\Delta_{1}}(a_{1})$ or $p\geq\chi_{\Delta_{2}}(a_{2})$, then $\psi(p)=1$.
\end{enumerate}
\end{poe}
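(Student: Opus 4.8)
The plan is to reduce condition (1) to an ordinary equation about the state $\psi$, and then to show that this equation is equivalent to (2); the only real work is in one of the two implications, which will rest on the modularity axiom for valuations.

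First I would observe that $\underline{P}:=\underline{[a_1\in\Delta_1]}\wedge\underline{[a_2\in\Delta_2]}$ is again a clopen subobject of $\uS_A$, because meet in the frame $\mathcal{O}\Sigma_{\downarrow}$ is computed contextwise as intersection; concretely $\underline{P}_C=\underline{[a_1\in\Delta_1]}_C\cap\underline{[a_2\in\Delta_2]}_C$ is Gelfand-dual to the projection $p_C:=\delta^o(\chi_{\Delta_1}(a_1))_C\wedge\delta^o(\chi_{\Delta_2}(a_2))_C\in\text{Proj}(C)$, the meet being taken in $C$. Applying the truth-value formula~(\ref{equ: dadadabum}) with $x=1$ to the internal probability valuation $\underline{\mu}=\underline{\mu}_{\psi}$ of Lemma~\ref{lem: states} — and using that $\underline{\mu}(\underline{P})=\underline{1}$ iff $\underline{\mu}(\underline{P})\geq\underline{1}$, since the values are lower reals in $[0,1]$ — turns (1) into the statement $\psi(p_C)=1$, i.e. into~(\ref{equ: gen1}). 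So it remains to prove~(\ref{equ: gen1})$\Leftrightarrow$(2).

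For~(\ref{equ: gen1})$\Rightarrow$(2) I would expand outer daseinisation and note that $p_C=\bigwedge\{p\in\text{Proj}(C)\mid p\geq\chi_{\Delta_1}(a_1)\ \text{or}\ p\geq\chi_{\Delta_2}(a_2)\}$, the infimum over the union of the two defining index sets; hence any $p\in\text{Proj}(C)$ dominating $\chi_{\Delta_1}(a_1)$ or $\chi_{\Delta_2}(a_2)$ satisfies $p\geq p_C$, whence $1\geq\psi(p)\geq\psi(p_C)=1$ by positivity and normalisation of $\psi$; this is~(\ref{equ: gen2}), i.e. (2). For the converse I would first note that $\delta^o(\chi_{\Delta_i}(a_i))_C$ is itself a projection of $C$ above $\chi_{\Delta_i}(a_i)$, so (2) yields $\psi(\delta^o(\chi_{\Delta_i}(a_i))_C)=1$ for $i=1,2$; writing $S_i\subseteq\Sigma_C$ for the corresponding clopens and $\mu$ for the probability valuation of $\mathcal{O}\Sigma_C$ attached to $\psi|_C$ by Riesz--Markov, this reads $\mu(S_i)=1$, and modularity (third clause of Definition~\ref{def: val}) then forces $\mu(S_1\cap S_2)=\mu(S_1)+\mu(S_2)-\mu(S_1\cup S_2)=2-\mu(S_1\cup S_2)\geq 1$, hence $=1$. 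Since $S_1\cap S_2$ is dual to $p_C$, this is exactly~(\ref{equ: gen1}).

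The step I expect to be the main obstacle is this last implication. Outer daseinisation of projections preserves joins but not meets — which is precisely why the disjunction computation behind Proposition~\ref{lem: discontra} was painless — so $p_C$ is in general strictly larger than $\delta^o(\chi_{\Delta_1}(a_1)\wedge\chi_{\Delta_2}(a_2))_C$, and condition (2) a priori only constrains $\psi$ on the two daseinised projections separately. It is the modular law, available because $\psi|_C$ restricts to an honest finitely additive valuation of total mass $1$ on the Boolean algebra of clopens of $\Sigma_C$, that converts the two separate equalities $\psi(\delta^o(\chi_{\Delta_i}(a_i))_C)=1$ into $\psi(p_C)=1$. All of this takes place under the standing hypothesis $A=M_n(\mathbb{C})$, which guarantees $\mathcal{O}\Sigma_{\downarrow}=\mathcal{O}_{cl}(\uS)$ and that each $\Sigma_C$ is finite.
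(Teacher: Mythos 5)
Your proposal is correct and follows essentially the same route as the paper: reduce (1) to $\psi(\delta^{o}(\chi_{\Delta_{1}}(a_{1}))_{C}\wedge\delta^{o}(\chi_{\Delta_{2}}(a_{2}))_{C})=1$, rewrite that meet as the infimum over the union of the two index sets to get the forward implication by positivity of $\psi$, and recover the converse from the modular law applied to the Riesz--Markov valuation of $\psi|_{C}$ on $\mathcal{O}\Sigma_{C}$. The only differences are cosmetic: you spell out the inequalities $\psi(p)\leq 1$ and $\mu(S_{1}\cup S_{2})\leq 1$ that the paper leaves implicit, and you fold the paper's post-propositional remark about outer daseinisation failing to preserve meets into your closing discussion.
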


We could interpret the result of this proposition in the following way: the internal proposition $\underline{[a_{1}\in\Delta_{1}]}\wedge\underline{[a_{2}\in\Delta_{2}]}$ is true at context $C$ iff using only measurements of $C$, we cannot refute either claim: A measurement of $a_{1}$ yields a value in $\Delta_{1}$ with certainty, and, a measurement of $a_{2}$ yields a value in $\Delta_{2}$ with certainty.

\subsubsection{Negation}

Negation in $\mathcal{O}\Sigma^{\downarrow}$ is more complicated than conjunction and disjunction. In order to describe it, we use the following notation. If $p\in C$ is a projection operator, then $S^{C}_{p}$ denotes the corresponding clopen subset of $\Sigma_{C}$. The superscript $C$ is added to distinguish between $S^{D}_{p}$ and $S^{C}_{p}$, whenever $p\in D\subseteq C$. 

The negation of $\mathcal{O}\Sigma^{\downarrow}$ is given by
\begin{equation*}
(\neg\underline{[a\in\Delta]})_{C}=\{\lambda\in\Sigma_{C}\mid\forall D\subseteq C\ \ \lambda|_{D}\notin\underline{[a\in\Delta]}_{D}\}.
\end{equation*}
This is more conveniently written as
\begin{equation*}
(\neg\underline{[a\in\Delta]})_{C}=\bigcap_{D\subseteq C}\rho^{-1}_{CD}(S^{D}_{\delta^{o}(\chi_{\Delta}(a))_{D}})^{co},
\end{equation*}
where the superscript $co$ denotes the set-theoretic complement. For any $p\in D\subseteq C$, we have $\rho^{-1}_{CD}(S^{D}_{p})=S^{C}_{p}$. By this observation, the previous expression simplifies to
\begin{equation*}
(\neg\underline{[a\in\Delta]})_{C}=\bigcap_{D\subseteq C}(S^{C}_{\delta^{o}(\chi_{\Delta}(a))_{D}})^{co}.
\end{equation*}
We can get rid of the set-theoretic complement by using the relation
\begin{equation*}
\forall C\in\mathcal{C}\ \forall p\in\text{Proj}(A)\ \ \delta^{i}(1-p)_{C}=1-\delta^{o}(p)_{C};
\end{equation*}
see e.g.~\cite[(5.59)]{di}. At the level of the Gelfand spectra, this translates to
\begin{equation*}
(S^{C}_{\delta^{o}(p)_{D}})^{co}=S^{C}_{\delta^{i}(1-p)_{D}}.
\end{equation*}
We deduce
\begin{equation*}
(\neg\underline{[a\in\Delta]})_{C}=\bigcap_{D\subseteq C}S^{C}_{\delta^{i}(1-\chi_{\Delta}(a))_{D}}=S^{C}_{\bigwedge_{D}\delta^{i}(1-\chi_{\Delta}(a))_{D}}.
\end{equation*}
Using this identity, we find that
\begin{equation*}
C\Vdash \underline{\mu}(\neg\underline{[a\in\Delta]})=\underline{1}
\end{equation*}
is equivalent to
\begin{equation} \label{equ: wtf}
\psi\left(\bigwedge_{D\in(\downarrow C)}\delta^{i}(1-\chi_{\Delta}(a))_{D}\right)=1.
\end{equation}
Assume that the intersection of $\Delta$ with the set of eigenvalues of $a$ is non empty (i.e., $\chi_{\Delta}(a)\neq0$), then condition (\ref{equ: wtf}) cannot be satisfied. This simply follows from the observation that the inner daseinisation of $1-\chi_{\Delta}(a)$ with respect to the trivial context $\mathbb{C}$ is equal to $0$. However, if we remove the bottom element $\mathbb{C}$ from $\mathcal{C}$, things become more interesting, at least mathematically. As we shall see, in this setting, the context $C$ needs to satisfy strong conditions in order for (\ref{equ: wtf}) to hold. In what follows we use the notation $q:=1-\chi_{\Delta}(a)$.

Let $p\in\text{Proj}(C)$ have the property that neither $p\leq q$, nor $1-p\leq q$. If $D\subseteq C$ is the context generated by $p$, then $\delta^{i}(q)_{D}=0$. As $\psi(0)=0$, we conclude that a necessary condition for (\ref{equ: wtf}) to hold is
\begin{equation*}
\forall p\in\text{Proj}(C)\ \ \text{either}\ \ p\leq q\ \ \text{or}\ \ 1-p\leq q.
\end{equation*}
Note that $q<1$ by assumption, so only one of the two options can hold. Also note that this condition implies that $C$ commutes with $q$. As we are working with matrix algebras $A=M_{n}(\mathbb{C})$ in this subsection, we can find projections $p_{1},\ldots,p_{k}$ in $C$ such that $p_{i}\cdot p_{j}=0$ if $i\neq j$ and $\sum_{i=1}^{k}p_{i}=1$. If (\ref{equ: wtf}) holds, we can sort these projections as follows. The set $L=\{p_{1},\ldots,p_{l}\}$ consists of the $p_{i}$ such that $p_{i}\leq q$. The set $R=\{p_{l+1},\ldots,p_{k}\}$ consists of the $p_{j}$ such that $1-p_{j}\leq q$. Note that $L\cap R=\emptyset$, and both sets are non empty. Also note that $\delta^{i}(q)_{C}=p_{1}+\ldots+p_{l}$.

Assume that $L$ has at least two elements. Let $D_{1}$, and $D_{2}$ be the context generated by the projections
\begin{equation*}
D_{1}=\{p_{1}+p_{l+1},p_{2},\ldots,p_{l},p_{l+2},\ldots,p_{k}\}'';
\end{equation*}
\begin{equation*}
D_{2}=\{p_{1},p_{2}+\ldots+p_{l}+p_{l+1}, p_{l+2},\ldots, p_{k}\}''.
\end{equation*}
Then $\delta^{i}(q)_{D_{1}}=p_{2}+\ldots+p_{l}$ and $\delta^{i}(q)_{D_{2}}=p_{1}$. We conclude that 
\begin{equation*}
\delta^{i}(q)_{D_{1}}\wedge\delta^{i}(q)_{D_{2}}=0,
\end{equation*}
and (\ref{equ: wtf}) cannot be satisfied. If  (\ref{equ: wtf}) holds, then $L$ is a singleton. In an analogous way it can be shown that $R$ contains exactly one element. This implies that the projection lattice of $C$ must be of the form $\{0, p, 1-p, 1\}$, with either $p\leq q$, or $1-p\leq q$.

\begin{poe}
If we define $\underline{[a\in\Delta]}$ using $\delta^{o}(\chi_{\Delta}(a))$, and remove $\mathbb{C}$ from $\mathcal{C}$, then the following two conditions are equivalent in the contravariant model:
\begin{enumerate}
\item $C\Vdash \underline{\mu}(\neg\underline{[a\in\Delta]})=\underline{1}$;
\item There exists a projection $p\in C$, that generates $C$, and satisfies $p\geq\chi_{\Delta}(a)$, as well as $\psi(p)=0$.
\end{enumerate}
\end{poe}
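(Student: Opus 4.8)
Throughout I write $q:=1-\chi_{\Delta}(a)$ and keep the standing assumptions of this subsection: $A=M_{n}(\mathbb{C})$, $\chi_{\Delta}(a)\neq 0$ (so $q<1$), and that the bottom element $\mathbb{C}$ has been removed from $\mathcal{C}$. The plan is to reduce everything to condition (\ref{equ: wtf}), which the discussion preceding the statement already shows to be equivalent to $C\Vdash\underline{\mu}(\neg\underline{[a\in\Delta]})=\underline{1}$, i.e. to $\psi\big(\bigwedge_{D\in(\downarrow C)}\delta^{i}(q)_{D}\big)=1$.

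For the implication (1) $\Rightarrow$ (2) I would invoke the structural analysis carried out just above the statement: if (\ref{equ: wtf}) holds, then every projection of $C$ is either $\leq q$ or has complement $\leq q$, $C$ commutes with $q$, and the two sets of minimal projections of $C$ (those below $q$, and those whose complement lies below $q$) are each singletons; hence the projection lattice of $C$ is $\{0,p,1-p,1\}$. I would then fix the label of the generator as the projection $p$ with $1-p\leq q$, which is precisely the condition $p\geq\chi_{\Delta}(a)$. Since $\mathbb{C}$ has been deleted, the only context in $(\downarrow C)$ is $C$ itself, so the meet in (\ref{equ: wtf}) collapses to $\delta^{i}(q)_{C}$; and since $q<1$ forbids $p\leq q$, the projections of $C$ lying below $q$ are just $0$ and $1-p$, whence $\delta^{i}(q)_{C}=1-p$. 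Condition (\ref{equ: wtf}) then reads $\psi(1-p)=1$, and linearity of the state $\psi$ gives $\psi(p)=0$, which is exactly (2).

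For (2) $\Rightarrow$ (1) I would start from a generator $p$ of $C$ with $p\geq\chi_{\Delta}(a)$ and $\psi(p)=0$. First I would dispose of the degenerate cases: $p=1$ contradicts $\psi(1)=1$, while $p=0$ forces $\chi_{\Delta}(a)=0$, which is excluded; so $0<p<1$, the projection lattice of $C$ is $\{0,p,1-p,1\}$, and again $(\downarrow C)=\{C\}$. As before, $p\geq\chi_{\Delta}(a)$ means $1-p\leq q$, while $q<1$ together with the orthogonality $p\perp(1-p)$ rules out $p\leq q$, so $\delta^{i}(q)_{C}=1-p$ and the meet in (\ref{equ: wtf}) equals $1-p$. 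Thus (\ref{equ: wtf}) becomes $\psi(1-p)=1$, equivalently $\psi(p)=0$, which is the hypothesis; hence (1) holds.

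The only genuinely delicate input — that (\ref{equ: wtf}) forces $C$ to be generated by a single projection, via the singleton-ness of the two sets of minimal projections — has already been established in the paragraphs preceding the statement, so what remains is essentially bookkeeping: matching the sign convention $p\geq\chi_{\Delta}(a)$ used in (2), and using the removal of $\mathbb{C}$ to reduce the meet over $(\downarrow C)$ to the single term $\delta^{i}(q)_{C}$.
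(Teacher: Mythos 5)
Your proposal is correct and follows the same route as the paper: the equivalence of (1) with condition (\ref{equ: wtf}), and the reduction of (\ref{equ: wtf}) to the statement that $\text{Proj}(C)=\{0,p,1-p,1\}$ with exactly one of $p\leq q$, $1-p\leq q$, are precisely the content of the discussion the paper uses as its proof, and the remaining bookkeeping you supply (collapsing the meet over $(\downarrow C)=\{C\}$ once $\mathbb{C}$ is removed, labelling the generator so that $1-p\leq q$, i.e.\ $p\geq\chi_{\Delta}(a)$, computing $\delta^{i}(q)_{C}=1-p$, and excluding the degenerate cases $p\in\{0,1\}$ in the converse) is exactly what the paper leaves implicit.
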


Only the coarsest contexts that commute with $\chi_{\Delta}(a)$ contribute to the truth value of $\underline{\mu}(\neg\underline{[a\in\Delta]})=\underline{1}$. This emphasis on coarser contexts makes it hard to find a physical interpretation of the negation operation, if this is possible at all. We will also encounter this problem with the more general Heyting implication. This problem may suggest that it is a mistake to seek an interpretation of the contravariant quantum logic in terms of refutation, as we did for conjunction and disjunction, but what alternatives are there? Unfortunately, it seems that negation and implication, as natural as they are from a topos theoretic perspective, do not seem to have a clear physical motivation. 

\subsection{Covariant Quantum Logic}

We continue with the complete Heyting algebra $\mathcal{O}\Sigma_{\uparrow}$ of the covariant model. As in the previous subsection we restrict to matrix algebras $A=M_{n}(\mathbb{C})$. The elementary proposition $[a\in\Delta]$ will be represented by taking the inner daseinisation of the spectral projection $\chi_{\Delta}(a)$. As long as $\Delta$ is an open interval or open half-interval, $[a\in\Delta]$ is also of the form $\delta(a)^{-1}(\hat{\Delta})$, for an appropriate $\hat{\Delta}$.

\subsubsection{Single Proposition}

The elementary proposition $[a\in\Delta]$ defines an open $\underline{[a\in\Delta]}:\underline{1}\to\mathcal{O}\uS_{\uA}$. Relative to a state $\psi$, represented by an internal probability valuation $\underline{\mu}$, we will study
\begin{equation*}
C\Vdash\underline{\mu}(\underline{[a\in\Delta]})=\underline{1}.
\end{equation*}
This is equivalent to
\begin{equation*}
\psi(\delta^{i}(\chi_{\Delta}(a))_{C})=1,
\end{equation*}
Recall that $\delta^{i}(\chi_{\Delta}(a))_{C}$ is the largest projection of $C$ that is smaller than $\chi_{\Delta}(a)$. 

\begin{poe}
For any $a\in A_{sa}$ and $\Delta\in\mathcal{O}\mathbb{R}$, and state $\psi$ the following two conditions are equivalent in the covariant model:
\begin{enumerate}
\item $C\Vdash \underline{\mu}((\underline{[a\in\Delta]})=\underline{1})$;
\item $\exists p\in\text{Proj}(C)\ \ p\leq\chi_{\Delta}(a)\ \ \text{and}\ \ \psi(p)=1$.
\end{enumerate}
\end{poe}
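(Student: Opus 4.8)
The plan is to unwind the internal proposition $C \Vdash \underline{\mu}(\underline{[a\in\Delta]}) = \underline{1}$ through presheaf semantics for $[\mathcal{C}, \mathbf{Set}]$, exactly as was done for the single-proposition case in the contravariant model. The forcing relation $C \Vdash \underline{\mu}(\underline{[a\in\Delta]}) = \underline{1}$ means that the lower real $\underline{\mu}(\underline{[a\in\Delta]})$ and the constant Dedekind real $\underline{1}$ agree at stage $C$; since $\underline{[0,1]}_l(C) \cong \mathrm{OP}(\uparrow C, [0,1])$ in the covariant topos, this forces the value at $C$ itself to equal $1$. Thus the first step is to record that $C \Vdash \underline{\mu}(\underline{[a\in\Delta]}) = \underline{1}$ is equivalent to $\mu_C(\underline{[a\in\Delta]}_C)(C) = 1$, which by the definition of the local valuation $\mu_C$ induced by $\psi$ (restrict $\psi$ to $C$, apply Riesz--Markov) is just $\psi(\delta^i(\chi_\Delta(a))_C) = 1$, as already stated in the excerpt.

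The second step is the purely algebraic translation of $\psi(\delta^i(\chi_\Delta(a))_C) = 1$ into condition (2). By definition of inner daseinisation of projections, $\delta^i(\chi_\Delta(a))_C = \bigvee\{q \in \mathrm{Proj}(C) \mid q \le \chi_\Delta(a)\}$, so it is itself a projection of $C$ lying below $\chi_\Delta(a)$. Hence if $\psi(\delta^i(\chi_\Delta(a))_C) = 1$ we may simply take $p := \delta^i(\chi_\Delta(a))_C$, which witnesses (2). Conversely, if some $p \in \mathrm{Proj}(C)$ satisfies $p \le \chi_\Delta(a)$ and $\psi(p) = 1$, then $p \le \delta^i(\chi_\Delta(a))_C$ by definition of the join, and since $\psi$ is a positive normalised functional, monotonicity gives $1 = \psi(p) \le \psi(\delta^i(\chi_\Delta(a))_C) \le \psi(1) = 1$, so $\psi(\delta^i(\chi_\Delta(a))_C) = 1$. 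This closes the equivalence.

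The only genuinely delicate point — and the step I would write most carefully — is the reduction from the internal statement to $\mu_C(\underline{[a\in\Delta]}_C)(C) = 1$, i.e. justifying that forcing equality of a lower real with the constant Dedekind real $\underline{1}$ at stage $C$ really pins down the value of the external order-preserving function at the argument $C$ (rather than only at larger contexts). This is the covariant analogue of the computation behind equation~(\ref{equ: dadadabum}) in the contravariant case, and it relies on the external description of $\underline{[0,1]}_l$ over $\mathcal{C}_\uparrow$ together with the fact that $\underline{1}$ is the constant (hence Dedekind) section. Once that is in place, everything else is the routine manipulation of daseinisation already carried out above for the contravariant single proposition, so I would simply remark that the argument is identical \emph{mutatis mutandis}, with inner daseinisation replacing outer and the inequality $p \le \chi_\Delta(a)$ replacing $p \ge \chi_\Delta(a)$.
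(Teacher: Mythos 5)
Your proposal matches the paper's argument: the paper likewise reduces the forcing relation to $\psi(\delta^{i}(\chi_{\Delta}(a))_{C})=1$ via the external description of the probability valuation, and then invokes the fact that $\delta^{i}(\chi_{\Delta}(a))_{C}$ is the largest projection of $C$ below $\chi_{\Delta}(a)$, exactly as in your second step. Your extra care about why equality with the constant section $\underline{1}$ at stage $C$ pins down the value at $C$ itself (order-preservation on $\uparrow C$ plus monotonicity of inner daseinisation) is a correct filling-in of a point the paper leaves implicit.
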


Truth of $[a\in\Delta]$ relative to the state $\psi$ and context $C$ holds iff $C$ provides us with a measurement with which we can affirm that the system, prepared in state $\psi$, upon a measurement of $a$, yields a value in $\Delta$ with certainty.

\subsubsection{Conjunction}

Our treatment of the covariant conjunction operation resembles that of the contravariant disjunction. Let $a_{1},a_{2}\in A_{sa}$, and $\Delta_{1},\Delta_{2}\in\mathcal{O}\mathbb{R}$. Consider
\begin{equation*}
C\Vdash\underline{\mu}(\underline{[a_{1}\in\Delta_{1}]}\wedge\underline{[a_{2}\in\Delta_{2}]})=\underline{1},
\end{equation*}
where $\underline{\mu}$ is internal probability valuation corresponding to $\psi$. This condition is equivalent to
\begin{equation*}
\psi(\delta^{i}(\chi_{\Delta_{1}}(a_{1}))_{C}\wedge\delta^{i}(\chi_{\Delta_{2}}(a_{2}))_{C})=1.
\end{equation*}
Recall that inner daseinisation of projections respects $\wedge$, giving the simplification
\begin{equation*}
\psi(\delta^{i}(\chi_{\Delta_{1}}(a_{1})\wedge\chi_{\Delta_{2}}(a_{2}))_{C})=1.
\end{equation*}
As in the single proposition case, this amounts to
\begin{equation*}
\exists p\in\text{Proj}(C)\ \ p\leq\chi_{\Delta_{1}}(a_{1})\wedge\chi_{\Delta_{2}}(a_{2})\ \rightarrow \psi(p)=1.
\end{equation*}
\begin{poe} \label{lem: conco}
If we define $\underline{[a\in\Delta]}$ using $\delta^{i}(\chi_{\Delta}(a))$, then the following two conditions are equivalent in the covariant model:
\begin{enumerate}
\item $C\Vdash\ \underline{\mu}(\underline{[a_{1}\in\Delta_{1}]}\wedge\underline{[a_{2}\in\Delta_{2}]})=\underline{1}$;
\item $\exists p\in\text{Proj}(C)$ such that $p\leq\chi_{\Delta_{1}}(a_{1})$, $p\leq\chi_{\Delta_{2}}(a_{2})$, and $\psi(p)=1$.
\end{enumerate}
\end{poe}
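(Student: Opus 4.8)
The plan is to run the same computation that produced Proposition~\ref{lem: discontra}, but with inner daseinisation in place of outer daseinisation, exploiting the fact that inner daseinisation of projections preserves meets (the covariant analogue of the property that outer daseinisation preserves joins). Starting from the forcing statement $C\Vdash \underline{\mu}(\underline{[a_{1}\in\Delta_{1}]}\wedge\underline{[a_{2}\in\Delta_{2}]})=\underline{1}$, I first unwind the semantics of $\wedge$ in $\mathcal{O}\Sigma_{\uparrow}$ and of the valuation $\underline{\mu}_\psi$ as set up in Subsection~\ref{subsec: states}: the proposition holds at stage $C$ iff $\mu_{C}(\underline{[a_{1}\in\Delta_{1}]}_{C}\cap\underline{[a_{2}\in\Delta_{2}]}_{C})=1$, where $\mu_C$ is the local valuation. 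Since in the covariant model $\underline{[a_{i}\in\Delta_{i}]}_{C}$ is the clopen subset of $\Sigma_C$ corresponding to $\delta^{i}(\chi_{\Delta_i}(a_i))_C$, and intersection of clopens corresponds to meet of projections, this is equivalent to $\psi(\delta^{i}(\chi_{\Delta_1}(a_1))_C \wedge \delta^{i}(\chi_{\Delta_2}(a_2))_C)=1$.

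Next I would apply the key algebraic fact that inner daseinisation of projections preserves meets: $\delta^{i}(\chi_{\Delta_1}(a_1))_C \wedge \delta^{i}(\chi_{\Delta_2}(a_2))_C = \delta^{i}(\chi_{\Delta_1}(a_1)\wedge\chi_{\Delta_2}(a_2))_C$. So the condition becomes $\psi(\delta^{i}(\chi_{\Delta_1}(a_1)\wedge\chi_{\Delta_2}(a_2))_C)=1$. Finally I invoke the single-proposition characterization already established: $\psi(\delta^{i}(q)_C)=1$ iff there exists $p\in\mathrm{Proj}(C)$ with $p\leq q$ and $\psi(p)=1$. Applying this with $q=\chi_{\Delta_1}(a_1)\wedge\chi_{\Delta_2}(a_2)$ and noting that $p\leq \chi_{\Delta_1}(a_1)\wedge\chi_{\Delta_2}(a_2)$ is the same as $p\leq\chi_{\Delta_1}(a_1)$ and $p\leq\chi_{\Delta_2}(a_2)$, we obtain exactly condition (2) of the proposition.

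The one point needing a little care — and the only place a subtlety could hide — is the passage from "$\psi$ annihilates or assigns $1$ to the meet of the daseinised projections" to the existence of a single $p\in\mathrm{Proj}(C)$ below both $\chi_{\Delta_i}(a_i)$: this is where the meet-preservation of $\delta^i$ is doing real work, since $\delta^i(q)_C$ is precisely the largest projection of $C$ below $q$, so $\psi(\delta^i(q)_C)=1$ forces $\psi$ to assign $1$ to \emph{that} projection, which is the required witness, and conversely any witness $p$ satisfies $p\leq\delta^i(q)_C$ by definition of $\delta^i$, hence monotonicity of $\psi$ on the projection lattice of the commutative algebra $C$ gives $\psi(\delta^i(q)_C)=1$. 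There is no hard obstacle here; the argument is a straightforward dualization of the contravariant disjunction computation, and the statement of the displayed equivalence in the proposition already signals (via the one-sided arrow "$\rightarrow$" in the intermediate display, which should read as the existential being the content) that no extra modular-law argument of the kind needed for the contravariant conjunction is required.
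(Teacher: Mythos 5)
Your proof is correct and follows essentially the same route as the paper: unwind the forcing relation to $\psi(\delta^{i}(\chi_{\Delta_{1}}(a_{1}))_{C}\wedge\delta^{i}(\chi_{\Delta_{2}}(a_{2}))_{C})=1$, use that inner daseinisation of projections preserves meets, and reduce to the single-proposition case. Your extra remark spelling out why $\delta^{i}(q)_{C}$ itself serves as the existential witness (and why any witness $p\leq\delta^{i}(q)_{C}$ forces $\psi(\delta^{i}(q)_{C})=1$ by positivity of $\psi$) is a welcome clarification of a step the paper leaves implicit, and your reading of the stray ``$\rightarrow$'' in the paper's intermediate display as a typo for a conjunction is right.
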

Truth of the meet of elementary propositions, relative to a state $\psi$, and context $C$ is therefore equivalent to: there is a measurement allowed by $C$, by which we can affirm that for the system, when prepared in the state $\psi$, a measurement of $a_{1}$ would yield a value in $\Delta_{1}$ with certainty, and for such a system, a measurement of $a_{2}$ would yield a value in $\Delta_{2}$ with certainty.

\subsubsection{Disjunction}

Our treatment of the covariant disjunction reminds us of the contravariant conjunction. Consider the forcing relation
\begin{equation*}
C\Vdash\underline{\mu}(\underline{[a_{1}\in\Delta_{1}]}\vee\underline{[a_{2}\in\Delta_{2}]})=\underline{1},
\end{equation*}
or, equivalently,
\begin{equation*}
\psi(\delta^{i}(\chi_{\Delta_{1}}(a_{1}))_{C}\vee\delta^{i}(\chi_{\Delta_{2}}(a_{2}))_{C})=1.
\end{equation*}
Spelling out the definition of inner daseinisation, and using distributivity of joins, this is equivalent to
\begin{equation} \label{equ: geno1}
\psi\left(\bigvee\{p\in\text{Proj}(C)\mid p\leq\chi_{\Delta_{1}}(a_{1})\ \text{or}\ p\leq\chi_{\Delta_{2}}(a_{2})\}\right)=1.
\end{equation}
Note that this identity is implied by,
\begin{equation} \label{equ: geno2}
\exists p\in\text{Proj}(C)\ p\leq\chi_{\Delta_{1}}(a_{1})\ \text{or}\ p\leq\chi_{\Delta_{2}}(a_{2})\ \ \text{and}\ \ \psi(p)=1.
\end{equation}
Note that (\ref{equ: geno1}) and (\ref{equ: geno2}) are \textbf{not} equivalent. This is because for a pair $p_{1},p_{2}\in\text{Proj}(C)$, it is possible that $\psi(p_{1}\vee p_{2})=1$, whilst neither $\psi(p_{1})=1$, nor $\psi(p_{2})=1$. The forcing relation is weaker than the affirmation of one of the two claims: given the system, prepared in state $\psi$, a measurement of $a_{i}$ yields a value in $\Delta_{i}$ with certainty ($i\in\{1,2\}$).

\subsubsection{Negation}

For matrix algebras, the negation of $\mathcal{O}\Sigma_{\uparrow}$ was first described in~\cite{chls} in terms of projections. There it was shown that the open subset $(\neg[a\in\Delta])_{C}$ of $\Sigma_{C}$ corresponds to the projection
\begin{equation*}
\bigvee\{p\in\text{Proj}(C)\mid\forall E\in(\uparrow C)\ \ p\leq1-\delta^{i}(p)_{E}\}.
\end{equation*}
Using 
\begin{equation*}
1-\delta^{i}(\chi_{\Delta}(a))_{E}=\delta^{o}(1-\chi_{\Delta}(a))_{E}=\delta^{o}(\chi_{\mathbb{R}-\Delta}(a))_{E},
\end{equation*}
we find
\begin{equation*}
(\neg[a\in\Delta])_{C}=\bigcap_{E\supseteq C}\{\lambda\in\Sigma_{C}\mid\rho^{-1}_{EC}(\lambda)\subseteq S^{E}_{\delta^{o}(\chi_{\mathbb{R}-\Delta}(a))_{E}}\}
\end{equation*}
This complicated expression makes it hard to understand the condition
\begin{equation*}
C\Vdash\underline{\mu}(\neg\underline{[a\in\Delta]})=\underline{1}.
\end{equation*}
However, if we restrict our attention to maximal contexts, then the negation simplifies considerably. 
The forcing relation is satisfied iff
\begin{equation*}
\psi(\delta^{o}(\chi_{\mathbb{R}-\Delta}(a))_{C})=1,
\end{equation*}
which is equivalent to
\begin{equation*}
\forall p\in\text{Proj}(C)\ \ p\geq\chi_{\mathbb{R}-\Delta}(a)\ \to\ \psi(p)=1.
\end{equation*}

\begin{poe}
If we define $\underline{[a\in\Delta]}$ using $\delta^{i}(\chi_{\Delta}(a))$, and consider a maximal context $C$ in $\mathcal{C}$, then the following two conditions are equivalent in the covariant model:
\begin{enumerate}
\item $C\Vdash \underline{\mu}(\neg\underline{[a\in\Delta]})=\underline{1}$;
\item For each projection $p\in C$, if $p\leq\chi_{\Delta}(a)$, then $\psi(p)=0$;
\item For each projection $p\in C$, if $p\geq\chi_{\mathbb{R}-\Delta}(a)$, then $\psi(p)=1$.
\end{enumerate}
\end{poe}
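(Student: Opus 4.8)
The plan is to build on the computation in the paragraph immediately before the statement, which already shows that for a maximal context $C$ the forcing relation (1) is equivalent to $\psi(\delta^{o}(\chi_{\mathbb{R}-\Delta}(a))_{C})=1$; from there I would derive (3) by unwinding outer daseinisation of projections, and then pass between (3) and (2) by complementation in the Boolean algebra $\text{Proj}(C)$.

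First I would spell out why (1) reduces as claimed. Since $C$ is maximal, $(\uparrow C)=\{C\}$, so in the displayed $\bigvee$-formula for $(\neg[a\in\Delta])_{C}$ the quantifier over $E\in(\uparrow C)$ collapses, and using $1-\delta^{i}(\chi_{\Delta}(a))_{C}=\delta^{o}(\chi_{\mathbb{R}-\Delta}(a))_{C}$ the underlying projection is simply $\delta^{o}(\chi_{\mathbb{R}-\Delta}(a))_{C}$; externally $(\neg\underline{[a\in\Delta]})_{C}$ is the clopen $S^{C}_{\delta^{o}(\chi_{\mathbb{R}-\Delta}(a))_{C}}$ of $\Sigma_{C}$. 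Likewise, restricting the internal probability valuation $\underline{\mu}$ to a maximal context yields an honest probability valuation $\mu_{C}\colon\mathcal{O}\Sigma_{C}\to[0,1]$ --- the order-preservation clause being vacuous over a one-point $(\uparrow C)$, and the remaining clauses being exactly those of Definition~\ref{def: val} for $\Sigma_{C}$ --- which by the Riesz--Markov theorem is $\mu_{\psi|_{C}}$. Hence $C\Vdash\underline{\mu}(\neg\underline{[a\in\Delta]})=\underline{1}$ iff $\psi(\delta^{o}(\chi_{\mathbb{R}-\Delta}(a))_{C})=1$.

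With $q:=\chi_{\mathbb{R}-\Delta}(a)$ and $\delta^{o}(q)_{C}=\bigwedge\{p\in\text{Proj}(C)\mid p\geq q\}$, the equivalence of $\psi(\delta^{o}(q)_{C})=1$ with (3) is then immediate: if $\psi(\delta^{o}(q)_{C})=1$ and $p\in\text{Proj}(C)$ satisfies $p\geq q$, then $p\geq\delta^{o}(q)_{C}$ and monotonicity of $\psi$ gives $\psi(p)=1$; conversely (3) applied to the projection $\delta^{o}(q)_{C}$ itself (which lies in $\text{Proj}(C)$ and dominates $q$) gives $\psi(\delta^{o}(q)_{C})=1$. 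For (3)$\Leftrightarrow$(2) I would use that $p\mapsto 1-p$ is an order-reversing bijection of $\text{Proj}(C)$ with $p\geq\chi_{\mathbb{R}-\Delta}(a)$ iff $1-p\leq\chi_{\Delta}(a)$, and that $\psi|_{C}$ is linear and normalised so $\psi(p)=1$ iff $\psi(1-p)=0$; the substitution $p'=1-p$ then turns (3) into (2) verbatim. Chaining the equivalences $(1)\Leftrightarrow[\psi(\delta^{o}(q)_{C})=1]\Leftrightarrow(3)\Leftrightarrow(2)$ finishes the argument.

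The only delicate point is the reduction in the second paragraph: one must check that, for a maximal context, the external description of the covariant negation genuinely degenerates to the single clopen $S^{C}_{\delta^{o}(\chi_{\mathbb{R}-\Delta}(a))_{C}}$, and that evaluating the internal valuation at a maximal context amounts to applying $\psi|_{C}$. Both rest on $(\uparrow C)$ being a singleton --- which is exactly where maximality is used --- and were already taken for granted in the text; everything downstream is routine Boolean-algebra arithmetic.
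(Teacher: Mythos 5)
Your proposal is correct and follows essentially the same route as the paper, which derives the proposition from the preceding computation: for a maximal context the cited $\bigvee$-formula for the negation collapses to the single projection $\delta^{o}(\chi_{\mathbb{R}-\Delta}(a))_{C}$, the forcing relation becomes $\psi(\delta^{o}(\chi_{\mathbb{R}-\Delta}(a))_{C})=1$, and conditions (2) and (3) follow by unwinding outer daseinisation and complementing in $\text{Proj}(C)$. Your extra care about why the restricted valuation is $\mu_{\psi|_{C}}$ only makes explicit what the paper takes for granted.
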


Hence, using only measurements allowed by the maximally refined context $C$ we cannot refute the claim that the system, when prepared in the state $\psi$, upon a measurement of $a$ yields a value outside of $\Delta$ with certainty.

If $C$ is not maximal, then $C\Vdash\underline{\mu}(\neg\underline{[a\in\Delta]})=\underline{1}$ implies that for any refinement $E$ of $C$ (i.e. $E\supseteq C$) we cannot refute the aforementioned claim using only measurements from the context $E$. As for the contravariant model, the physical content of the negation operation seems questionable.

\subsubsection{Discussion}

Guided by the truth values obtained from state-proposition pairs, it seems tempting to read the logic of the contravariant model as a logic of refutation and the logic of the covariant logic as one of affirmation. Through the correspondence $\delta^{o}(p)_{C}=1-\delta^{i}(p)_{C}$, these logics seem to be related. Even so, not all the connectives (especially the negation) received a satisfactory interpretation in this way. In addition, we might worry how such instrumentalist pictures of truth may get in the way of a more realist perspective on quantum theory.

\subsection{A Topos as an Intuitionistic Universe of Sets}

A topos is a mathematical universe of discourse. But the axiom of choice and the law of excluded middle may very well lead to contradictions in such a universe. This gives possibilities which are not allowed in the topos $\mathbf{Set}$. As an example there is a topos ($Sh(\mathbf{T})$, where $\mathbf{T}$ is a site of topological spaces with the open cover topology) such that all functions $\underline{\mathbb{R}}_{d}\to\underline{\mathbb{R}}_{d}$ are continuous~\cite[Section VI.9]{mm}. 

But are any of these new possibilities relevant to physics? And how impractical is it to lose the axiom of choice? In analysis, a branch of mathematics used extensively throughout physics, this axiom plays an important role. Consider the following quotation, taken from the well-known text~\cite{ped} on functional analysis.

\begin{quote}
This means that our acceptance of the axiom of choice determines what sort of mathematics we want to create, and it may in the end affect our mathematical description of physical realities. The same is true (albeit on a smaller scale) with the parallel axiom in euclidean geometry. But as the advocates of the axiom of choice, among them Hilbert and von Neumann, point out, several key results in modern mathematical analysis (e.g. the Tychonoff theorem, the Hahn-Banach theorem, the Krein-Milman theorem and Gelfand theory) depend crucially on the axiom of choice.
\end{quote}

Even so, we have used a topos valid version of Gelfand duality theory for the covariant model. This version of Gelfand theory~\cite{banmul3} does not rely on the axiom of choice. This works because locales take the place of topological spaces. In the same way, the other theorems mentioned by Pedersen have a localic counterpart which do not require the axiom of choice. At the risk of presenting things somewhat simpler than they really are, we might say that it is the emphasis on points in the notion of a topological space, that makes us need an axiom which generates enough points. With less emphasis on points, the axiom of choice becomes less powerful, possibly even superfluous.

Removing emphasis on points is relevant when we consider Isham's motivation for using topos theory in physics. In particular, consider the potential problem in quantum gravity of using real numbers as values for physical quantities, and, associated to that, the use of smooth manifolds for space-time. In~\cite{ish} we read:

\begin{quote}
Indeed, it is not hard to convince oneself that, from a physical perspective, the important ingredient in a space-time model is not the `points' in that space, but rather the `regions' in which physical entities can reside. In the context of a topological space, such regions are best modelled by open sets: the closed sets may be too `thin' to contain a physical entity, and the only physically meaningful closed sets are those with a non-empty interior. These reflections lead naturally to the subject of `pointless topology' and the theory of locales-a natural step along the road to topos theory.
\end{quote}

The arguments thus far only claimed that dropping the axiom of choice might not be as big a problem as one would expect at first. Thus far no arguments have shown that there is an actual advantage in dropping the axiom. In all honesty, the author does not know of any physically motivated arguments.

The extent to which the axiom of choice holds (in the internal language) depends on the topos~\cite{fs}. In the topos $\mathbf{Set}$ we can assume the full axiom of choice. For a category of presheaves $[\mathcal{C}^{op},\mathbf{Set}]$ (or copresheaves) the weaker version called the axiom of dependent choice can be assumed (by assuming the full axiom of choice for $\mathbf{Set}$), but for many categories $\mathcal{C}$, assuming the full axiom of choice leads to contradictions. If the topos is of the form $Sh(X)$, with $X$ a topological space, even the axiom of dependent choice may lead to contradictions. 

Assuming that the axiom of choice holds internally for a topos has large consequences for the topos. In particular, the topos is boolean, which means that the internal Heyting algebra $\Omega$ is an internal boolean algebra~\cite{bell2}. The contravariant model was founded in considerations of coarse-graining. In particular, the idea that truth values should correspond to down-closed subsets of $\mathcal{C}$ is a key ingredient. This leads to a topos which is not boolean, and therefore does not satisfy the (internal) axiom of choice. For the covariant model, using up-closed subsets of $\mathcal{C}$ as truth values does not seem a motivational point, but rather a consequence of the choice of topos. In~\cite{spit}, Spitters considers using the dense topology on the copresheaf topos $[\mathcal{C},\mathbf{Set}]$. The resulting topos of sheaves is a boolean topos, which satisfies the axiom of choice. Although it would be interesting to investigate this topos, we postpone further discussion to another paper.\\

One of the more striking possibilities granted by the absence of the law of excluded middle is synthetic differential geometry. Certain topoi~\cite{more} can act as models for differential geometry, using (nilpotent) infinitesimals. In the presence of the law of excluded middle all such infinitesimals would be forced to be equal to zero. As argued in detail in~\cite{bell}, (following Lawvere) these infinitesimals could allow us to deal with the continuum in a mathematical more natural way. At this point we might frown and say: thinking of quantum gravity we would like to get rid of the continuum rather than giving it a face-lift! Maybe so, but the problem of continuous vs. discrete in quantum gravity is deep and subtle. Having mathematical universes that capture the subtleties of the continuum may assist in understanding this problem better. Still, the current topos models $[\mathcal{C}^{op},\mathbf{Set}]$ and $[\mathcal{C},\mathbf{Set}]$ are not models for synthetic differential geometry. Speculating for a moment, it may be interesting to see if the categories used for studying locally covariant quantum field theories~\cite{brfrve}, can be extended to models of synthetic differential geometry, in such a way that the quantum field theories can be studied internally. \\

In the foundations of mathematics, dropping the law of excluded middle and axiom of choice is often motivated by a view on mathematical concepts as products of the mathematicians mind in favour of a view where mathematical concepts exist independent of us in some (platonic) realm. These very same mathematical concepts are used to represent ideas from physics. Of course, we should not confuse these mathematical representations with the physical ideas themselves, but, on the other hand, it can sometimes be hard to distinguish where the physics stops and the math starts (and vice versa). Modern physics is concerned with what we can say about nature, rather than what nature is (completely independent of us), a stance usually associated with Bohr. Under this premise, is it not more natural to represent the concepts of physics using abstractions which are thought of to originate from our mind rather than a platonic realm? But even if we concede this point, the relation between mathematics and physics is a complex one, therefore it remains to be seen whether or not a constructivist attitude with regards to mathematics has any physical significance, or if it is just a nuisance for the physicist whom is versed only in classical mathematics.
 
\section{Including $\ast$-homomorphisms} \label{sec: homs}

In this section we shift our attention from kinematics to dynamics in the topos models. In particular, we study how the spectral presheaf $\uS_{A}$ and the spectral locale $\uS_{\uA}$ transform under the action of a $\ast$-automorphism $h:A\to A$. Almost all of the material in this section relies on the ideas in~\cite{nuiten} or coincides with constructions from~\cite{doering, doering2}. From a distance the ideas in these references may appear different, but they are in fact closely related. In~\cite{nuiten} the emphasis is on the covariant model, and constructions such as daseinisation of self-adjoint operators are not considered. In~\cite{doering, doering2} the emphasis is on the contravariant model, and the internal perspective of the topos is not considered. Below we treat these ideas on dynamics for both topos models, with emphasis on internal reasoning.

\subsection{Covariant Model}

\subsubsection{C*-algebras}

As remarked in Section~\ref{sec: back}, the covariant model is typically applied to unital C*-algebras instead of the smaller class of von Neumann algebras. For the moment we will use all unital C*-algebras. In Subsection~\ref{subsec: dasdyn}, when daseinisation enters into the discussion, we will again restrict attention to von Neumann algebras. 

In the covariant approach, given a unital C*-algebra $A$, we assign to it a pair $([\mathcal{C}_{A},\mathbf{Set}],\underline{A})$, consisting of a topos and a unital commutative C*-algebra internal to this topos. In this section we look at the way $\ast$-homomorphisms $f:A\to B$ induce morphisms on the associated pairs $([\mathcal{C}_{A},\mathbf{Set}],\underline{A})$, $([\mathcal{C}_{B},\mathbf{Set}],\underline{B})$. We start by recalling two categories, introduced in earlier literature on topos approaches to quantum theory, and which will help in answering this question. We will subsequently show how these two categories are related. The first of the two is the category $\mathbf{cCTopos}_{N}$, introduced by Nuiten~\cite[Definition 4]{nuiten}.

\begin{dork}
The category $\mathbf{cCTopos}_{N}$ consists of the following
\begin{itemize}
\item \textit{Objects} are pairs $(\mathcal{E},\underline{A})$, where $\mathcal{E}$ is a topos and $\underline{A}$ is a unital commutative C*-algebra internal to the topos $\mathcal{E}$.
\item An \textit{arrow} $(G,\underline{g}):(\mathcal{E},\underline{A})\to(\mathcal{F},\underline{B})$, is given by a geometric morphism $G:\mathcal{E}\to\mathcal{F}$ and a $\ast$-homomorphism $\underline{g}:G^{\ast}\underline{B}\to\underline{A}$ in $\mathcal{F}$.
\item Composition of arrows is defined by $(G,\underline{g})\circ(F,\underline{f})=(G\circ F,\underline{f}\circ F^{\ast}\underline{g})$.
\end{itemize}
\end{dork}

For an arbitrary geometric morphism $G$, the object $G^{\ast}\underline{B}$ need not be a C*-algebra in $\mathcal{F}$. It is, at the very least, a semi-normed commutative $\ast$-algebra. The notion of a $\ast$-homomorphism, in the sense of a $\ast$-preserving homomorphism of algebras, still makes sense when the domain is $G^{\ast}\underline{B}$. If the geometric morphism comes from an $\ast$-homomorphism, as discussed below, then $G^{\ast}\underline{B}$ will always be an internal C*-algebra. Otherwise, we can take its Cauchy completion and turn it into an internal C*-algebra.

The second category of interest was introduced by Andreas D\"oring in~\cite{doering}:

\begin{dork}
Let $\mathcal{C}$ be any small category. The category $\mathbf{Copresh}(\mathcal{C})$ is defined by
\begin{itemize}
\item \textit{Objects} are functors $Q:\mathcal{J}\to\mathcal{C}$, where $J$ is any small category.
\item An \textit{arrow} $(f,\eta):Q_{1}\to Q_{2}$, where $Q_{i}:\mathcal{J}_{i}\to\mathcal{C}$, is given by a functor $f:\mathcal{J}_{1}\to\mathcal{J}_{2}$, and a natural transformation $\eta: Q_{1}\to f^{\ast}Q_{2}$. Here $f^{\ast}$ denotes the inverse image functor of the essential geometric morphism associated to $f$.
\end{itemize}
\end{dork}

The motivating example is when $\mathcal{C}$ is equal to $\mathbf{cuC^{\ast}}$, the category of unital commutative C*-algebras and unit preserving $\ast$-homomorphisms. Using presheaf semantics, one can prove that functors $A:\mathcal{J}\to\mathbf{ucC^{\ast}}$ correspond exactly to the unital commutative C*-algebra internal to the topos $[\mathcal{J},\mathbf{Set}]$. We can think of the objects of $\mathbf{Copresh}(\mathbf{ucC^{\ast}})$ as pairs $(\mathcal{E},\underline{A})$, where $\mathcal{E}$ is a topos (and in particular a functor category), and $\underline{A}$ is a unital commutative C*-algebra in $\mathcal{E}$.

For any pair of small categories $\mathcal{J}_{1}$, $\mathcal{J}_{2}$, a functor $f:\mathcal{J}_{1}\to\mathcal{J}_{2}$ defines an essential geometric morphism $F:[\mathcal{J}_{1},\mathbf{Set}]\to[\mathcal{J}_{2},\mathbf{Set}]$, where essential means that the inverse image functor $F^{\ast}$ (which is the left adjoint in the adjunction defining $F$) also has a left adjoint $F_{!}$. See, for example, \cite[(VII.2 Theorem 2)]{mm}. The following lemma gives a converse of this statement.

\begin{lem}{(\cite{jh1} Lemma 4.1.5)}
Let $\mathcal{J}_{1}$ and $\mathcal{J}_{2}$ be two small categories such that $\mathcal{J}_{2}$ is Cauchy-complete (i.e., all idempotent morphisms split). Then every essential geometric morphism $[\mathcal{J}_{1},\mathbf{Set}]\to[\mathcal{J}_{2},\mathbf{Set}]$ is induced by a functor $\mathcal{J}_{1}\to\mathcal{J}_{2}$ as above.
\end{lem}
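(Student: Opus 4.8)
The plan is to reconstruct the functor $f:\mathcal{J}_1\to\mathcal{J}_2$ from the behaviour of the extra left adjoint on representables. Write the defining adjoint triple of the essential geometric morphism as $F_!\dashv F^{*}\dashv F_{*}$, and for $j\in\mathcal{J}_1$ write $k_1(j)=\mathcal{J}_1(j,-)\in[\mathcal{J}_1,\mathbf{Set}]$ for the corresponding representable (the covariant Hom-functor used in presheaf semantics for copresheaves), and similarly $k_2(c)=\mathcal{J}_2(c,-)$; recall that $\mathrm{Hom}(k_i(j),P)\cong P(j)$ naturally in $P$. First I would show that each object $F_!(k_1(j))$ is \emph{tiny} (small-projective) in $[\mathcal{J}_2,\mathbf{Set}]$, i.e.\ that $\mathrm{Hom}(F_!(k_1(j)),-)$ preserves all small colimits. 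This follows from the natural isomorphisms
\begin{equation*}
\mathrm{Hom}\big(F_!(k_1(j)),Q\big)\;\cong\;\mathrm{Hom}\big(k_1(j),F^{*}Q\big)\;\cong\;(F^{*}Q)(j),
\end{equation*}
because $F^{*}$ preserves colimits (it is the inverse image part of a geometric morphism, hence has the right adjoint $F_{*}$) and evaluation at $j$ preserves colimits.

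Second, I would invoke the standard fact that in a functor category $[\mathcal{J}_2,\mathbf{Set}]$ every tiny object is a retract of a representable. Indeed, writing a tiny $T$ as the canonical colimit of the representables mapping into it, the identity morphism lies in $\mathrm{Hom}(T,T)\cong\operatorname{colim}\mathrm{Hom}(T,k_2(c))$, so it factors through some $k_2(c_0)$; thus $T$ is the splitting of an idempotent $e$ on $k_2(c_0)$, which by the Yoneda lemma corresponds to an idempotent endomorphism $\bar e$ of $c_0$ in $\mathcal{J}_2$. This is exactly where the hypothesis enters: since $\mathcal{J}_2$ is Cauchy-complete, $\bar e$ splits in $\mathcal{J}_2$, say through an object $c'$; applying $k_2$ exhibits $k_2(c')$ as a splitting of $e=k_2(\bar e)$, and as splittings of an idempotent are unique up to canonical isomorphism, $T\cong k_2(c')$. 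Applying this with $T=F_!(k_1(j))$ produces, for each $j\in\mathcal{J}_1$, an object $f(j):=c'\in\mathcal{J}_2$ together with an isomorphism $\theta_j:F_!(k_1(j))\cong k_2(f(j))$.

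Next I would upgrade $j\mapsto f(j)$ to a functor and identify $F$ with the morphism it induces. The embedding $j\mapsto k_1(j)$ is fully faithful as a functor $\mathcal{J}_1^{\mathrm{op}}\to[\mathcal{J}_1,\mathbf{Set}]$, and by the previous step $F_!$ carries it, via the $\theta_j$, into the image of the fully faithful embedding $c\mapsto k_2(c)$ of $\mathcal{J}_2$; hence the object assignment extends uniquely to a functor $f:\mathcal{J}_1\to\mathcal{J}_2$ with $F_!\circ k_1\cong k_2\circ f$ (as functors out of $\mathcal{J}_1^{\mathrm{op}}$). Now a one-line adjunction computation shows that the left Kan extension $f_!$, left adjoint to restriction $f^{*}$ along $f$, also satisfies $f_!(k_1(j))\cong k_2(f(j))$. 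So $F_!$ and $f_!$ are two colimit-preserving functors $[\mathcal{J}_1,\mathbf{Set}]\to[\mathcal{J}_2,\mathbf{Set}]$ agreeing on representables; since every object is a colimit of representables, $F_!\cong f_!$, and passing to right adjoints gives $F^{*}\cong f^{*}$. Thus $F$ is isomorphic to the essential geometric morphism induced by $f$.

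I expect the passage in the second step from ``retract of a representable'' to ``representable'' to be the heart of the matter: it is precisely there, and only there, that Cauchy-completeness of $\mathcal{J}_2$ is needed, and without it one only obtains a functor into the idempotent completion $\overline{\mathcal{J}_2}$. The remaining ingredients — tininess of $F_!(k_1(j))$, the functoriality of $f$, and the comparison of right adjoints — are routine once representability has been secured.
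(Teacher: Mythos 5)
Your proof is correct, and it is essentially the standard argument for this result: the paper itself states the lemma as a citation to Johnstone's \emph{Sketches of an Elephant} without reproducing a proof, and your route --- $F_!$ sends representables to tiny objects, tiny objects are retracts of representables, Cauchy-completeness upgrades retracts to genuine representables, and density of the representables then forces $F_!\cong f_!$ --- is precisely the one used there. You also correctly isolate the unique point where the hypothesis on $\mathcal{J}_2$ is needed.
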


If the base category $\mathcal{D}$ is a poset, then the only idempotent arrows are the identity morphisms. The base categories for the quantum topoi are therefore trivially Cauchy-complete. On the level of contexts, the order-preserving maps $\phi:\mathcal{C}_{A}\to\mathcal{C}_{B}$, correspond to geometric morphisms between the corresponding topoi, where the left-adjoint $\phi^{\ast}$ itself has a left adjoint $\phi_{!}$.

An arrow in $\mathbf{Copresh}(\mathbf{ucC^{\ast}})$ can thus be seen as a pair $(F,\underline{f}):(\mathcal{E},\underline{A})\to(\mathcal{F},\underline{B})$, where the topoi $\mathcal{E}$, $\mathcal{F}$ are functor categories, $F:\mathcal{E}\to\mathcal{F}$ is an essential geometric morphism, and $\underline{f}:\underline{A}\to F^{\ast}\underline{B}$ is a natural transformation. 

We replace the category $\mathbf{Copresh}(\mathbf{ucC^{\ast}})$ by the a related category.

\begin{dork}
The category $\mathbf{cCTopos}_{D}$ is given by:
\begin{itemize}
\item Objects are pairs $(\mathcal{E},\underline{A})$, where $\mathcal{E}$ is a topos and $\underline{A}$ is a unital commutative C*-algebra in $\mathcal{E}$.
\item Arrows $(F,\underline{f}):(\mathcal{E},\underline{A})\to(\mathcal{F},\underline{B})$ are given by a geometric morphism $F:\mathcal{E}\to\mathcal{F}$, and a $\ast$-homomorphism $\underline{f}:\underline{A}\to F^{\ast}\underline{B}$ in $\mathcal{E}$.
\item Composition of arrows is defined by $(G,\underline{g})\circ(F,\underline{f})=(G\circ F, F^{\ast}\underline{g}\circ\underline{f})$.
\end{itemize}
\end{dork}

\begin{rem}
The category $\mathbf{Copresh}(\mathcal{C})$ was introduced in~\cite{doering} in connection with another category $\mathbf{Presh}(\mathcal{D})$, to which it is dual equivalent. Here $\mathcal{D}$ is a category which is dual equivalent to $\mathcal{C}$ by assumption. When we replace $\mathbf{Copresh}(\mathbf{ucC^{\ast}})$ by $\textbf{cCTopos}_{D}$ this duality is lost. In the next section, where we look at the contravariant version of the topos approach, a category closely connected to $\mathbf{Presh}(\mathcal{D})$ is considered.
\end{rem}

Let $f:A\to B$ be a unit-preserving $\ast$-homomorphism (in $\mathbf{Set}$). Then $f$ induces an arrow
\begin{equation}
(F,\underline{f}):([\mathcal{C}_{A},\mathbf{Set}],\underline{A})\to([\mathcal{C}_{B},\mathbf{Set}],\underline{B})
\end{equation}
in $\mathbf{cCTopos}_{D}$. To see this, observe that $f$ induces an order-preserving map
\begin{equation}
\hat{f}:\mathcal{C}_{A}\to\mathcal{C}_{B},\ \ \hat{f}(C)=f[C],
\end{equation}
which in turn induces a geometric morphism $F:[\mathcal{C}_{A},\mathbf{Set}]\to[\mathcal{C}_{B},\mathbf{Set}]$. The inverse image functor acting on $\underline{B}$ is given by
\begin{equation}
F^{\ast}\underline{B}:\mathcal{C}_{A}\to\mathbf{Set},\ F^{\ast}\underline{B}(C)=\underline{B}\circ\hat{f}(C)=f[C].
\end{equation}
The internal $\ast$-homomorphism induced by $f$ is now simply given by
\begin{equation}
\underline{f}:\underline{A}\to F^{\ast}\underline{B},\ \ \ \underline{f}_{C}:C\to f[C],\ \ \ \underline{f}_{C}=f|_{C}.
\end{equation}

\begin{dork}
A unital $\ast$-homomorphism $f:A\to B$ is said to \textbf{reflect commutativity} if
\begin{equation*}
\forall a_{1},a_{2}\in A,\ \ [f(a_{1}),f(a_{2})]=0\ \Rightarrow\ [a_{1},a_{2}]=0.
\end{equation*}
\end{dork}

Note that if $f$ is injective, then $f$ reflects commutativity. A unital $\ast$-homomorphism $f:A\to B$ that reflects commutativity defines an arrow
\begin{equation}
(G,\underline{g}):([\mathcal{C}_{B},\mathbf{Set}],\underline{B})\to([\mathcal{C}_{A},\mathbf{Set}],\underline{A})
\end{equation}
in $\mathbf{cCTopos}_{N}$. As $f$ reflects commutativity we can define the order preserving map
\begin{equation}
\hat{g}:\mathcal{C}_{B}\to\mathcal{C}_{A},\ \ \hat{g}(D)=f^{-1}(D).
\end{equation}
As before, this induces an essential geometric morphism $G:[\mathcal{C}_{B},\mathbf{Set}]\to[\mathcal{C}_{A},\mathbf{Set}]$. The associated $\ast$-morphism is given by
\begin{equation}
\underline{g}:G^{\ast}\underline{A}\to\underline{B},\ \ \ \underline{g}_{D}: f^{-1}(D)\to D,\ \ \ \underline{g}_{D}=f|_{f^{-1}(D)}.
\end{equation}

Note that $\hat{g}$ is a right adjoint to $\hat{f}$. As a consequence, the geometric morphisms $F^{\ast}\dashv F_{\ast}$ and $G^{\ast}\dashv G_{\ast}$ are closely related. More precisely, $G_{\ast}=F^{\ast}$. As inverse image functors preserve colimits, it is clear that in this setting $\ast$-homomorphisms $G^{\ast}\underline{A}\to\underline{B}$ are equivalent to $\ast$-homomorphisms $\underline{A}\to F^{\ast}\underline{B}$.

\begin{rem}
In~\cite{nuiten} the $\ast$-homomorphisms under consideration are all inclusions, which obviously reflect commutativity. In general it is an open question which $\ast$-homomorphisms reflect commutativity, and which do not.
\end{rem}

We end with a small summary of the material in this section.

\begin{poe}
A unital $\ast$-homomorphism $f:A\to B$ induces an arrow
\begin{equation}
(F,\underline{f}):([\mathcal{C}_{A},\mathbf{Set}],\underline{A})\to([\mathcal{C}_{B},\mathbf{Set}],\underline{B})
\end{equation}
in $\mathbf{cCTopos}_{D}$ such that the internal $\ast$-homomorphism $\underline{f}$ is given by $\underline{f}_{C}=f|_{C}$. If $f$ reflects commutativity, then it also induces an arrow
\begin{equation}
(G,\underline{g}):([\mathcal{C}_{B},\mathbf{Set}],\underline{B})\to([\mathcal{C}_{A},\mathbf{Set}],\underline{A})
\end{equation}
in $\mathbf{cCTopos}_{N}$ such that the internal $\ast$-homomorphism $\underline{g}$ is $\underline{g}_{D}=f|_{f^{-1}(D)}$.
\end{poe}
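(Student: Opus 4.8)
The statement is essentially a summary of the constructions carried out in the preceding pages, so the proof consists in checking that those constructions are well defined and that the resulting data are genuinely arrows of $\mathbf{cCTopos}_{D}$ and $\mathbf{cCTopos}_{N}$. I would organise it in two symmetric halves.

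For the first half I would first check that $\hat{f}:\mathcal{C}_{A}\to\mathcal{C}_{B}$, $\hat{f}(C)=f[C]$, is well defined: since $f$ is a unital $\ast$-homomorphism, $f[C]$ is a unital, commutative $\ast$-subalgebra of $B$, and it is norm-closed because the range of a $\ast$-homomorphism of C*-algebras is closed; hence $f[C]\in\mathcal{C}_{B}$, and monotonicity $C\subseteq C'\Rightarrow f[C]\subseteq f[C']$ is immediate. By \cite[VII.2]{mm}, $\hat{f}$ induces an essential geometric morphism $F:[\mathcal{C}_{A},\mathbf{Set}]\to[\mathcal{C}_{B},\mathbf{Set}]$ whose inverse image is precomposition with $\hat{f}$, so $F^{\ast}\underline{B}=\underline{B}\circ\hat{f}$ is the functor $C\mapsto f[C]$, as asserted. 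Next, for each $C$ the map $\underline{f}_{C}:=f|_{C}:C\to f[C]$ is a surjective unital $\ast$-homomorphism in $\mathbf{Set}$, and for $D\subseteq C$ the naturality square commutes because both composites send $a\in D$ to $f(a)$ viewed in $f[C]$. Since a natural transformation between internal commutative C*-algebras all of whose components are $\ast$-homomorphisms is itself an internal $\ast$-homomorphism (presheaf semantics), $\underline{f}:\underline{A}\to F^{\ast}\underline{B}$ is an arrow of internal C*-algebras, so $(F,\underline{f})$ is an arrow of $\mathbf{cCTopos}_{D}$ with $\underline{f}_{C}=f|_{C}$.

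For the second half the only new hypothesis used is that $f$ reflects commutativity. I would check that $\hat{g}:\mathcal{C}_{B}\to\mathcal{C}_{A}$, $\hat{g}(D)=f^{-1}(D)$, takes values in $\mathcal{C}_{A}$: the set $f^{-1}(D)$ is a norm-closed ($f$ continuous, $D$ closed), unital ($1_{A}\in f^{-1}(D)$ since $f(1_{A})=1_{B}\in D$) $\ast$-subalgebra of $A$, and it is commutative precisely because $f$ reflects commutativity, for if $a_{1},a_{2}\in f^{-1}(D)$ then $f(a_{1}),f(a_{2})$ commute in $D$, whence $[a_{1},a_{2}]=0$. Monotonicity of $\hat{g}$ is clear, so $\hat{g}$ induces an essential geometric morphism $G:[\mathcal{C}_{B},\mathbf{Set}]\to[\mathcal{C}_{A},\mathbf{Set}]$ with $G^{\ast}\underline{A}=\underline{A}\circ\hat{g}$, the functor $D\mapsto f^{-1}(D)$. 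Since $f(f^{-1}(D))\subseteq D$, the restriction $\underline{g}_{D}:=f|_{f^{-1}(D)}:f^{-1}(D)\to D$ is a well-defined unital $\ast$-homomorphism, the naturality squares commute for the same reason as before, and presheaf semantics again promotes the family $(\underline{g}_{D})_{D}$ to an internal $\ast$-homomorphism $\underline{g}:G^{\ast}\underline{A}\to\underline{B}$; hence $(G,\underline{g})$ is an arrow of $\mathbf{cCTopos}_{N}$ with $\underline{g}_{D}=f|_{f^{-1}(D)}$. Finally I would record the adjunction $\hat{f}\dashv\hat{g}$ (from $f[C]\subseteq D\iff C\subseteq f^{-1}(D)$), which yields $G_{\ast}=F^{\ast}$ and, as already explained, makes the internal $\ast$-homomorphisms $\underline{f}$ and $\underline{g}$ correspond to one another under $G^{\ast}\dashv G_{\ast}$.

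The step needing the most care is the verification that $\hat{f}$ and $\hat{g}$ really land in the posets of commutative C*-subalgebras: closedness of $f[C]$ rests on the automatic closedness of ranges of C*-algebra homomorphisms, and commutativity of $f^{-1}(D)$ is exactly where the reflection-of-commutativity hypothesis is indispensable — without it $\hat{g}$ need not be defined, which is why the second arrow exists only under that assumption. Everything else, namely functoriality of $\hat{f}$ and $\hat{g}$, commutativity of the naturality squares, and the passage from componentwise $\ast$-homomorphisms to internal ones, is routine once one invokes \cite[VII.2]{mm}, the Cauchy-completeness remark together with \cite[Lemma 4.1.5]{jh1}, and presheaf semantics.
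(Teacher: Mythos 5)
Your proposal is correct and follows essentially the same route as the paper, which establishes this proposition by the very constructions you describe: the order-preserving maps $\hat{f}(C)=f[C]$ and $\hat{g}(D)=f^{-1}(D)$, the induced essential geometric morphisms, the componentwise restrictions $f|_{C}$ and $f|_{f^{-1}(D)}$, and the adjunction $\hat{f}\dashv\hat{g}$ giving $G_{\ast}=F^{\ast}$. Your added checks (closedness of the range $f[C]$, commutativity of $f^{-1}(D)$ via reflection of commutativity) are exactly the routine verifications the paper leaves implicit.
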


\subsubsection{Locales}

In this subsection we describe the internal $\ast$-homomorphisms of the previous subsection at the level of the Gelfand spectra. We use the following observations. As noted before, given a locale $X$, in $\mathbf{Set}$, the categories  $\mathbf{Loc}_{Sh(X)}$ and $\mathbf{Loc}/X$ are equivalent~\cite[C1.6]{jh1}. In addition, a map of locales $f:X\to Y$ induces an adjunction
\begin{equation} \label{equ: localeadjunction}
(F_{\ast}\dashv F^{\sharp})\ \ \ F_{\ast}: \mathbf{Loc}_{Sh(X)}\to\mathbf{Loc}_{Sh(Y)}: F^{\sharp}.
\end{equation}
There is a good reason for writing the left adjoint as $F_{\ast}$. The continuous map $f$ defines a geometric morphism $F:Sh(X)\to Sh(Y)$. Unlike the inverse image functor $F^{\ast}$, the direct image functor $F_{\ast}$ preserves frames and morphisms of frames. In fact, this observation is crucial for the equivalence of the categories  $\mathbf{Loc}_{Sh(X)}$ and $\mathbf{Loc}/X$. The left adjoint $F_{\ast}$ of (\ref{equ: localeadjunction}) is the restriction of the direct image functor $F_{\ast}$ to frames and frame homomorphisms. The right adjoint $F^{\sharp}$, which should not be confused with the inverse image functor $F^{\ast}$, is most easily described under the identification $\mathbf{Loc}_{Sh(X)}\cong\mathbf{Loc}/X$. As a functor $\mathbf{Loc}/Y\to\mathbf{Loc}/X$ it maps a bundle $Z\to Y$, to the pullback of this bundle along the map $f:X\to Y$.

In \cite{nuiten}, in addition to $\mathbf{cCTopos}_{N}$, another, related, category was introduced.

\begin{dork}
The category $\mathbf{spTopos}_{N}$ of spaced topoi is given by
\begin{itemize}
\item Objects are pairs $(\mathcal{E},\underline{L})$, where $\mathcal{E}$ is a topos and $\underline{L}$ is a locale in $\mathcal{E}$.
\item An arrow $(G,\underline{s}):(\mathcal{E},\underline{L})\to(\mathcal{F},\underline{M})$ is given by a geometric morphism $G:\mathcal{E}\to\mathcal{F}$ and a locale map $\underline{s}:G_{\ast}\underline{L}\to\underline{M}$ in $\mathcal{F}$.
\item Composition of arrows is defined as $(G,\underline{t})\circ(F,\underline{s})=(G\circ F,\underline{t}\circ G_{\ast}\underline{s})$.
\end{itemize}
\end{dork}

A unital C*-algebra $A$ defines a spaced topos $([\mathcal{C}_{A},\mathbf{Set}],\uS_{\underline{A}})$, where $\uS_{\underline{A}}$ denotes the internal Gelfand spectrum of $\underline{A}$. A unital $\ast$-homomorphism $f:A\to B$ that reflects commutativity defines an arrow in $\mathbf{spTopos}_{N}$ as follows. We know that $f$ induces a $\ast$-homomorphism $\underline{g}:G^{\ast}\underline{A}\to\underline{B}$. By Gelfand duality this defines a locale map on the spectra
\begin{equation*}
\uS(\underline{g}):\uS_{\underline{B}}\to\uS_{G^{\ast}\underline{A}}.
\end{equation*}

Recall from Proposition~\ref{prop: extspecco} that the spectrum $\underline{\Sigma}_{\underline{B}}$ can be described externally as the bundle of topological spaces
\begin{equation}
\pi_{B}:\Sigma^{\uparrow}_{\underline{B}}\to\mathcal{C}^{\uparrow}_{B},\ \ (D,\lambda)\mapsto D,
\end{equation}
Analogously, the spectrum $\uS_{G^{\ast}\underline{A}}$ can be represented by the bundle
\begin{equation}
\Sigma_{G^{\ast}\underline{A}}^{\uparrow}\to\mathcal{C}^{\uparrow}_{B},\ \ (D,\lambda)\mapsto D,
\end{equation}
where, as sets, $\Sigma_{G^{\ast}\underline{A}}^{\uparrow}$ is equal to $\coprod_{D\in\mathcal{C}_{B}}\Sigma_{f^{-1}(D)}$, and $U$ is open in  $\Sigma_{G^{\ast}\underline{A}}^{\uparrow}$ iff the following two conditions hold:
\begin{enumerate}
\item If $D\in\mathcal{C}_{B}$, then $U_{D}:=U\cap\Sigma_{f^{-1}(D)}$ is open in $\Sigma_{f^{-1}(D)}$;
\item If $D_{1}\subseteq D_{2}$, then $\rho_{f^{-1}(D_{2})f^{-1}(D_{1})}^{-1}(U_{D_{1}})\subseteq U_{D_{2}}$, where $\rho_{f^{-1}(D_{2})f^{-1}(D_{1})}:\Sigma_{f^{-1}(D_{2})}\to\Sigma_{f^{-1}(D_{1})}$ is the 
restriction map.
\end{enumerate}

A straightforward calculation (or \cite[Lemma 3.4]{nuiten}) reveals that this bundle is simply the bundle $\pi_{A}:\Sigma_{\underline{A}}^{\uparrow}\to\mathcal{C}^{\uparrow}_{A}$, pulled back along the order-preserving function $\hat{g}:\mathcal{C}^{\uparrow}_{B}\to\mathcal{C}^{\uparrow}_{A}$, $D\mapsto f^{-1}(D)$, seen as an Alexandroff-continuous map. Externally, the map $\uS(\underline{g})$, is given by
\[ \xymatrix{ \Sigma^{\uparrow}_{\underline{B}} \ar[rr]^{\Sigma(\underline{g})} \ar[dr]_{\pi_{B}} & & \hat{g}^{\ast}\Sigma^{\uparrow}_{\underline{A}} \ar[dl]^{\hat{g}^{\ast}\pi_{A}} \\
& \mathcal{C}^{\uparrow}_{B} &} \]
\begin{equation} \label{equ: nuitv}
\Sigma(\underline{g}):\Sigma^{\uparrow}_{\underline{B}}\to\hat{g}^{\ast}\Sigma^{\uparrow}_{\underline{A}},\ \ (D,\lambda)\mapsto (D,\lambda\circ f|_{f^{-1}(D)}).
\end{equation}

Note that internally this is a locale map $\uS_{\underline{B}}\to G^{\sharp}\uS_{\underline{A}}$ in $[\mathcal{C}_{B},\mathbf{Set}]$. This is, in turn, equivalent to a locale map $G_{\ast}\uS_{\underline{B}}\to\uS_{\underline{A}}$ in $[\mathcal{C}_{A},\mathbf{Set}]$. In this way, the $\ast$-homomorphism $f: A\to B$ defines a morphism
\begin{equation*}
([\mathcal{C}_{B},\mathbf{Set}],\uS_{\underline{B}})\to([\mathcal{C}_{A},\mathbf{Set}],\uS_{\underline{A}})
\end{equation*}
in the category $\mathbf{spTopos}_{N}$.

Next, drop the assumption that $f:A\to B$ reflects commutativity. From the previous subsection we know that $f$ defines a $\ast$-homomorphism $\underline{f}:\underline{A}\to F^{\ast}\underline{B}$ in $[\mathcal{C}_{A},\mathbf{Set}]$. As before, by Gelfand duality this defines a continuous map of locales on the spectra
\begin{equation}
\uS(\underline{f}):\underline{\Sigma}_{F^{\ast}\underline{B}}\to\underline{\Sigma}_{\underline{A}}.
\end{equation}

The spectrum $\uS_{F^{\ast}\underline{B}}$ can be represented by the bundle
\begin{equation} \label{equ: doebundle}
\Sigma_{F^{\ast}\underline{B}}^{\uparrow}\to\mathcal{C}^{\uparrow}_{A},\ \ (C,\lambda)\mapsto C,
\end{equation}
where, as sets, $\Sigma_{F^{\ast}\underline{B}}^{\uparrow}$ is equal to $\coprod_{C\in\mathcal{C}_{A}}\Sigma_{f[C]}$, and $U$ is open in  $\Sigma_{F^{\ast}\underline{B}}^{\uparrow}$ iff the following two conditions hold:
\begin{enumerate}
\item If $C\in\mathcal{C}_{A}$, then $U_{C}:=U\cap\Sigma_{f[C]}$ is open in $\Sigma_{f[C]}$;
\item If $C_{1}\subseteq C_{2}$, then $\rho_{f[C_{2}]f[C_{1}]}^{-1}(U_{C_{1}})\subseteq U_{C_{2}}$, where $\rho_{f[C_{2}]f[C_{1}]}:\Sigma_{f[C_{2}]}\to\Sigma_{f[C_{1}]}$ is the restriction map.
\end{enumerate}

This bundle can be identified as $\pi_{B}:\Sigma^{\uparrow}_{\underline{B}}\to\mathcal{C}^{\uparrow}_{B}$, pulled back along $\hat{f}:\mathcal{C}_{A}^{\uparrow}\to\mathcal{C}^{\uparrow}_{B}$, $C\mapsto f[C]$. Externally, the locale map $\uS(\underline{f}):\uS_{F^{\ast}\underline{B}}\to\uS_{\underline{A}}$ is given by the continuous function
\begin{equation} \label{equ: d-morph bundle}
\Sigma(\underline{f}):\hat{f}^{\ast}\Sigma^{\uparrow}_{\underline{B}}\to\Sigma^{\uparrow}_{\underline{A}},\ \ (C,\lambda)\mapsto (C,\lambda\circ f|_{C}),
\end{equation}
over $\mathcal{C}^{\uparrow}_{A}$. Note that in (\ref{equ: d-morph bundle}) $\lambda\in\Sigma_{f[C]}$. Internally we obtain a locale map $F^{\sharp}\uS_{\underline{B}}\to\uS_{\underline{A}}$.

For the remainder of this subsection, assume once again that $f$ reflects commutativity. How is the locale map $\Sigma(\underline{f}): F^{\sharp}\uS_{\underline{B}}\to\uS_{\underline{A}}$, obtained from the $\ast$-homomorphism $\underline{f}:\underline{A}\to F^{\ast}\underline{B}$ related to the locale map\footnote{Using $\Sigma(\underline{g})$ for this map is a slight abuse of notation, as this name was used earlier to denote the corresponding locale map $\uS_{\underline{B}}\to G^{\sharp}\uS_{\underline{A}}$.} $\Sigma(\underline{g}): G_{\ast}\uS_{\underline{B}}\to\uS_{\underline{A}}$ obtained from the $\ast$-homomorphism $\underline{g}: G^{\ast}\underline{A}\to\underline{B}$? We know that $G_{\ast}= F^{\ast}$, but this does not imply that on the level of locales $G_{\ast}$ and $F^{\sharp}$ are the same. In fact, $G_{\ast}\uS_{\underline{B}}$ and $F^{\sharp}\uS_{\underline{B}}$ are slightly different locales.

The locale $\uS_{\underline{B}}$ corresponds to a frame object $\mathcal{O}\uS_{\underline{B}}$ in the topos $[\mathcal{C}_{B},\mathbf{Set}]$. For $C\in\mathcal{C}_{A}$,
\begin{equation} \label{berek}
G_{\ast}(\mathcal{O}\uS_{\underline{B}})(C)=F^{\ast}(\mathcal{O}\uS_{\underline{B}})(C)=\mathcal{O}\uS_{\underline{B}}(f[C]).
\end{equation}
Using the external description $\Sigma^{\uparrow}_{\underline{B}}$, the right-hand side of (\ref{berek}) is given by the subspace topology of $\Sigma^{\uparrow}_{\underline{B}}$ on the subset $\coprod_{D\in\mathcal{C}_{B}\cap(\uparrow f[C])}\Sigma_{D}$.
\begin{equation} \label{berek2}
G_{\ast}(\mathcal{O}\uS_{\underline{B}})(C)=\mathcal{O}\left(\coprod_{D\in\mathcal{C}_{B}\cap(\uparrow f[C])}\Sigma_{D}\right).
\end{equation}
On the other hand
\begin{equation} \label{berek3}
\mathcal{O}(F^{\sharp}\uS_{\underline{B}})(C)=\mathcal{O}\left(\coprod_{C'\in\mathcal{C}_{A}\cap(\uparrow C)}\Sigma_{f[C]}\right)
\end{equation}
where, on the right-hand side we take the subspace topology from $\hat{f}^{\ast}\Sigma_{\underline{B}}^{\uparrow}$. We can now see that the sets (\ref{berek2}) and (\ref{berek3}) are different. The only difference is that $G_{\ast}(\mathcal{O}\uS_{\underline{B}})(C)$ considers all contexts $D\in\mathcal{C}_{B}$ which are above $f[C]$, whereas $\mathcal{O}(F^{\sharp}\uS_{\underline{B}})(C)$ only considers those contexts which come from an $C'\in\mathcal{C}_{A}$. As the locale map $\Sigma(\underline{g})$ comes from a $\ast$-homomorphism in $Sh(\mathcal{C}^{\uparrow}_{\underline{B}})$, and the locale map $\Sigma(\underline{f})$ comes from a $\ast$-homomorphism in $Sh(\mathcal{C}^{\uparrow}_{\underline{A}})$ this difference was to be expected.

\begin{poe} \label{prop: flowco}
A unital $\ast$-homomorphism $f:A\to B$ induces a continuous map of locales $\uS(\underline{f}):F^{\sharp}\uS_{\underline{B}}\to\uS_{\underline{A}}$ in $[\mathcal{C}_{A},\mathbf{Set}]$. The external description of this map is given by the continuous function 
\begin{equation*}
\Sigma(\underline{f}):\hat{f}^{\ast}\Sigma^{\uparrow}_{\underline{B}}\to\Sigma^{\uparrow}_{\underline{A}},\ \ (C,\lambda)\mapsto(C,\lambda\circ f|_{C}). 
\end{equation*}
If $f$ reflects commutativity, then there is also a locale map $\uS(\underline{g}):\uS_{\underline{B}}\to G^{\sharp}\uS_{\underline{A}}$ in $[\mathcal{C}_{B},\mathbf{Set}]$ externally given by (\ref{equ: nuitv}).
\end{poe}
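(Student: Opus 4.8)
The plan is to assemble the ingredients developed in the discussion preceding the statement; the only substantive point is the compatibility of the internal Gelfand spectrum with the inverse image functors involved. First I would recall, from the previous proposition, that $f$ yields the order-preserving map $\hat f:\mathcal{C}^{\uparrow}_{A}\to\mathcal{C}^{\uparrow}_{B}$, $C\mapsto f[C]$, hence an essential geometric morphism $F:[\mathcal{C}_{A},\mathbf{Set}]\to[\mathcal{C}_{B},\mathbf{Set}]$, together with the internal $\ast$-homomorphism $\underline f:\underline A\to F^{\ast}\underline B$ whose component at $C$ is the corestriction $f|_{C}:C\to f[C]$. Applying the topos-valid Gelfand duality functor contravariantly to $\underline f$ inside $[\mathcal{C}_{A},\mathbf{Set}]$ produces an internal continuous map of locales $\underline{\Sigma}(\underline f):\underline{\Sigma}_{F^{\ast}\underline B}\to\uS_{\uA}$, so all that remains is to identify $\underline{\Sigma}_{F^{\ast}\underline B}$ with $F^{\sharp}\uS_{\uB}$ and to read off the external description of the map.

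Second, I would establish $\underline{\Sigma}_{F^{\ast}\underline B}\cong F^{\sharp}\uS_{\uB}$. Since constructive Gelfand duality is geometric, the spectrum is preserved by inverse image functors; in external terms, under $\mathbf{Loc}_{Sh(T)}\cong\mathbf{Loc}/T$ the bundle describing $\underline{\Sigma}_{F^{\ast}\underline B}$ over $\mathcal{C}^{\uparrow}_{A}$ is the pullback of the bundle $\pi_{B}:\Sigma^{\uparrow}_{\underline B}\to\mathcal{C}^{\uparrow}_{B}$ of Proposition~\ref{prop: extspecco} along $\hat f$. Concretely, one computes the frame object $\mathcal{O}(F^{\ast}\uS_{\uB})$ stalkwise: at $C$ it is $\mathcal{O}\uS_{\uB}(f[C])$, and unwinding the topology of $\Sigma^{\uparrow}_{\underline B}$ identifies this with the subspace topology on $\coprod_{C'\in\mathcal{C}_{A}\cap(\uparrow C)}\Sigma_{f[C']}$ inside $\hat f^{\ast}\Sigma^{\uparrow}_{\underline B}$, which is precisely the bundle (\ref{equ: doebundle}); since $F^{\sharp}:\mathbf{Loc}/\mathcal{C}^{\uparrow}_{B}\to\mathbf{Loc}/\mathcal{C}^{\uparrow}_{A}$ is by definition pullback of bundles along $\hat f$, this yields the required identification (this is \cite[Lemma 3.4]{nuiten}).

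Third, I would compute $\underline{\Sigma}(\underline f)$ externally: Gelfand duality sends the $\ast$-homomorphism $f|_{C}:C\to f[C]$ to the precomposition map $\Sigma_{f[C]}\to\Sigma_{C}$, $\lambda\mapsto\lambda\circ f|_{C}$, and assembling over $C\in\mathcal{C}_{A}$ gives the fibred map $\Sigma(\underline f):\hat f^{\ast}\Sigma^{\uparrow}_{\underline B}\to\Sigma^{\uparrow}_{\underline A}$, $(C,\lambda)\mapsto(C,\lambda\circ f|_{C})$, which is continuous for the étale topologies (it is a natural transformation of presheaves) and, matching the formula (\ref{equ: d-morph bundle}), also for the coarser topologies of Proposition~\ref{prop: extspecco}; hence it is the external description of the internal locale map $F^{\sharp}\uS_{\uB}\to\uS_{\uA}$. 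For the second assertion I would run the same argument with the other geometric morphism: when $f$ reflects commutativity, $\hat g:\mathcal{C}^{\uparrow}_{B}\to\mathcal{C}^{\uparrow}_{A}$, $D\mapsto f^{-1}(D)$, is right adjoint to $\hat f$ and induces $G$ with $G_{\ast}=F^{\ast}$; the internal $\ast$-homomorphism $\underline g:G^{\ast}\underline A\to\underline B$ with $\underline g_{D}=f|_{f^{-1}(D)}$ dualizes, via the two steps above (identifying $\underline{\Sigma}_{G^{\ast}\underline A}$ with the pullback $\hat g^{\ast}\Sigma^{\uparrow}_{\underline A}=G^{\sharp}\uS_{\uA}$, then replacing the $\ast$-homomorphism by precomposition), to the locale map $\uS_{\uB}\to G^{\sharp}\uS_{\uA}$ with external description (\ref{equ: nuitv}).

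The main obstacle is the identification in the second step: that forming Gelfand spectra commutes with the inverse image functors $F^{\ast}$ (resp.\ $G^{\ast}$) and matches the pullback-of-bundles description of $F^{\sharp}$ (resp.\ $G^{\sharp}$). This is where one genuinely needs constructive Gelfand duality --- either invoking that its statement is geometric, so that the spectrum is stable under geometric morphisms, or carrying out the explicit frame computation sketched above (as in \cite[Lemma 3.4]{nuiten}). Everything else is formal bookkeeping with the adjunctions $F^{\ast}\dashv F_{\ast}$, $F_{\ast}\dashv F^{\sharp}$ and the dictionary $\mathbf{Loc}_{Sh(T)}\cong\mathbf{Loc}/T$ of \cite[C1.6]{jh1}.
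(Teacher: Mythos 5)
Your overall route is the same as the paper's: obtain the internal $\ast$-homomorphism $\underline{f}:\underline{A}\to F^{\ast}\underline{B}$ with components $f|_{C}:C\to f[C]$, apply topos-valid Gelfand duality to get a locale map $\underline{\Sigma}_{F^{\ast}\underline{B}}\to\uS_{\uA}$, identify $\underline{\Sigma}_{F^{\ast}\underline{B}}$ with the pullback bundle $\hat{f}^{\ast}\Sigma^{\uparrow}_{\underline{B}}=F^{\sharp}\uS_{\uB}$ (the paper does this by exhibiting the bundle (\ref{equ: doebundle}) and citing \cite[Lemma 3.4]{nuiten}; you additionally invoke geometricity of the spectrum construction), and read off the external formula $(C,\lambda)\mapsto(C,\lambda\circ f|_{C})$; the commutativity-reflecting case is handled symmetrically with $\hat{g}$ and $\underline{g}$. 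That outline is correct.

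However, your ``concrete'' justification of the identification in your second step contains a genuine error, and it lands exactly on the subtlety the paper disentangles in the paragraph preceding the proposition. You compute $F^{\ast}(\mathcal{O}\uS_{\uB})$ stalkwise as $\mathcal{O}\uS_{\uB}(f[C])$ and then claim this is identified with the subspace topology on $\coprod_{C'\in\mathcal{C}_{A}\cap(\uparrow C)}\Sigma_{f[C']}$. It is not: by (\ref{berek})--(\ref{berek2}), $F^{\ast}(\mathcal{O}\uS_{\uB})(C)=G_{\ast}(\mathcal{O}\uS_{\uB})(C)$ is the frame of opens of $\coprod_{D\in\mathcal{C}_{B}\cap(\uparrow f[C])}\Sigma_{D}$, ranging over \emph{all} contexts $D\in\mathcal{C}_{B}$ above $f[C]$, whereas $\mathcal{O}(F^{\sharp}\uS_{\uB})(C)$ in (\ref{berek3}) only sees those of the form $f[C']$ with $C'\supseteq C$ in $\mathcal{C}_{A}$. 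These frames genuinely differ whenever $B$ has contexts above $f[C]$ not in the image of $\hat{f}$, which is precisely why the paper distinguishes $G_{\ast}\uS_{\uB}$ from $F^{\sharp}\uS_{\uB}$. The correct concrete argument does not pass through $F^{\ast}(\mathcal{O}\uS_{\uB})$ at all: one computes the spectrum of the algebra $F^{\ast}\underline{B}$ (whose stalk at $C$ is $f[C]$) directly, obtaining the bundle (\ref{equ: doebundle}) with total space $\coprod_{C\in\mathcal{C}_{A}}\Sigma_{f[C]}$, and identifies \emph{that} with $\hat{f}^{\ast}\pi_{B}$. Your appeal to geometricity of the Gelfand spectrum is a legitimate substitute for this computation, so the proof survives, but the frame computation as written should be deleted or repaired. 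A smaller point: your continuity check ``for the \'etale topologies, since it is a natural transformation of presheaves'' belongs to the contravariant version; in the covariant case $\uS_{\uB}$ is a locale from the outset and there is no \'etale topology to compare against.
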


If we think of $\uS_{\uA}$ as an internal state space, then ideally a $\ast$-automorphism $h:A\to A$ induces an isomorphism of locales $\uS_{\uA}\to\uS_{\uA}$ internal to the topos. However, the automorphism $h$ induces a map $\hat{h}:\mathcal{C}_{A}\to\mathcal{C}_{A}$, and we need to take into account how $h$ shuffles the contexts around. Instead of an isomorphism $\uS_{\uA}\to\uS_{\uA}$ we arrived at an isomorphism of locales of the form $H^{\sharp}\uS_{\underline{A}}\to\uS_{\underline{A}}$.

\subsection{Contravariant Version}

In the contravariant model, we associate a pair $([\mathcal{C}_{A}^{op},\mathbf{Set}],\uS_{A})$ to a von Neumann algebra $A$, consisting of a topos and a topological space in this topos. Here $\uS_{A}$ is the spectral presheaf, equipped with the internal topology generated by the closed open subobjects. Motivated by the locale maps of the previous subsection, we show that a unital $\ast$-homomorphism $f:A\to B$ induces a pair
\begin{equation*}
(F,\uS(\underline{f})):([\mathcal{C}^{op}_{A},\mathbf{Set}],\uS_{A})\to([\mathcal{C}^{op}_{B},\mathbf{Set}],\uS_{B})
\end{equation*}
consisting of a geometric morphism $F:[\mathcal{C}^{op}_{A},\mathbf{Set}]\to[\mathcal{C}^{op}_{B},\mathbf{Set}]$ and a continuous map $\uS(\underline{f}):F^{\ast}\uS_{B}\to\uS_{A}$ in the topos $[\mathcal{C}^{op}_{A},\mathbf{Set}]$. The first question which we need to address is how the object $F^{\ast}\uS_{B}$ is an internal topological space.

As an object $F^{\ast}\uS_{B}$, is constructed as follows. The functor $\uS_{B}$ can be described as an \'etale bundle $\pi_{B}:\SdB\to\mathcal{C}^{\downarrow}_{B}$. As a set, $\SdB$ is equal to $\coprod_{D\in\mathcal{C}_{B}}\Sigma_{D}$, and $U\subseteq\SdB$ is open iff
\begin{equation} \label{equ: etalecontra}
\text{If}\ \ (D,\lambda)\in U\ \ \text{and}\ \ D'\subseteq D,\ \ \text{then}\ \ (D',\lambda|_{D'})\in U.
\end{equation}
As an \'etale bundle $F^{\ast}\uS_{B}$ is the pullback of the \'etale bundle $\pi_{B}$ along $\hat{f}:\mathcal{C}^{\downarrow}_{A}\to\mathcal{C}^{\downarrow}_{B}$, $\hat{f}(C)=f[C]$. The bundle $\hat{f}^{\ast}\pi_{B}:\hat{f}^{\ast}\SdB\to\mathcal{C}^{\downarrow}_{A}$ obtained in this way can be described as follows. As a set, the total space $\hat{f}^{\ast}\SdB$ is equal to $\coprod_{C\in\mathcal{C}_{A}}\Sigma_{f[C]}$. A subset $U\subseteq\hat{f}^{\ast}\SdB$ is open iff it satisfies the following condition: if, for $C\in\mathcal{C}_{A}$, $\lambda\in\Sigma_{f[C]}$, $(C,\lambda)\in U$, and $C'\subseteq C$ in $\mathcal{C}_{A}$, then $(C',\lambda|_{C'})\in U$. The map $\hat{f}^{\ast}\pi_{B}$ is simply $(C,\lambda)\mapsto C$.

The internal topology on $\uS_{B}$ corresponds to a topology on $\SdB$, which is coarser than the \'etale topology, but with respect to which $\pi_{B}$ is still continuous. With respect to this topology $U\in\mathcal{O}\SdB$ iff it is \'etale open in $\SdB$, and, in addition, for each $D\in\mathcal{C}_{B}$, the set $U_{D}:=U\cap\Sigma_{D}$ is open in $\Sigma_{D}$. We can take the pullback of $\pi_{B}$ along $\hat{f}$, with this new topology on $\SdB$, and obtain a coarser topology on $\hat{f}^{\ast}\SdB$ than the \'etale topology. In fact $U\in\mathcal{O}\SuB$ iff it is \'etale open and, for each $C\in\mathcal{C}_{A}$, $U_{C}=U\cap\Sigma_{f[C]}$ is open in $\Sigma_{f[C]}$. The bundle $\hat{f}^{\ast}\pi_{B}:\hat{f}^{\ast}\SdB\to\mathcal{C}^{\downarrow}_{A}$ is continuous with respect to this new topology. We have thus defined an internal topology on $F^{\ast}\uS_{B}$. It is the topology generated by the objects $F^{\ast}\underline{U}$, where $\underline{U}$ is a closed open subobject of $\uS_{B}$. Whenever we consider $F^{\ast}\uS_{B}$ as a topological space, it is with respect to this topology.

Now that we have identified $F^{\ast}\uS_{B}$ as an internal topological space, we can define the function $\uS(\underline{f})$ and check whether it is continuous.

\begin{poe} \label{prop: flowcontra}
The natural transformation $\uS(\underline{f}):F^{\ast}\uS_{B}\to\uS_{A}$, given by
\begin{equation}
\uS(\underline{f})_{C}: \Sigma_{f[C]}\to\Sigma_{C},\ \ \lambda\mapsto\lambda\circ f|_{C}
\end{equation}
is a continuous map of topological spaces in $[\mathcal{C}^{op}_{A},\mathbf{Set}]$.
\end{poe}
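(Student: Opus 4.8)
The plan is to verify continuity through the external description at the level of \'etale bundles, exactly as in the proof of Proposition~\ref{poe: outdas}. First I would check that $\uS(\underline{f})$ really is a morphism of presheaves. Writing $f|_{C}\colon C\to f[C]$ for the corestriction of $f$ (a surjective unital $\ast$-homomorphism of commutative C*-algebras), the composite $\lambda\circ f|_{C}$ of a character $\lambda\in\Sigma_{f[C]}$ with $f|_{C}$ is again linear, multiplicative and unital, hence lies in $\Sigma_{C}$; and naturality with respect to $C'\subseteq C$ in $\mathcal{C}_{A}$ reduces to the identity $(\lambda|_{f[C']})\circ f|_{C'}=(\lambda\circ f|_{C})|_{C'}$, which holds since both sides send $c'\in C'$ to $\lambda(f(c'))$. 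Thus $\uS(\underline{f})$ is realised by the map of \'etale bundles $\Sigma(\underline{f})\colon\hat{f}^{\ast}\SdB\to\SdA$, $(C,\lambda)\mapsto(C,\lambda\circ f|_{C})$, over $\mathcal{C}^{\downarrow}_{A}$, and it is automatically continuous for the \'etale topologies because it comes from a natural transformation.

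It then remains to check continuity for the two coarser topologies that encode the internal topologies. Recall from the paragraph preceding the proposition that a subset of $\hat{f}^{\ast}\SdB$ is open in the internal topology of $F^{\ast}\uS_{B}$ iff it is \'etale open and each slice $U\cap\Sigma_{f[C]}$, $C\in\mathcal{C}_{A}$, is open in $\Sigma_{f[C]}$; and that (by the Definition of $\Sigma_{\downarrow}$ together with Proposition~\ref{prop: bundle}) a subset $U$ of $\SdA$ is open in the internal topology of $\uS_{A}$ iff it is \'etale open and each slice $U_{C}$ is open in $\Sigma_{C}$. I would therefore take an arbitrary $U\in\mathcal{O}\SdA$ and inspect $\Sigma(\underline{f})^{-1}(U)$. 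It is \'etale open, being the preimage of an \'etale open under a map of \'etale bundles; and its slice over $C$ is
\[
\Sigma(\underline{f})^{-1}(U)\cap\Sigma_{f[C]}=\{\lambda\in\Sigma_{f[C]}\mid\lambda\circ f|_{C}\in U_{C}\},
\]
which is the preimage of the open set $U_{C}\subseteq\Sigma_{C}$ under the continuous map $\Sigma_{f[C]}\to\Sigma_{C}$, $\lambda\mapsto\lambda\circ f|_{C}$, i.e.\ the map that Gelfand duality assigns to $f|_{C}$. Hence the slice is open, so $\Sigma(\underline{f})^{-1}(U)$ is open in the internal topology of $F^{\ast}\uS_{B}$, and $\uS(\underline{f})$ is continuous. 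Since $\Sigma(\underline{f})^{-1}$ preserves arbitrary unions and finite intersections it would in fact suffice to test this on the clopen subobjects that generate $\mathcal{O}\uS_{A}$, but the slice computation above already handles every open directly.

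I do not expect a genuine obstacle here: the whole argument is a fibrewise computation, and the only point that is not purely formal is the continuity of the fibre maps $\Sigma_{f[C]}\to\Sigma_{C}$, which is just the functoriality of Gelfand duality applied to $f|_{C}$; everything else (well-definedness, naturality, \'etale continuity, the reduction to opens of $\uS_{A}$) is bookkeeping around the two coarsenings of the \'etale topology and the Alexandroff base. I would close, in analogy with Proposition~\ref{prop: flowco}, by noting that for a $\ast$-automorphism $h\colon A\to A$ this produces not an internal self-map of $\uS_{A}$ but a continuous map $H^{\ast}\uS_{A}\to\uS_{A}$, reflecting how $h$ permutes the contexts.
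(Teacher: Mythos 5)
Your proposal is correct and follows essentially the same route as the paper's proof: realise $\uS(\underline{f})$ as the bundle map $(C,\lambda)\mapsto(C,\lambda\circ f|_{C})$ over $\mathcal{C}^{\downarrow}_{A}$, get \'etale continuity from naturality, and then verify continuity for the coarser internal topologies via the fibrewise identity $\Sigma(\underline{f})^{-1}(U)_{C}=\Sigma(f|_{C})^{-1}(U_{C})$ with $\Sigma(f|_{C})$ the Gelfand dual of $f|_{C}$. The only difference is that you spell out the naturality check that the paper leaves to the reader.
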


\begin{proof}
We leave the verification that $\uS(\underline{f})$ is indeed a natural transformation to the reader. At the level of \'etale bundles, $\uS(\underline{f})$ corresponds to the commuting triangle
\[ \xymatrix{ \hat{f}^{\ast}\Sigma^{\downarrow}_{B} \ar[rr]^{\Sigma(\underline{f})} \ar[dr]_{\hat{f}^{\ast}\pi_{B}} & & \Sigma^{\downarrow}_{B} \ar[dl]^{\pi_{A}} \\
& \mathcal{C}^{\downarrow}_{A} &} \]
of continuous maps, where the total spaces of the bundles are equipped with the \'etale topologies. Note that naturality of $\uS(\underline{f})$ amounts to continuity of $\Sigma(\underline{f})$ with respect to the \'etale topologies. Also note that, as functions, the function $\Sigma(\underline{f})$ is the same function as (\ref{equ: d-morph bundle}) from the covariant version. The only difference between the approaches is in the topologies.

The function $\uS(\underline{f})$ is internally continuous iff $\Sigma(\underline{f})$ is also continuous with respect to the coarser topologies on the total spaces (corresponding to the internal topologies). Let $\Sigma(f|_{C}):\Sigma_{f[C]}\to\Sigma_{C}$ be the Gelfand dual of the $\ast$-homomorphism $f|_{C}:C\to f[C]$. A straightforward check reveals that for any $U\in\mathcal{O}\SdA$, and $C\in\mathcal{C}_{A}$
\begin{equation} \label{etalecont}
\Sigma(\underline{f})^{-1}(U)_{C}=\Sigma(f|_{C})^{-1}(U_{C})\in\mathcal{O}\Sigma_{f[C]}.
\end{equation}
This observation combined with \'etale continuity proves that $\Sigma(\underline{f})$ is continuous with respect to the desired topologies. Note that \'etale continuity can be deduced from (\ref{etalecont}), as for $\lambda\in\Sigma_{f[C]}$, and $C'\subseteq C$,  clearly $(\lambda\circ f|_{C})|_{C'}=\lambda|_{C'}\circ f|_{C'}$.
\end{proof}

Proposition~\ref{prop: flowcontra} is the contravariant counterpart to Proposition~\ref{prop: flowco}. A $\ast$-automorphism $h:A\to A$ induces a homeomorphism $\uS(\underline{h}):H^{\ast}\uS_{A}\to\uS_{A}$. In the following section we consider how elementary propositions $\underline{[a\in\Delta]}$ transform under the frame isomorphism $\uS(\underline{h})^{-1}$.

\begin{rem}
If we ignore the \'etale topology of $\uS_{B}$ and consider it to be a locale rather than an internal space, then the bundle map $\Sigma(\underline{f})$, from the previous proof, can be seen as an internal locale map $\uS(\underline{f}):F^{\sharp}\uS_{B}\to\uS_{A}$, as in the covariant case.
\end{rem}

\begin{rem}
As in the previous subsection, if $f:A\to B$ reflects commutativity, we can define a continuous map of spaces $\uS(\underline{g}):\uS_{B}\to G^{\ast}\uS_{A}$ in the topos $[\mathcal{C}_{B}^{op},\mathbf{Set}]$, or see it as a locale map $\uS(\underline{g}):\uS_{B}\to G^{\sharp}\uS_{A}$ in the same topos.
\end{rem}

\subsection{Automorphisms and Daseinisation} \label{subsec: dasdyn}

Let $A$ be a von Neumann algebra and $h:A\to A$ a $\ast$-automorphism. In this section we investigate how daseinised self-adjoint operators transform under $h$. We will be working with the contravariant version. However, if we switch from internal spaces to locales, switch inner and outer daseinisation, replace $\uS_{A}$ by $\uS_{\uA}$, switch order-reversing and order-preserving, and replace $\downarrow$ by $\uparrow$ whenever it occurs as a superscript, then this section is about the covariant version instead.

For $a\in A_{sa}$, outer daseinisation defines a continuous map $\underline{\delta^{o}(a)}:\uS_{A}\to\underline{\mathbb{R}}_{l}$, and inner daseinisation  a continuous map $\underline{\delta^{i}(a)}:\uS_{A}\to\underline{\mathbb{R}}_{u}$, where $\underline{\mathbb{R}}_{l}$ and $\underline{\mathbb{R}}_{u}$ are the spaces of lower and upper reals respectively. From the previous section we know that $h$ induces a continuous map $\uS(\underline{h}):H^{\ast}\uS_{A}\to\uS_{A}$. We can compose these maps to obtain continuous maps
\begin{equation*}
\underline{\delta^{o}(a)}_{h}:H^{\ast}\uS_{A}\to\underline{\mathbb{R}}_{l},\ \ \ \underline{\delta^{i}(a)}_{h}:H^{\ast}\uS_{A}\to\underline{\mathbb{R}}_{u}.
\end{equation*}

If we look at~\cite[Section 10]{di}, we may suspect that there is a relation between $\underline{\delta^{o}(h(a))}$ and $\underline{\delta^{o}(a)}_{h}$ and also betwenn their inner counterparts. This is indeed the case, and we will proceed to describe this.

\begin{lem} \label{lem: specproj}
Let $a\in A_{sa}$, and $\Delta$ a Borel subset of $\sigma(a)$, the spectrum of $a$. Then 
\begin{equation*}
h(\chi_{\Delta}(a))=\chi_{\Delta}(h(a)).
\end{equation*}
\end{lem}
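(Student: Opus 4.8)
The plan is to prove the identity $h(\chi_\Delta(a)) = \chi_\Delta(h(a))$ by exploiting the fact that a $\ast$-automorphism of a von Neumann algebra commutes with the Borel functional calculus. The cleanest route is to first establish the claim for continuous functions via the Gelfand (or polynomial) approximation, and then extend to Borel functions by a monotone-class / $\sigma$-weak continuity argument; the spectral projection $\chi_\Delta(a)$ is then just the special case $f = \chi_\Delta$.

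\textbf{Step 1: $h$ preserves the spectrum.} First I would note that since $h$ is a $\ast$-automorphism it is in particular a unital $\ast$-isomorphism, hence isometric and spectrum-preserving: $\sigma(h(a)) = \sigma(a)$. This makes the statement well-posed, since $\Delta \subseteq \sigma(a) = \sigma(h(a))$, so $\chi_\Delta(h(a))$ is defined. I would also record that $h$ is $\sigma$-weakly (ultraweakly) continuous; for a $\ast$-automorphism of a von Neumann algebra this is automatic, since it is implemented by normality — equivalently, $h$ restricts to an order-isomorphism on the self-adjoint part and preserves suprema of bounded increasing nets.

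\textbf{Step 2: Functional calculus intertwining for bounded Borel $f$.} The key claim is $h(f(a)) = f(h(a))$ for every bounded Borel function $f$ on $\sigma(a)$. For $f$ a polynomial this is immediate from $h$ being a unital algebra homomorphism: $h(a^n) = h(a)^n$. For $f$ continuous, I would approximate uniformly by polynomials (Stone–Weierstrass on the compact set $\sigma(a)$) and use that $h$ is norm-continuous to pass to the limit: $h(f(a)) = \lim h(p_k(a)) = \lim p_k(h(a)) = f(h(a))$. To reach bounded Borel $f$, I would invoke the standard monotone-class argument: the collection of bounded Borel $f$ for which the identity holds is closed under bounded monotone pointwise limits (using $\sigma$-weak continuity of $h$ and the fact that for an increasing bounded sequence $f_k \uparrow f$ one has $f_k(a) \uparrow f(a)$ $\sigma$-weakly), contains the continuous functions, and is a vector space stable under products; hence it contains all bounded Borel functions. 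Applying this with $f = \chi_\Delta$ (a bounded Borel function on $\sigma(a)$) yields $h(\chi_\Delta(a)) = \chi_\Delta(h(a))$.

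\textbf{Main obstacle.} The only genuinely delicate point is the passage from continuous to Borel functions, i.e. justifying that $h$ commutes with $\sigma$-weak limits and that the spectral/Borel functional calculus is continuous along bounded monotone nets. An alternative that sidesteps the monotone-class machinery is to characterize $\chi_{(-\infty,\lambda]}(a)$ order-theoretically as $\bigwedge\{\, p \in \mathrm{Proj}(A) \mid pap \le \lambda p,\ (1-p)a(1-p) \ge \lambda(1-p)\,\}$ (or via the spectral resolution $e^a_\lambda$ used earlier in the paper), observe that $h$ is an order isomorphism of $\mathrm{Proj}(A)$ preserving this defining property with $a$ replaced by $h(a)$ and $\lambda$ fixed, and conclude $h(e^a_\lambda) = e^{h(a)}_\lambda$ for all $\lambda$; then $\chi_\Delta$ for a general Borel $\Delta$ follows since the spectral measure is determined by these projections and $h$ respects the lattice operations and countable suprema. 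Either route works; I would present the functional-calculus version as the main line and perhaps remark on the order-theoretic one, since it connects directly to the spectral resolution $(e^a_x)_{x\in\mathbb{R}}$ already introduced in Subsection~\ref{subsec: review}.
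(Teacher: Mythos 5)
Your proposal is correct and follows essentially the same route as the paper's proof: both first record $\sigma(h(a))=\sigma(a)$, pass from polynomials to continuous functions via Stone--Weierstrass and the norm continuity of $h$, and then use the $\sigma$-weak continuity (normality) of a $\ast$-automorphism to extend the intertwining to bounded Borel functions, of which $\chi_{\Delta}$ is the special case needed. The only difference is bookkeeping in the last step --- the paper transports the problem to $L^{\infty}(\sigma(a),\mu)\cong W^{\ast}(a)$ and shows the induced automorphism is the identity, whereas you run a functional monotone-class argument directly on the Borel functional calculus --- and these are equivalent in substance.
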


\begin{proof}
First note that $\sigma(h(a))=\sigma(a)$. If $p(a)$ denotes a polynomial in $a$ with complex coefficients, then $h(p(a))=p(h(a))$. By norm-continuity of $h$ and the Stone-Weierstrass Theorem, $h$ restricts to an isomorphism of unital C*-algebras
\begin{equation*}
\tilde{h}:C^{\ast}(a,1)\to C^{\ast}(h(a),1).
\end{equation*}
Let $W^{\ast}(a)=C^{\ast}(a,1)''$ denote the weak as well as the $\sigma$-weak closure of $C^{\ast}(a)$. Any $\ast$-automorphism is $\sigma$-weakly continuous, implying that $\tilde{h}$ extends to an isomorphism of abelian von Neumann algebras
\begin{equation*}
\tilde{h}:W^{\ast}(a)\to W^{\ast}(h(a))
\end{equation*}
As shown in~\cite{dixm} there exists an isomorphism of von Neumann algebras $i:L^{\infty}(\sigma(a),\mu)\to W^{\ast}(a)$, where $\mu$ is any scalar-valued spectral measure on $\sigma(a)$. For $W^{\ast}(h(a))$ we can construct an isomorphism $j: L^{\infty}(\sigma(a),\mu)\to W^{\ast}(h(a))$, using the same $\mu$, since the spectra of $a$ and $h(a)$ coincide. With these identifications we obtain an automorphism $\hat{h}=j^{-1}\circ\tilde{h}\circ i$ of the abelian von Neumann algebra $L^{\infty}(\sigma(a),\mu)$. By construction, for each polynomial expression $p(x):\sigma(a)\to\mathbb{C}$, we deduce $\hat{h}([p(x)])=[p(x)]$. By $\sigma$-weak continuity, $\hat{h}$ is the identity map. The desired claim follows from
\begin{equation*}
h(\chi_{\Delta}(a))=\tilde{h}(\chi_{\Delta}(a))=\tilde{h}(i([\chi_{\Delta}]))=j([\chi_{\Delta}])=\chi_{\Delta}(h(a)).
\end{equation*}
\end{proof}

\begin{lem}
If $a\leq_{s} b$ in $A_{sa}$, then $h(a)\leq_{s} h(b)$ in $A_{sa}$.
\end{lem}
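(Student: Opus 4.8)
The plan is to reduce the claim to the behaviour of $h$ on projections together with Lemma~\ref{lem: specproj}. Recall that the spectral order is defined by $a\leq_{s}b$ iff $e^{a}_{x}\geq e^{b}_{x}$ for all $x\in\mathbb{R}$, where the spectral resolution is given by the Borel functional calculus, $e^{a}_{x}=\chi_{(-\infty,x]}(a)$. So the first step is to identify the spectral resolution of $h(a)$. Since $(-\infty,x]$ is a Borel subset of $\sigma(a)=\sigma(h(a))$, Lemma~\ref{lem: specproj} gives
\begin{equation*}
e^{h(a)}_{x}=\chi_{(-\infty,x]}(h(a))=h\big(\chi_{(-\infty,x]}(a)\big)=h(e^{a}_{x}),
\end{equation*}
and likewise $e^{h(b)}_{x}=h(e^{b}_{x})$, for every $x\in\mathbb{R}$.

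The second step is the observation that a $\ast$-automorphism $h$ preserves the order on projections. Indeed, $h$ maps projections to projections, and for projections $p,q$ one has $p\leq q$ iff $pq=p$; applying the homomorphism $h$ yields $h(p)h(q)=h(p)$, i.e. $h(p)\leq h(q)$. (In fact $h$ restricts to an order isomorphism of the projection lattice, with inverse given by $h^{-1}$, but we only need monotonicity here.)

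Now assume $a\leq_{s}b$, so $e^{a}_{x}\geq e^{b}_{x}$ for all $x\in\mathbb{R}$. Applying the monotone map $h$ and using the computation above,
\begin{equation*}
e^{h(a)}_{x}=h(e^{a}_{x})\geq h(e^{b}_{x})=e^{h(b)}_{x}\qquad\text{for all }x\in\mathbb{R},
\end{equation*}
which is precisely the statement $h(a)\leq_{s}h(b)$. There is no real obstacle here; the only point requiring care is matching the convention for the spectral resolution to the statement of Lemma~\ref{lem: specproj}, but since that lemma applies to arbitrary Borel subsets of the spectrum, the argument is insensitive to whether one uses $\chi_{(-\infty,x]}$ or $\chi_{(-\infty,x)}$.
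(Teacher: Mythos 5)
Your proof is correct and follows essentially the same route as the paper's: both identify $h(e^{a}_{x})$ as the spectral resolution of $h(a)$ (the paper also cites Lemma~\ref{lem: specproj} for this) and then conclude via monotonicity of $h$ on projections, which the paper obtains from positivity of $\ast$-homomorphisms and you obtain from the algebraic characterisation $p\leq q\iff pq=p$.
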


\begin{proof}
Let $a\leq_{s} b$ in $A_{sa}$, let $(e^{a}_{x})_{x\in\mathbb{R}}$ be the spectral resolution of $a$, and $(e^{b}_{x})_{x\in\mathbb{R}}$ the spectral resolution of $b$. By assumption, for each $x\in\mathbb{R}$, $e^{b}_{x}\leq e^{a}_{x}$. The family of projections $e^{h(a)}_{x}:=h(e^{a}_{x})$ defines a spectral resolution for $h(a)$. This can be verified as $h$, restricted to the projections of $A$, yields an isomorphism of complete lattices. Alternatively, it follows straight from the previous lemma. Likewise, $e^{h(b)}_{x}:=h(e^{b}_{x})$ is a spectral resolution for $h(b)$. Any $\ast$-homomorphism $h$ is a positive map, so, from the assumption, we deduce that for each $x\in\mathbb{R}$, $e^{h(b)}_{x}\leq e^{h(a)}_{x}$. We conclude that $h(a)\leq_{s} h(b)$.
\end{proof}

\begin{cor} \label{cor}
If $a\in A_{sa}$, and $C\in\mathcal{C}_{A}$, then
\begin{equation}
h(\delta^{o}(a)_{C})=\delta^{o}(h(a))_{h[C]},\ \ \ h(\delta^{i}(a)_{C})=\delta^{i}(h(a))_{h[C]}.
\end{equation}
\end{cor}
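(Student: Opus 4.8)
The plan is to upgrade the previous lemma from monotonicity to the statement that $h$ is an \emph{order-automorphism} of $(A_{sa},\leq_{s})$, and then to read off the conclusion from the adjunction characterisation of inner and outer daseinisation recorded in Subsection~\ref{subsec: review}. First I would note that $h^{-1}$ is again a $\ast$-automorphism of $A$, so the previous lemma applies to it as well; combining the two implications gives
\begin{equation*}
a\leq_{s} b\ \Longleftrightarrow\ h(a)\leq_{s} h(b)\qquad (a,b\in A_{sa}).
\end{equation*}
Since $h$ is a $\ast$-automorphism, $h[C]$ is again an abelian von Neumann subalgebra, i.e.\ $h[C]\in\mathcal{C}_{A}$, and $h$ restricts to a $\ast$-isomorphism $C\to h[C]$, hence to an order-isomorphism $C_{sa}\to (h[C])_{sa}$ (where on $C$ and on $h[C]$ the spectral order is the usual order). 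One may, if desired, observe that Lemma~\ref{lem: specproj} gives a second route to the displayed equivalence, via the fact that $h$ carries the spectral resolution $(e^{a}_{x})$ of $a$ to that of $h(a)$.

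Next I would invoke the adjunctions $\delta^{o}(-)_{C}\dashv i_{C}\dashv\delta^{i}(-)_{C}:C_{sa}\to A_{sa}$. Recall that $\delta^{o}(a)_{C}$ is characterised among elements of $C_{sa}$ by
\begin{equation*}
\forall b\in C_{sa}\quad\big(a\leq_{s} i_{C}(b)\ \Longleftrightarrow\ \delta^{o}(a)_{C}\leq_{s} b\big).
\end{equation*}
Given $b'\in (h[C])_{sa}$, write $b'=h(b)$ with $b\in C_{sa}$ (possible uniquely, as $h$ restricts to a bijection $C_{sa}\to (h[C])_{sa}$). Then, using the order-isomorphism property in the outer two steps and the adjunction in the middle one,
\begin{equation*}
h(a)\leq_{s} b'\ \Longleftrightarrow\ a\leq_{s} b\ \Longleftrightarrow\ \delta^{o}(a)_{C}\leq_{s} b\ \Longleftrightarrow\ h(\delta^{o}(a)_{C})\leq_{s} b'.
\end{equation*}
Since $h(\delta^{o}(a)_{C})\in (h[C])_{sa}$, this is exactly the universal property characterising $\delta^{o}(h(a))_{h[C]}$, whence $h(\delta^{o}(a)_{C})=\delta^{o}(h(a))_{h[C]}$. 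The inner case is symmetric: from the adjunction $i_{C}\dashv\delta^{i}(-)_{C}$ one gets, for $b'=h(b)$,
\begin{equation*}
b'\leq_{s} h(a)\ \Longleftrightarrow\ b\leq_{s} a\ \Longleftrightarrow\ b\leq_{s}\delta^{i}(a)_{C}\ \Longleftrightarrow\ b'\leq_{s} h(\delta^{i}(a)_{C}),
\end{equation*}
so $h(\delta^{i}(a)_{C})$ is the inner daseinisation of $h(a)$ in $h[C]$. Equivalently, one may argue directly that $h$ maps the set $\{b\in C_{sa}\mid b\geq_{s} a\}$ bijectively onto $\{b'\in (h[C])_{sa}\mid b'\geq_{s} h(a)\}$ and, being an order-isomorphism $C_{sa}\to (h[C])_{sa}$, preserves the meet of this set; likewise with joins for $\delta^{i}$.

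I do not expect a genuine obstacle here. The only point requiring a little care is to make sure the meet (resp.\ join) defining $\delta^{o}(a)_{C}$ (resp.\ $\delta^{i}(a)_{C}$) is the one computed inside $C_{sa}$, so that the order-isomorphism $C_{sa}\cong (h[C])_{sa}$ can be applied without worrying whether meets in $C_{sa}$ agree with meets in the larger poset $A_{sa}$; phrasing the argument through the adjunction, as above, bypasses this issue entirely.
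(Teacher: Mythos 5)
Your proof is correct, and it rests on the same key observation as the paper's: since $h^{-1}$ is also a $\ast$-automorphism, the previous lemma upgrades to the statement that $h|_{A_{sa}}$ is an order-automorphism of $(A_{sa},\leq_{s})$. Where you diverge is in how the conclusion is extracted. The paper treats $h|_{A_{sa}}$ as an isomorphism of boundedly complete lattices and simply pushes $h$ through the infimum, rewriting $h\bigl(\bigwedge\{b\in C_{sa}\mid b\geq_{s}a\}\bigr)$ as $\bigwedge\{c\in h[C]_{sa}\mid c\geq_{s}h(a)\}$; you instead verify that $h(\delta^{o}(a)_{C})$ satisfies the universal property of the left adjoint to $i_{h[C]}$, i.e.\ the adjunction characterisation $\delta^{o}(-)_{C}\dashv i_{C}\dashv\delta^{i}(-)_{C}$ recorded in Subsection 2.1 (and you note the direct meet-preservation argument as an equivalent alternative, which is precisely the paper's route). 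The adjunction phrasing buys you exactly what you say it does: it sidesteps any worry about whether the meet defining $\delta^{o}(a)_{C}$ is computed in $C_{sa}$ or in $A_{sa}$, since uniqueness of adjoints does the work. The paper's version is more computational but makes the transported set $\{c\in h[C]_{sa}\mid c\geq_{s}h(a)\}$ explicit, which it reuses later when transporting elementary propositions. Both are complete; no gap.
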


\begin{proof}
By the previous lemma the bijection $h|_{A_{sa}}:A_{sa}\to A_{sa}$ is monotone with respect to the spectral order. It has a order-preserving inverse, making it an order-isomorphism. As a consequence $h|_{A_{sa}}$ is an isomorphism of boundedly complete lattices. 
\begin{align*}
h(\delta^{o}(a)_{C}) &= h\left(\bigwedge\{b\in C_{sa}\mid b\geq_{s} a\}\right) \\
&= \bigwedge\{h(b)\in h[C_{sa}]\mid b\geq_{s} a\} \\
&= \bigwedge\{ h(b)\in h[C]_{sa}\mid h(b)\geq_{s} h(a)\} \\
&= \bigwedge\{ c\in h[C]_{sa}\mid c\geq_{s} h(a)\} \\
&= \delta^{o}(h(a))_{h[C]} 
\end{align*}
Inner daseinisation can be treated in the same way.
\end{proof}

\noindent The continuous map $\underline{\delta^{o}(a)}_{h}:H^{\ast}\uS_{A}\to\underline{\mathbb{R}}_{l}$ is externally described by the triangle  of continuous maps
\[ \xymatrix{ \hat{h}^{\ast}\SdA \ar[dr]_{\hat{h}^{\ast}\pi_{A}} \ar[rr]^{\delta^{o}(a)_{h}} & &\mathcal{R}_{l} \ar[dl]^{\pi_{l}} \\
& \mathcal{C}^{\downarrow}_{A} &} \]
where the elements of $\mathcal{R}_{l}$ are pairs $(C,s)$, with $C\in\mathcal{C}_{A}$ and $s:(\downarrow C)\to\mathbb{R}$ is an order-reversing function. For $\lambda\in\Sigma_{h[C]}$
\begin{equation*}
\delta^{o}(a)_{h}(C,\lambda):\downarrow C\to\mathbb{R},\ \ D\mapsto\langle\delta^{o}(a)_{D},\lambda\circ h|_{C}\rangle.
\end{equation*}
Note that
\begin{align*}
\langle\delta^{o}(a)_{D},\lambda\circ h|_{C}\rangle &= \langle\delta^{o}(a)_{D},\Sigma(h|_{C})(\lambda)\rangle\\
&= \langle(h|_{C})(\delta^{o}(a)_{D}),\lambda\rangle\\
&= \langle\delta^{o}(h(a))_{h[D]},\lambda\rangle, 
\end{align*}
where, in the last step, we used Corollary~\ref{cor}.

We need one more definition before we can state the relations we are looking for. Define the continuous map of spaces
\begin{equation*}
\underline{\mathbb{R}}_{l}(\underline{h}): H^{\ast}\underline{\mathbb{R}}_{l}\to\underline{\mathbb{R}}_{l},
\end{equation*}
\begin{equation*}
\underline{\mathbb{R}}_{l}(\underline{h})_{C}: \text{OR}(\downarrow h[C],\mathbb{R})\to\text{OR}(\downarrow C,\mathbb{R}), \ \ s\mapsto s\circ\hat{h}|_{\downarrow C}.
\end{equation*}

\begin{poe}
Let $h:A\to A$ be a $\ast$-automorphism, and take $a\in A_{sa}$. Then the following square of continuous maps of spaces is commutative:
\[ \xymatrix{ H^{\ast}\uS_{A} \ar[rr]^{H^{\ast}(\underline{\delta^{o}(h(a))})} \ar[d]_{\uS(\underline{h})} & & H^{\ast}\underline{\mathbb{R}}_{l} \ar[d]^{\underline{\mathbb{R}}_{l}(\underline{h})}\\
\uS_{A} \ar[rr]_{\underline{\delta^{o}(a)}} & & \underline{\mathbb{R}}_{l} }\]
The same holds for inner daseinisation if we replace $\underline{\mathbb{R}}_{l}$ by $\underline{\mathbb{R}}_{u}$.
\end{poe}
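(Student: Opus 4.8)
The plan is to prove commutativity entirely on the external side, using the equivalence $[\mathcal{C}^{op}_{A},\mathbf{Set}]\cong\mathbf{\text{\'Etale}}(\mathcal{C}^{\downarrow}_{A})$. All four maps in the square are already known to be continuous (Proposition~\ref{poe: outdas}, Proposition~\ref{prop: flowcontra}, and the definition of $\underline{\mathbb{R}}_{l}(\underline{h})$), and internal continuous maps correspond to bundle maps over $\mathcal{C}^{\downarrow}_{A}$ that are continuous for the coarser (internal) topologies. Hence it suffices to check that the two composites $\underline{\delta^{o}(a)}\circ\uS(\underline{h})$ and $\underline{\mathbb{R}}_{l}(\underline{h})\circ H^{\ast}(\underline{\delta^{o}(h(a))})$ agree as functions on the total space of the bundle $\hat{h}^{\ast}\SdA$; then the square is a commuting square of internal continuous maps.

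First I would write out the external descriptions. The bundle for $H^{\ast}\uS_{A}$ is $\hat{h}^{\ast}\SdA$, with total space $\coprod_{C\in\mathcal{C}_{A}}\Sigma_{h[C]}$; the bundle for $H^{\ast}\underline{\mathbb{R}}_{l}$ is $\hat{h}^{\ast}\mathcal{R}_{l}$, whose fibre over $C$ is $\text{OR}(\downarrow h[C],\mathbb{R})$. Since pulling back along $\hat{h}$ (which is an order-isomorphism of $\mathcal{C}_{A}$ onto itself, $C\mapsto h[C]$) just reindexes fibres, the map $H^{\ast}(\underline{\delta^{o}(h(a))})$ sends $(C,\lambda)$ with $\lambda\in\Sigma_{h[C]}$ to the pair $\big(C,\;E\mapsto\langle\delta^{o}(h(a))_{E},\lambda\rangle\big)$ with $E$ ranging over $\downarrow h[C]$, because it is obtained fibrewise from $\underline{\delta^{o}(h(a))}:\uS_{A}\to\underline{\mathbb{R}}_{l}$, whose external description is $(C',\lambda')\mapsto\big(C',D'\mapsto\langle\delta^{o}(h(a))_{D'},\lambda'\rangle\big)$. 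Likewise $\uS(\underline{h})$ externally sends $(C,\lambda)$ to $(C,\lambda\circ h|_{C})$ with $\lambda\circ h|_{C}\in\Sigma_{C}$, and $\underline{\mathbb{R}}_{l}(\underline{h})_{C}$ is $s\mapsto s\circ\hat{h}|_{\downarrow C}$.

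Now I would chase $(C,\lambda)$ around both paths. Down then right gives $\big(C,\;D\mapsto\langle\delta^{o}(a)_{D},\lambda\circ h|_{C}\rangle\big)$; right then down gives $\big(C,\;D\mapsto\langle\delta^{o}(h(a))_{h[D]},\lambda\rangle\big)$. So the square commutes precisely when
\[
\langle\delta^{o}(a)_{D},\lambda\circ h|_{C}\rangle=\langle\delta^{o}(h(a))_{h[D]},\lambda\rangle\qquad\text{for all }D\subseteq C,\ \lambda\in\Sigma_{h[C]}.
\]
But $\delta^{o}(a)_{D}\in D\subseteq C$, so $(\lambda\circ h|_{C})(\delta^{o}(a)_{D})=\lambda\big(h(\delta^{o}(a)_{D})\big)$, and $h(\delta^{o}(a)_{D})=\delta^{o}(h(a))_{h[D]}$ by Corollary~\ref{cor}; this operator lies in $h[D]\subseteq h[C]$, so pairing it with $\lambda\in\Sigma_{h[C]}$ is legitimate. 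This is exactly the three-line computation displayed just before the statement, so the identity holds and both composites coincide.

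For inner daseinisation the argument is verbatim the same after replacing $\delta^{o}$ by $\delta^{i}$ and $\underline{\mathbb{R}}_{l}$ by $\underline{\mathbb{R}}_{u}$, using the inner half of Corollary~\ref{cor}; the covariant statements then follow from the dictionary fixed at the start of Subsection~\ref{subsec: dasdyn}. I expect the only point requiring real care to be the bookkeeping with the pullback bundles, namely that the fibre of $\hat{h}^{\ast}\SdA$ over $C$ is $\Sigma_{h[C]}$ (not $\Sigma_{C}$) and that the restrictions are compatible, i.e.\ $(\lambda\circ h|_{C})|_{D}=\lambda|_{h[D]}\circ h|_{D}$, so that Corollary~\ref{cor} can be applied fibrewise; there is no deeper obstacle, as the entire content of the proposition is packaged in Corollary~\ref{cor}.
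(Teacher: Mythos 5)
Your proposal is correct and follows essentially the same route as the paper: the paper's justification for this proposition is precisely the displayed three-line computation $\langle\delta^{o}(a)_{D},\lambda\circ h|_{C}\rangle=\langle(h|_{C})(\delta^{o}(a)_{D}),\lambda\rangle=\langle\delta^{o}(h(a))_{h[D]},\lambda\rangle$ resting on Corollary~\ref{cor}, and your element chase around the two composites of the external bundle maps reduces commutativity to exactly that identity. The bookkeeping you flag (fibre of $\hat{h}^{\ast}\SdA$ over $C$ being $\Sigma_{h[C]}$, and $\underline{\mathbb{R}}_{l}(\underline{h})$ acting by precomposition with $\hat{h}|_{\downarrow C}$) is handled correctly.
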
 

Next, we look at the action of the automorphism on the elementary propositions. The map $\uS(\underline{h}):H^{\ast}\uS\to\uS$ is continuous, providing us with an inverse image map $\uS(\underline{h})^{-1}:\mathcal{O}\uS\to\mathcal{O}H^{\ast}\uS$. Let $\underline{[a\in\Delta]}$ be the elementary proposition obtained by outer daseinisation of the spectral projection $\chi_{\Delta}(a)$. To describe the open $\uS(\underline{h})^{-1}(\underline{[a\in\Delta]})$ of $\mathcal{O}H^{\ast}\uS$ it is convenient to take the external descriptions. 

For a $\lambda\in\Sigma_{h[C]}$, by definition $(C,\lambda)\in\Sigma(\underline{h})^{-1}([a\in\Delta])$ iff $\lambda\circ h|_{C}\in [a\in\Delta]_{C}$. This happens iff
\begin{equation*}
1=\langle\delta^{o}(\chi_{\Delta}(a))_{C},\lambda\circ h|_{C}\rangle=\langle h(\delta^{o}(\chi_{\Delta}(a))_{C}),\lambda\rangle.
\end{equation*}
This can be simplified further using Lemma~\ref{lem: specproj}
\begin{equation*}
h(\delta^{o}(\chi_{\Delta}(a))_{C})=\delta^{o}(h(\chi_{\Delta}(a)))_{h[C]}=\delta^{o}(\chi_{\Delta}(h(a)))_{h[C]}.
\end{equation*}
We conclude
\begin{equation*}
\Sigma(\underline{h})^{-1}([a\in\Delta])=\{(C,\lambda)\in\hat{h}^{\ast}\Sigma_{\downarrow}\mid\langle\delta^{o}(\chi_{\Delta}(h(a)))_{h[C]},\lambda\rangle=1\}.
\end{equation*}

Note that if $\underline{[a\in\Delta]}=\underline{\delta(a)}^{-1}(\underline{\Delta})$, and 
\begin{equation*}
\underline{\delta(a)}_{h}=\langle\underline{\delta^{i}(a)}_{h},\underline{\delta^{o}(a)}_{h}\rangle:H^{\ast}\uS\to\underline{\mathbb{R}}_{u}\times\underline{\mathbb{R}}_{l},
\end{equation*}
then
\begin{equation*}
\underline{\Sigma}(\underline{h})^{-1}(\underline{[a\in\Delta]})=\underline{\delta(a)}_{h}^{-1}(\underline{\Delta}).
\end{equation*}

We would like to combine this open with a state, seen as a probability valuation, in order to obtain a truth value. There is one problem however, as the open lies in $\mathcal{O}\hat{h}^{\ast}\Sigma_{\downarrow}$ and not in $\mathcal{O}\Sigma_{\downarrow}$. Recall that a state $\psi$ defines a probability valuation by combining the probability valuations $\mu^{C}:\mathcal{O}\Sigma_{C}\to[0,1]$, corresponding to the local states $\psi|_{C}:C\to\mathbb{C}$. Externally viewed, the valuation $\underline{\mu}$ is given by
\begin{equation*}
\mu:\mathcal{O}\Sigma_{\downarrow}\to\text{OR}(\mathcal{C},[0,1])\ \ \ \mu(U)(C)=\mu^{C}(U_{C}).
\end{equation*}
In very much the same way, a state $\psi$ defines the function
\begin{equation*}
\mu_{h}:\mathcal{O}\hat{h}^{\ast}\Sigma_{\downarrow}\to\text{OR}(\mathcal{C},[0,1])\ \ \ \mu_{h}(U)(C)=\mu^{h[C]}(U_{C}).
\end{equation*}
Note that for $U\in\mathcal{O}\hat{h}^{\ast}\Sigma_{\downarrow}$, $U_{C}\in\mathcal{O}\Sigma_{h[C]}$. The reader is invited to check that $\mu_{h}$ satisfies all conditions required to turn the corresponding internal function $\underline{\mu}_{h} :\mathcal{O}H^{\ast}\uS\to\underline{[0,1]}_{l}$ into a probability valuation.

Using $\underline{\mu}_{h}$ we can once again obtain truth values. Since daseinisation and automorphisms interact well, we obtain the following result.

\begin{tut}
The following two forcing conditions are equivalent:
\begin{enumerate}
\item $C\Vdash\underline{\mu}_{h}(\underline{\Sigma}(\underline{h})^{-1}(\underline{[a\in\Delta]}))=\underline{1},$
\item $h[C]\Vdash\underline{\mu}(\underline{[h(a)\in\Delta]})=\underline{1}.$
\end{enumerate}
\end{tut}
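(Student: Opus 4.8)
The plan is to collapse both forcing conditions to one and the same algebraic statement, namely $\psi(\delta^{o}(\chi_{\Delta}(h(a)))_{h[C]})=1$, where $\psi$ is the state inducing the valuation $\underline{\mu}$ (and hence $\underline{\mu}_{h}$). Everything here is bookkeeping with presheaf semantics plus the external descriptions already in hand; the genuine interaction of daseinisation with the automorphism $h$ is already packaged into the external description of $\underline{\Sigma}(\underline{h})^{-1}(\underline{[a\in\Delta]})$ computed just before the theorem (via Lemma~\ref{lem: specproj} and Corollary~\ref{cor}).

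First I would unwind condition (2). Since $\underline{[h(a)\in\Delta]}$ is, by~(\ref{equ: propcon}), the clopen subobject of $\uS_{A}$ whose component at a context $D$ is the clopen of $\Sigma_{D}$ corresponding to $\delta^{o}(\chi_{\Delta}(h(a)))_{D}$, equation~(\ref{equ: dadadabum}) (equivalently~(\ref{equ: 2bveen2b})) gives that $h[C]\Vdash\underline{\mu}(\underline{[h(a)\in\Delta]})=\underline{1}$ holds if and only if $\psi(\delta^{o}(\chi_{\Delta}(h(a)))_{h[C]})=1$.

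Next I would unwind condition (1). Externally the probability valuation $\underline{\mu}_{h}:\mathcal{O}H^{\ast}\uS\to\underline{[0,1]}_{l}$ is the assignment $\mu_{h}(U)(D)=\mu^{h[D]}(U_{D})$ (with $U_{D}\in\mathcal{O}\Sigma_{h[D]}$), and since $D\mapsto\mu^{h[D]}(U_{D})$ is order-reversing with values in $[0,1]$, the element $\underline{\mu}_{h}(U)\in\underline{[0,1]}_{l}(C)$ is the constant function $\underline{1}_{C}$ exactly when $\mu^{h[C]}(U_{C})=1$; thus $C\Vdash(\underline{\mu}_{h}(U)=\underline{1})$ iff $\mu^{h[C]}(U_{C})=1$, in complete parallel with the derivation of~(\ref{equ: dadadabum}). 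Applying this to $U=\underline{\Sigma}(\underline{h})^{-1}(\underline{[a\in\Delta]})$ and using the external description
\begin{equation*}
\Sigma(\underline{h})^{-1}([a\in\Delta])=\{(C,\lambda)\in\hat{h}^{\ast}\Sigma_{\downarrow}\mid\langle\delta^{o}(\chi_{\Delta}(h(a)))_{h[C]},\lambda\rangle=1\},
\end{equation*}
the fibre $\bigl(\Sigma(\underline{h})^{-1}([a\in\Delta])\bigr)_{C}$ is the set of characters of $h[C]$ that send the projection $\delta^{o}(\chi_{\Delta}(h(a)))_{h[C]}$ to $1$, i.e. the clopen $S^{h[C]}_{\delta^{o}(\chi_{\Delta}(h(a)))_{h[C]}}\subseteq\Sigma_{h[C]}$. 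By the Riesz--Markov identification used in Lemma~\ref{lem: states}, $\mu^{h[C]}$ of this clopen equals $\psi|_{h[C]}\bigl(\delta^{o}(\chi_{\Delta}(h(a)))_{h[C]}\bigr)=\psi\bigl(\delta^{o}(\chi_{\Delta}(h(a)))_{h[C]}\bigr)$. Hence condition (1) is likewise equivalent to $\psi(\delta^{o}(\chi_{\Delta}(h(a)))_{h[C]})=1$, and the two conditions agree.

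I expect no serious obstacle: the substantive content is already in Lemma~\ref{lem: specproj} and Corollary~\ref{cor}. The only points needing care are (a) matching the forcing semantics of $\underline{\mu}_{h}(U)=\underline{1}$ at stage $C$ with the purely local equality $\mu^{h[C]}(U_{C})=1$ (using the order-reversing property to dispose of the smaller contexts $D\subseteq C$), and (b) confirming that the fibre of $\Sigma(\underline{h})^{-1}([a\in\Delta])$ over $C$ is the clopen attached to the projection $\delta^{o}(\chi_{\Delta}(h(a)))_{h[C]}$ living in $h[C]$, and not to some daseinised projection over a smaller context; both are immediate from the external descriptions established above.
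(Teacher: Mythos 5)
Your proposal is correct and follows essentially the same route as the paper's own proof: both unwind the two forcing conditions, via the external descriptions of $\underline{\mu}_{h}$ and of $\Sigma(\underline{h})^{-1}([a\in\Delta])$ together with the Riesz--Markov identification, to the single algebraic identity $\psi(\delta^{o}(\chi_{\Delta}(h(a)))_{h[C]})=1$. Your version merely spells out in more detail the points (the order-reversing property and the identification of the fibre over $C$) that the paper's terser proof leaves implicit.
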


\begin{proof}
Spelling out the first condition gives
\begin{equation*}
\mu^{h[C]}(\{\lambda\in\Sigma_{h[C]}\mid\langle\delta^{o}(\chi_{\Delta}(h(a)))_{h[C]},\lambda\rangle=1\})=1
\end{equation*}
or, equivalently
\begin{equation*}
\psi(\delta^{o}(\chi_{\Delta}(h(a)))_{h[C]})=1,
\end{equation*}
which is just the second forcing relation of the proposition.
\end{proof}

As daseinisation of self-adjoint operators commutes with $\ast$-automorphisms, the elementary propositions of both the covariant and contravariant approach transform in a simple way under the action of $\uS(\underline{h})^{-1}$. As a consequence, the theorem given above states that if $S$ is the cosieve or sieve of the proposition $\uS(\underline{h})^{-1}(\underline{[a\in\Delta]})$ relative to some state $\psi$, then $h[S]$ is the cosieve or sieve of $\underline{[h(a)\in\Delta]}$ relative to that same state $\psi$.

\section{Acknowledgements}

The author would like to thank Klaas Landsman, Filip Bar, Matthijs V\'ak\'ar, Jeremy Butterfield, Hans Halvorson and Ronnie Hermens for sharing their thoughts on some of the subjects discussed above.

\end{document}